\titleformat*{\section}{\fontsize{16}{19}\bfseries\selectfont}
\titleformat*{\subsection}{\fontsize{13}{17}\bfseries\selectfont}
\renewcommand{\baselinestretch}{1.2}
\newcommand{\linebr}{\\[0.2em]}
\DeclareMathAlphabet{\mathpzc}{OT1}{pzc}{m}{it}
\definecolor{myblue1}{RGB}{0, 0, 139}
\newtheorem{theorem}{Theorem}[section]
\newtheorem{lemma}[theorem]{Lemma}
\newtheorem{proposition}[theorem]{Proposition}
\newtheorem*{proposition*}{Proposition}
\newtheorem{definition}{Definition}[section]
\newtheorem{corollary}[theorem]{Corollary}
\newtheorem*{assumption(A)}{Condition (A)}
\theoremstyle{remark}
\newtheorem{remark}{{\bf Remark}}[section]
\numberwithin{equation}{section}
\newcommand{\Lie}{\mathcal L}
\newcommand{\VV}{\mathscr V}
\newcommand{\HH}{\mathcal H}
\newcommand{\cL}{\mathscr{L}}
\newcommand{\hcL}{\hat{\mathscr{L}}}
\newcommand{\sL}{{\sf L}}
\newcommand{\hg}{\hat{g}}
\newcommand{\thg}{\thickhat{g}}
\newcommand{\SU}{\mathrm{SU}}
\newcommand{\su}{\mathfrak{su}}
\newcommand{\spin}{\mathfrak{spin}}
\newcommand{\Spin}{\mathrm{Spin}}
\newcommand{\Cl}{\mathrm{Cl}}
\newcommand{\CC}{\mathbb{C}}
\newcommand{\dd}{{\mathrm d}}
\newcommand{\vol}{\mathrm{vol}}
\newcommand{\diag}{\mathrm{diag}}
\newcommand{\divergence}{\mathrm{div}}
\newcommand{\grad}{\mathrm{grad}}
\newcommand{\sD}{\slashed{D}}
\newcommand{\parity}{\mathcal{P}}
\newcommand{\beq}{\begin{equation}}
\newcommand{\eeq}{\end{equation}}
\newcommand{\bal}{\begin{align}}
\newcommand{\eal}{\end{align}}
\newcommand{\bmatr}{\begin{bmatrix}}
\newcommand{\ematr}{\end{bmatrix}}
\newcommand*\conj[1]{\overbar{#1}}
\newcommand{\overbar}[1]{\mkern 1.5mu\overline{\mkern-1.5mu#1\mkern-1.5mu}\mkern 1.5mu}
\newcommand{\blangle}{\boldsymbol{\langle}}
\newcommand{\brangle}{{\boldsymbol{\rangle}}}
\newcommand{\thickhat}[1]{\mathbf{\hat{\text{$#1$}}}}
 \newtcolorbox{empheqboxed}{ 
 opacityback=0,
 enhanced jigsaw,
 width=\textwidth,
 boxrule=.5pt,
 sharpish corners,
 left=0pt,
 right=2pt,
 top=-9pt, % default value 2mm
 bottom=3pt
}
\newcommand{\RR}{{\text{\scalebox{.9}{$R$}}}}
\newcommand{\sparity}{{\text{\scalebox{.9}{${\mathcal{P}}$}}}}
\newcommand{\PPP}{{\text{\scalebox{.8}{$P$}}}}
\newcommand{\MMM}{{\text{\scalebox{.8}{$M$}}}}
\newcommand{\KKK}{{\text{\scalebox{.8}{$K$}}}}
\DeclareSymbolFont{TOneChars}{T1}{\familydefault}{m}{it}
\DeclareMathSymbol{\mathdh}{\mathord}{TOneChars}{"F0}
 \def\times{}%
  \def\slashed#1{<#1>}
\def\blfootnote{\xdef\@thefnmark{}\@footnotetext}
\begin{document}

\begin{titlepage}

\title{{\bf The geometry of CP violation \\ in Kaluza-Klein models\vspace{.5cm}}}

\author{Jo\~ao Baptista} 
\date{\vspace{-.3cm}January 2026}

\maketitle

\thispagestyle{empty}
\vspace{1cm}
\vskip 10pt
{\centerline{{\large \bf{Abstract}}}}
\noindent
We investigate the free, massless Dirac equation $\sD \Psi = 0$ on a higher-dimensional manifold  $M_4 \times K$ equipped with a submersion metric. These background metrics generalize the Kaluza ansatz. They encode 4D massive gauge fields and Higgs-like scalars, alongside the usual 4D metric and massless gauge fields. 
We show that the dimensional reduction of the Dirac equation on these backgrounds naturally violates CP symmetry in four dimensions. 
This provides a new geometric path to constructing models with intrinsic CP violation. In this framework, massive gauge fields can break CP for three different reasons: {\it i)} a misalignment between the mass eigenspinors and the spinors in the representation basis;  {\it ii)} a new non-minimal term coupling 4D fermions to massive gauge fields; {\it iii)} the presence of a non-abelian Pauli term. All this derives from the higher-dimensional Dirac equation. Technically, the paper uses the language of spin geometry and Riemannian submersions. Along the way, it develops detailed geometric descriptions of several constructions. It finds that the gauge representations are always anomaly-free, discusses fermion generations, and introduces a new Lie derivative of spinors along non-Killing vector fields induced by actions of compact groups.

\vspace{-1.0cm}
\let\thefootnote\relax\footnote{
\noindent
{\small {\sl \bf  Keywords:} Kaluza-Klein theories; CP violation; Riemannian submersions; Dirac operator; weak force; parity transformations; charge conjugation.}
}

\end{titlepage}

\pagenumbering{roman}

\renewcommand{\baselinestretch}{1.1}\normalsize
\tableofcontents
\renewcommand{\baselinestretch}{1.19}\normalsize

\newpage

\pagenumbering{arabic}

\section{Introduction and overview of results}
\label{Introduction}

\subsection{Introduction}

The discovery of CP violation was the observation that, in nature, left-handed particles do not interact with the weak force in exactly the same way as right-handed antiparticles do. The experimental tests of CP violation rely on measuring this asymmetry in a variety of physical processes \cite{CCFT, Fanti, Alavi, Navas}.
Although modest in magnitude, the observed asymmetry can be regarded as natural. Particles and antiparticles respond differently to the strong and electromagnetic forces. Why would they respond identically to the weak force? 

The only reason the observed asymmetry seems unexpected is a technical one. In the Standard Model (SM), one usually goes from particles to antiparticles by complex conjugating gauge representations. Conjugating the electromagnetic $\mathfrak{u} (1)$ representation of a particle, for instance, or its colour $\su (3)$ representation, produces distinct representations associated with the antiparticle. However, in the same SM framework, the weak gauge bosons act on fields either trivially or through the fundamental $\su (2)$ representation. Notably, this representation is equivalent to its complex conjugate. So the theoretical structure of the SM would suggest, at first glance, identical behaviours of particles and antiparticles when interacting with the weak force.

Nature insists on the asymmetry, though, despite the equivalence of $\su (2)$ representations. So physicists had to devise an entirely new mechanism to embed the particle-antiparticle asymmetry into the SM equations. Conjugating fermion representations does not change anything for the weak force. The skilful solution was to introduce ad hoc complex phases in the CKM and PMNS matrices \cite{KM, BLS, BS}. This is a minimal adjustment that works effectively. Those phases are not invariant under complex conjugation. So they produce the desired inequivalence between the equations for particles and antiparticles.

A sense of theoretical unease may linger, though. What is the origin of the new phases? What kind of mathematical structures can produce them and govern their specific values? Are there natural frameworks in which the equations of motion of particles and antiparticles, as derived from first principles, differ by something beyond simple conjugation of gauge representations? Such equations would break CP.

The purpose of this paper is to show that the Kaluza-Klein (KK) framework has that formal property. If we take the massless Dirac equation $\sD \Psi = 0$  on $P = M_4\times K$ with appropriate background metrics, then, after dimensional reduction, we obtain massive Dirac equations on $M_4$ that in general violate CP. In other words, the 4D equations of motion for particles and antiparticles are not equivalent after a parity transformation and complex-conjugating all the gauge representations.

The appropriate backgrounds in question are higher-dimensional submersion metrics. They are natural generalizations of the Kaluza ansatz. They encode, besides the 4D metric and massless gauge fields, also massive 4D gauge fields and Higgs-like scalars. This paper shows that, when the submersion metric on $M_4 \times K$ encodes non-zero massive gauge fields, then the dimensionally reduced Dirac equation contains three terms that can break CP symmetry in 4D. 

Early works studying the discrete symmetries of the Dirac equation in pure higher-dimensional gravity are \cite{Thirring, GN, Wetterich84}, for example. They establish most of the properties of those spinor symmetries for different numbers of extra dimensions. Following the traditional form of the Kaluza ansatz, they focus on higher-dimensional metrics encoding only abelian or massless 4D gauge fields. This hides the rich spinorial structure that emerges when the background also encodes 4D massive gauge fields. That structure produces the CP violations discussed here. More recent studies of CP violation induced by extra dimensions are \cite{CM, BGR, HLWY, GW, LMN, FLM}, for example. They investigate models with orbifold compactifications and higher-dimensional gauge fields, with a phenomenological and brane world bend. More references in the review \cite{IN}. They differ significantly from the pure gravity KK approach of the present paper.

Kaluza-Klein models with massive gauge fields based on general submersion metrics are investigated in \cite{Bap1, Bap2, Bap3}. The first paper establishes that submersions with fibres that are not totally geodesic correspond to gauge theories with massive 4D gauge fields and Higgs-like scalars. It computes the mass of gauge fields from the second fundamental form of the fibres of the submersion. The paper \cite{Bap2} studies geodesic motion on these higher-dimensional backgrounds, and how it is physically perceived after projection to $M_4$. The paper \cite{Bap3} studies the free Dirac equation $\sD \Psi = 0$ and its KK reduction to 4D, like the present one. It establishes that, on the same submersion backgrounds, the dimensionally reduced Weyl equation in general exhibits chiral couplings of 4D fermions and massive gauge fields. 
This is a new way to circumvent well-known no-go arguments against the existence of chiral fermions in KK models \cite{CS, Wetterich83a, Wetterich83, Witten83}.

Extended reviews of the Kaluza-Klein framework can be found in \cite{Bailin, Bou, CFD, CJBook, Duff, WessonOverduin, Witten81}. Some of the early original references are \cite{Kaluza, Klein, EB, Jordan, Thiry, DeWitt, Kerner, Cho}. This paper follows the treatment given in \cite{Bap1, Bap2, Bap3} of massive gauge fields, Higgs-like scalars and spinors. Clear limitations of the present analysis are that it is classical, all throughout, and that it investigates mainly the mathematical properties of the theories. So no phenomenology at this stage. It also does not discuss the dynamical origin of the vacuum metric on $K$. It assumes the existence and stability of some $g_K$. Then it relates the properties of 4D fermions and gauge fields to the properties of that metric. We now give an overview of the main results in the paper.

\subsection{Overview of the main results}

\subsubsection*{\bf Spinors on Riemannian submersions}

Section \ref{SpinorsSubmersions} recalls the main properties of spinors on a manifold $P= M \times K$ equipped with a submersion metric $g_P$. The classical results about Riemannian submersions were developed by O'Neill in \cite{ONeill1}, after foundational work in \cite{Ehresmann, Hermann}. They are presented in \cite{Besse, FIP}, for example. We use the translation of those results to the Kaluza-Klein language provided in \cite{Bap1}. In that language, submersion metrics on $P$ are equivalent to triples $(g_M, A, g_K)$ encoding the metric on $M$, 4D gauge fields (massless and massive), and an internal metric $g_K$ that can change along $M$. 
The higher-dimensional Einstein-Hilbert action can be decomposed as 
 \beq \label{GaugedSigmaModelAction0}
\int_P \, R_{g_\PPP}  \vol_{g_\PPP}   \ = \ \int_P \, \Big[\, R_{g_\MMM}  \, + \, R_{g_\KKK} \, - \, \frac{1}{4}\, |F_A|^2 \, - \,  \frac{1}{4}\,  |\dd^A  g_K|^2  \, + \,  |\dd^A \, (\vol_{g_\KKK})|^2 \, \Big] \,\vol_{g_\PPP} \, . 
\eeq
 This extends the usual Kaluza-Klein result to the setting of general Riemannian submersions. It suggests that the internal metric $g_K$ can be understood as a geometric version of the traditional Higgs fields. More details can be found in \cite{Bap1, Besse}. 

For a general submersion metric on $P$, the gauge one-form $A_\mu$ on $M$ has values in the full space of vector fields on $K$, which is the Lie algebra of the diffeomorphism group ${\rm Diff} (K)$. So it can be expanded as $A^a_\mu\, e_a$, where $\{ e_a \}$ is a set of independent vector fields on $K$, some of them Killing and most others non-Killing. The gauge group is ${\rm Diff} (K)$ or a subgroup. It need not act on $K$ only through isometries of $g_K$.
 The classical mass of the 4D gauge field linked to a divergence-free $e_a$ is calculated in \cite{Bap1} to be
\beq   \label{MassFormula}
\left(\text{Mass} \ A_\mu^a \right)^2 \ \ \propto \ \ \frac{ \int_K  \; \left\langle \Lie_{e_a}\, g_K,  \; \Lie_{e_a}\, g_K \right\rangle  \, \vol_{g_\KKK} }{ 2 \int_K  \,  g_K (e_a ,  e_a  ) \ \vol_{g_\KKK} }  \ ,
\eeq
where $\Lie_{e_a} g_K$ denotes the Lie derivative of the internal vacuum metric along $e_a$. This suggests that massive gauge fields, however light, should not be linked to exact isometries of $g_K$. The derivatives $\Lie_{e_a} g_K$ can be small yet non-zero.

Spinors on general Riemannian submersions were studied in \cite{Moroianu, LS, Reynolds, Bap3}. Section \ref{SpinorsSubmersions} summarizes the results that are most relevant for us in the language of \cite{Bap3}, suitable for explicit Kaluza-Klein physics. Assuming that $M$ is even-dimensional, it describes the decomposition of higher-dimensional spinors as tensor products of horizontal and vertical spinors. Using the equivalence $g_P \simeq (g_M, A, g_K)$, it presents the explicit formula for the Dirac operator on $P$ written in terms of the components of the triple.

 \subsubsection*{\bf Reflections and parity transformations on $M \times K$}
  
 Reflections and parity inversions on $M$ have natural extensions to diffeomorphisms of $P = M \times K$ that do not change the internal coordinates. The extended diffeomorphisms are denoted $R$ and $\parity$, respectively. They are not isometries of the submersion metric on $P$, in general. The transformation rules of $g_P$ under pullback by those diffeomorphisms, for example $g_P \rightarrow \parity^\ast g_P$, encapsulate the usual transformation rules of 4D gauge fields $A^a_\mu$ and Higgs-like fields under reflections and parity inversions.
  
 The diffeomorphisms of $P$ induced by reflections and parity inversions also determine transformations of the higher-dimensional spinors. However, the construction of spinor bundles depends on the background metric, and $g_P$ is not preserved by the diffeomorphisms. Thus, reflections and parity inversions relate spinors defined on different bundles over $P$. For example, the spinor bundles $S_{g_\PPP}$ and $S_{\parity^\ast g_\PPP}$. This is a slight variation from the usual story on Minkowski space, which deals with isometries of $g_M$ and automorphisms of a single bundle \cite{Hamilton, Stone, Srednicki}. Despite this difference, the KK extensions of reflections and parity inversions still induce symmetries of the higher-dimensional Dirac equation, $\sD^P \Psi = 0$. This is the main point for us. Section \ref{ReflectionsParity} and appendix \ref{ProofsReflectionsParity} spell out the main properties of the spinor transformations on $M \times K$ induced by those symmetries. The tone is more geometric than in earlier work \cite{GN, Wetterich84}. We extend those results to our KK models, with general submersion metrics and spinor transformations between different bundles.

\subsubsection*{\bf Conjugations of spinors $j_\pm$}
 
 Section \ref{ConjugationSymmetries} and appendix \ref{ConventionsSpinors} give succinct accounts of the conjugation automorphisms $j_\pm$ of spinor bundles on manifolds of arbitrary signature $(s,t)$. They describe the essential properties of these symmetries of the massless Dirac equation.  
 Section \ref{ConjugationSymmetries} uses a geometric approach and hides the basis-dependent constructions, with gamma matrices in special representations, that are generally used to prove the existence of conjugation maps. The bundle-version of the maps $j_\pm$ includes their relations with differential operators such as covariant derivatives, the Dirac operator and, importantly for us, the Kosmann-Lichnerowicz derivative of spinors. The spinor conjugation maps in arbitrary signatures are a well-studied topic in the literature. See for instance \cite{Figueroa, Stone, Hitoshi}. The purpose of section \ref{ConjugationSymmetries} is to provide a short and hopefully clean summary that covers the general case of signature $(s,t)$ and the bundle-related properties. At the end, it also presents results that help to understand conjugation maps in the specific setting of Riemannian submersions.

 \subsubsection*{\bf Actions of compact groups on $K$ and its spinors}
 
 An isometric action of a connected group on a compact spin manifold $K$ has a canonical lift to an action on the spinors in $S_{g_\KKK}$, after passing to a covering group if necessary. In its infinitesimal version, this lift defines a Lie derivative of spinors along Killing vector fields \cite{Lich}. The standard extension of this derivative to arbitrary vector fields on $K$ is the Kosmann-Lichnerowicz derivative \cite{Kosmann}, defined by
 \beq
 \label{KLDerivative}
 \cL_V \psi \ := \ \nabla_V \psi \; - \; \frac{1}{8} \, \sum_{j,\, k}\, \big[ \, g(\nabla_{v_j} V, v_k) \: -  \: g(\nabla_{v_k} V, v_j)  \, \big] \, v_j \cdot v_k \cdot \psi  \ .
 \eeq
 This extended derivative couples the 4D gauge fields to internal spinors in general Kaluza-Klein models, as described in \cite{Bap3}. A drawback of this derivative, however, is that it satisfies the closure relation ($\cL_U \cL_V - \cL_V \cL_U) \psi = \cL_{[U, V]}\psi$ only when $U$ or $V$ are conformal Killing for $g_K$. Otherwise, it is not a proper Lie derivative.
 
 In section \ref{ActionsCompactGroups} we show that, when a connected compact group $G$ acts on $K$, there is a useful alternative derivative of spinors given by
 \beq
 \sL_V \psi  \ = \ \cL_V \psi \; - \; \tau_{V} \psi \ = \ \cL_V \psi \; + \;   \frac{1}{4} \; \sum_{j\neq k} \; g \big( \alpha^{-1} (\Lie_V \alpha) (v_j),\, v_k \big) \; v_j \cdot v_k \cdot \psi \ .
 \eeq
 It coincides with the Kosmann-Lichnerowicz derivative when $V$ is conformal Killing. It satisfies ($\sL_U \sL_V - \sL_V \sL_U) \psi = \sL_{[U, V]}\psi$ for all fundamental vector fields on $K$ induced by the $G$-action, even when $U$ and $V$ are not conformal Killing with respect to $g_K$. So we obtain a new lift of non-isometric actions to spinors, at least when $K$ is compact.
 
 The new derivative is defined through a canonical map $\alpha: S_{g_\KKK} \rightarrow S_{\hg_\KKK}$ between two spinor bundles on $K$, corresponding to the metric $g_K$ and its $G$-averaged metric $\hg_K$. The relations between spinors associated with different Riemannian metrics on the same manifold have been previously studied in \cite{BG, Wang, AHermann}, for example. The parts of those works that we need here are summarized in appendix \ref{SpinorsDifferentMetrics}. That appendix also presents several new formulas that are necessary for our purposes, mostly related to the transport of the Kosmann-Lichnerowicz derivative through $\alpha$.

\subsubsection*{\bf Spinor representation spaces, gauge anomalies and fermion generations}

Let $\mathfrak{g}$ denote the Lie algebra of vector fields on $K$ induced by the action of a compact group $G$. The new spinor derivative $\sL_V$ allows us to define a unitary representation of $\mathfrak{g}$ on the space of spinors in $S_{g_\KKK}$ equipped with its natural $L^2$ Hermitian product:
\beq
\rho_V (\psi) \ := \ \sL_V \psi \; + \; \frac{1}{2} \, \divergence_g(V) \, \psi \ .
\eeq
In section \ref{RepresentationSpaces}, we describe two decompositions of the space of spinors on $K$ as a sum of finite-dimensional subspaces that are irreducible under $\rho$.
When the $G$-action is not isometric, the operators $\rho_V$ do not commute with the Dirac operator on $S_{g_\KKK}$. So the irreducible $\mathfrak{g}$-spaces are non-trivially related to the  $\sD$-eigenspaces. 

Following \cite{Bap3}, we remark that, for non-isometric $G$-actions, our KK models can have chiral fermions. In this case, the chiral interactions generated by the representation $\rho$ are free of local gauge anomalies. This happens because $\rho$ commutes with the internal spinor conjugation maps $j_\pm$. So it is a self-conjugate representation. The irreducible $\mathfrak{g}$-spaces of complex type will always appear in conjugate pairs.

Section \ref{RepresentationSpaces} also discusses geometric mechanisms that can potentially produce distinct generations of 4D fermions in KK models. For example, a perturbation of the internal metric $g_K$ that breaks the isometry group $G \rightarrow G'$ leads to a splitting of the degenerate eigenvalues of $\sD_{g_\KKK}$ and respective eigenspaces, $E_\mu \rightarrow E_{\mu_1} \oplus \cdots \oplus E_{\mu_r}$. The $\mathfrak{g}$-representation on $E_\mu$ branches to $\mathfrak{g}'$-representations on the summands. If different summands contain the same $\mathfrak{g}'$-irreducible, this will correspond, in the KK model, to 4D fermions in the same $\mathfrak{g}'$-representation but with slightly different masses.

\subsubsection*{\bf The CP-transformed Dirac equations in 4D}

In section \ref{DimensionalReductionCPViolation} we simplify the treatment and assume that the metric $g_K$ on the internal space is constant along $M$. Then any higher-dimensional spinor can be written as $\Psi (x,y)  =  \sum_\alpha \, \varphi_\alpha^\HH(x) \otimes \psi^\alpha(y)$, where $x$ and $y$ are coordinates on $M$ and $K$, respectively.  The $\varphi_\alpha$ are spinors on $M$ and the set $\{ \psi^\alpha \}$ is a $L^2$-orthonormal basis of the space of internal spinors on $K$. For a higher-dimensional spinor of this form, the Dirac equation $\sD^P \Psi = 0$ is equivalent to an infinite set of equations on $M$:
\begin{multline}
\label{DemonstrativeDiracEquation}
 i\, \gamma^\mu \,  \big\{  \nabla^{M}_{X_\mu} \,  \varphi_\alpha \, + \,  A^a_\mu \,  \,  \blangle \, \psi_\alpha  \, , ( \rho_{e_a} +  \tau_{e_a} ) \psi^\beta \, \brangle_{L^2}\,  \, \varphi_\beta \big\}  \,  
+ \,  \blangle \, \psi_\alpha  \, , \,  \sD^K  \psi^\beta \, \brangle_{L^2}\, \, \varphi_\beta \; \linebr +\; \frac{1}{8}\, \,  (F_A^a)_{\mu \nu } \, \blangle \psi_\alpha \, , e_a \cdot \psi^\beta \brangle_{L^2}\, \,  \gamma^\mu  \gamma^\nu \, \varphi_\beta  \ = \ 0 \ .
\end{multline}
A basis of internal spinors $\{ \psi^\alpha \}$ formed with eigenspinors of the internal Dirac operator $\sD^K$, with respective eigenvalues $m_\alpha$, is called a mass basis. In such bases, equation \eqref{DemonstrativeDiracEquation} reduces to a standard 4D gauged Dirac equation with a mass term and a Pauli term.

Conjugations of spinors and parity inversions are exact symmetries of the massless Dirac equation on $M \times K$. So are the compositions $j_\pm \parity$. This is described in sections \ref{ReflectionsParity} and \ref{ConjugationSymmetries}. If a higher-dimensional spinor satisfies $\sD \Psi = 0$, we always have $\sD (j_\pm \parity \, \Psi) = 0$ as well.  In section \ref{DimensionalReductionCPViolation} we determine how the second equation looks after dimensional reduction to $M$. In the case of the conjugation $j_-$, the result is the set of equations:
\begin{multline}
\label{DemonstrativeCPDiracEquation}
 i \, \gamma^\mu \,  \big\{  \nabla^{M}_{X_\mu} \,  \varphi^{cp}_\alpha \, + \,  (A^p)^a_\mu \,  \, \conj{ \blangle \, \psi_\alpha  \, , ( \rho_{e_a} +  \tau_{e_a} ) \psi^\beta \, \brangle}_{L^2} \,  \,  \varphi^{cp}_\beta  \big\}  \,  
+ \,  \conj{\blangle \, \psi_\alpha  \, , \,  \sD^K  \psi^\beta \, \brangle}_{L^2}\, \,  \varphi^{cp}_\beta  \; \linebr +\; \frac{1}{8}\, \,  (F_{A^p}^a)_{\mu \nu } \, \conj{\blangle \psi_\alpha \, , e_a \cdot \psi^\beta \brangle}_{L^2}\, \,  \gamma^\mu  \gamma^\nu \,  \varphi^{cp}_\beta  \ = \ 0
\end{multline}
where $\varphi^{cp}_\alpha \ =  \ j_-^M \parity_M \, (\varphi_\alpha)$ are the CP-transformed spinors in 4D. Thus, the spinors $\varphi^{cp}_\alpha$ satisfy equations of motion very similar to those satisfied by the original $\varphi_\alpha$. The differences are that the gauge form is now the parity-transformed $A^p$, instead of $A$, and four types of matrices on the space of internal spinors ---  determined by the operators $\rho_{e_a}$, $\tau_{e_a}$, $\sD^K$, and $e_a \cdot$  --- appear complex-conjugated in the new equation.

\subsubsection*{\bf CP violation}

The higher-dimensional CP transformations are exact symmetries of the massless Dirac equation on $M \times K$, as stressed before. However, after dimensional reduction to 4D, they do not act simply by complex conjugating the gauge representations. If we redefine those representations by $\rho_{e_a} \rightarrow \conj{\rho_{e_a}}$, equations \eqref{DemonstrativeDiracEquation} and \eqref{DemonstrativeCPDiracEquation} still remain formally different. Thus, in the Kaluza-Klein setting, there is no reason to expect that left-handed particles interact with any of the physical 4D forces in exactly the same way that right-handed antiparticles do. If a force is described by a gauge representation equivalent to its complex conjugate, such as the fundamental $\su(2)$ representation, that fact alone is not enough to render \eqref{DemonstrativeDiracEquation} and \eqref{DemonstrativeCPDiracEquation} equivalent. Other terms in the equations remain different. 

In regions where massive gauge fields $A_\mu^a$ are turned on, the respective transformations $\rho_{e_a}$ do not commute with the internal Dirac operator $\sD^K$. They do not preserve the eigenspaces of $\sD^K\!$ and will mix 4D fermions with different masses. There is an infinite-dimensional complex matrix relating the  $\sD^K\!$-eigenspinors with the bases of the irreducible subspaces of the gauge representation $\rho$. There seems to be no a priori reason to expect the existence of a clever choice of representation basis $\{ \psi^\alpha \}$ that renders the CKM-like mass matrix $\blangle \, \psi_\alpha  \, , \,  \sD^K  \psi^\beta \, \brangle_{L^2}$, the Pauli term matrix $\blangle \psi_\alpha \, , \, e_a \cdot \psi^\beta \brangle_{L^2}$, and the new non-minimal gauge coupling matrix $\blangle \, \psi_\alpha  \, , \, \tau_{e_a} \cdot \psi_\beta \, \brangle_{L^2}$, all simultaneously real on general grounds. 

It would be interesting to study these matters with explicit examples, based on different compact spin manifolds $K$. One interesting example is suggested in appendix \ref{Example}. It takes $K = \SU(3)$ equipped with a left invariant metric having isometry group $G_{\rm SM}$. However, working out such examples explicitly is no easy task in spin geometry.

\section{Spinors on Riemannian submersions}
\label{SpinorsSubmersions}

\subsection{Riemannian submersions}
\label{Submersions}

This section recalls relevant properties of spinors on manifolds equipped with submersion metrics. These metrics generalize the Kaluza ansatz by encoding not only the 4D metric and massless gauge fields, but also 4D massive gauge fields and an internal metric that can vary along $M$. The main classical results about Riemannian submersions were developed in \cite{ONeill1, Ehresmann, Hermann} and are presented in \cite{Besse, FIP}, for example. We use the translation of those results to the Kaluza-Klein language provided in \cite{Bap1}.

Let $g_P$ be a Lorentzian metric on the higher-dimensional space $P =  M \times K$ such that the projection $P \rightarrow M$ is a Riemannian submersion. As described in \cite{Bap1}, this is equivalent to taking a $g_P$ determined by three simpler objects: 
\begin{itemize}
\item[{\bf i)}]  a Lorentzian metric $g_M$ on the base $M$; 
\vspace{-.1cm}
\item[{\bf ii)}]  a family of Riemannian metrics $g_K(x)$ on the fibres $K_x$ parameterized by the points $ x \in M$;
\vspace{-.2cm}
\item[{\bf iii)}] a gauge one-form $A$ on $M$ with values in the Lie algebra of vector fields on $K$.
\end{itemize}
These objects determine the higher-dimensional metric through the relations
\bal \label{MetricDecomposition}
g_P (U, V) \ &= \ g_K (U, V) \nonumber \\
g_P (X, V) \ &= \  - \ g_K \left(A (X), V \right) \nonumber \\
g_P (X, Y) \ &= \ g_M (X, Y) \ + \  g_K \left(A(X) , A(Y) \right) \ ,
\end{align}
valid for all tangent vectors $X,Y \in TM$ and vertical vectors $U, V \in TK$. These relations generalize the usual Kaluza ansatz for $g_P$. Choosing a set $\{ e_a \}$ of independent vector fields on $K$, the one-form on $M$ can be decomposed as a sum 
\beq \label{GaugeFieldExpansion}
A(X) \ = \ \sum\nolimits_a \,A^a(X) \, e_a \ .
\eeq
The real-valued coefficients $A^a(X)$ are the traditional gauge fields on $M$. For general submersion metrics on $P$, this can be an infinite sum, with $\{ e_a \}$ being a basis of the full space of vector fields on $K$, which is the Lie algebra of the diffeomorphism group ${\rm Diff} (K)$. The gauge group need not act on $K$ only through isometries of $g_K$.

The curvature $F_A$ is a two-form on $M$ with values in the Lie algebra of vector fields on $K$. It can be defined by 
\beq
\label{CurvatureDefinition}
F_{A} (X, Y)  \ := \ (\dd_M A^a) (X, Y) \,\, e_a  \ + \ A^a (X)\, A^b (Y) \, [e_a, e_b]  \ ,
\eeq
where the last term is just the Lie bracket $ [A(X), A(Y)] $ of vector fields on $K$.

The tangent bundle of $P$ has two natural decompositions: 
\beq \label{HorizontalDistribution}
T P \; =\;  TM \oplus TK \; =\; \HH \oplus \VV   \ .
\eeq
Here $\VV := TK$ is the vertical subbundle while its orthogonal complement, $\HH := (TK)^\perp$, is the horizontal subbundle.
So a tangent vector $w\in TP$ can be decomposed in two different ways, $w= w_M + w_K = w^\HH  +  w^\VV$. The relation between them is 
\beq \label{DefinitionHorizontalDistribution}
w^\VV  \ = \   w_K \; - \;  A (w_M)    \qquad \qquad  w^\HH  \ = \  w_M \; + \;   A (w_M)   \ .
\eeq
The information contained in the gauge one-form $A$ on $M$ is equivalent to the information contained in the horizontal distribution $\HH \subset TP$. 

One can construct local, $g_P$-orthonormal trivializations of $TP$ using only horizontal and vertical vectors. They can take the form $\{ X_\mu^\HH, v_j \}$. Here the $v_j$ form an orthonormal basis of $TK$ with respect to $g_K (x)$, for each $x \in M$. The $X_\mu$ form a $g_M$-orthonormal basis of $TM$. The horizontal lift of $X_\mu$ to $P$ is denoted $X_\mu^\HH$. It is given by
\beq
\label{BasicLiftX}
X_\mu^\HH =  X_\mu \; + \; A^a(X_\mu) \; e_a  \ .
\eeq
Such horizontal lifts are called basic vector fields on $P$ \cite{Besse, FIP}.

\subsection{Spinors on $M_4 \times K$}
\label{SpinorsOnP}

\subsubsection{Horizontal and vertical spinors}
\label{HorizontalVerticalSpinors}

The conventions for spin geometry used in this paper are described in appendix \ref{ConventionsSpinors}. In this section, we recall the specific features pertaining to spinors on Riemannian submersions. These were investigated in \cite{Moroianu, LS, Reynolds} and later in \cite{Bap3}, for example. Here we use the notation of \cite{Bap3}, which is adapted to the study of Kaluza-Klein physics.

Locally, spinors on $M \times K$ have values on the higher-dimensional spinor space $\Delta_{m+k}$. This space can be written as the tensor product $\Delta_{m} \otimes \Delta_k$, where $m$ and $k$ are the dimensions of $M$ and $K$. All three spaces have irreducible representations of Clifford algebras. There is a standard isomorphism between $\Cl(m+k-1,1)$ and the $\mathbb{Z}_2$-graded tensor product of algebras $\Cl(m-1,1) \, \hat{\otimes}  \! \Cl(k)$. It is determined by the correspondence of generators
\begin{align}
\label{CliffordMultiplication}
1 \;&= \;1\otimes 1   \nonumber   \linebr
\Gamma_\mu \; &= \; \gamma_\mu \otimes 1 \quad \ \ {\rm for} \ \ \mu = 0, \ldots, m-1       \nonumber  \linebr
\Gamma_{m -1+j} \;  &= \; \Gamma_M \otimes \tilde{\gamma}_j \quad \ {\rm for} \ \ j = 1, ..., k \ .
\end{align}
This is a recipe to construct higher-dimensional gamma matrices $\Gamma_l$ from the lower-dimensional ones. The $\tilde{\gamma}_j$ are Euclidean gamma matrices acting on $\Delta_k$ and the $ \gamma_\mu$ are gamma matrices for the Minkowski metric on $M$. 
The complex chirality operators are 
\begin{align}
\label{ChiralOperators}
\Gamma_M \; &:= \; i^{\lfloor\frac{m-1}{2} \rfloor}\, \gamma_0 \, \cdots \, \gamma_{m-1}  \qquad \qquad  \qquad  \qquad  \quad   \Gamma_K\; := \; i^{\lfloor\frac{k+1}{2}\rfloor}\, \tilde{\gamma}_1 \cdots \tilde{\gamma}_k    \nonumber \linebr
\Gamma_P \; &:= \;  i^{\lfloor\frac{m+ k -1}{2} \rfloor}\, \, \Gamma_0 \, \Gamma_1 \cdots \Gamma_{m+k-1}  \; = \;   (\Gamma_M)^{k+1}  \otimes \Gamma_K  \ ,
\end{align}
where $\lfloor s \rfloor$ denotes the integral part of $s$. They are normalized so that their square is the identity operator on the respective spinor spaces $\Delta_{m}$,  $\Delta_{k}$ and $\Delta_{m+k}$.

Assume that $TK$ has a topological spin structure in the sense of \cite{Bourguignon}, so a double cover of its oriented frame bundle. Together with the trivial spin structure on the contractible $M$, it determines topological spin structures on $TP$ and on the horizontal and vertical bundles in decomposition \eqref{HorizontalDistribution}. The metric $g_P$ then determines subordinate spin structures in the usual sense of \cite{LM}, so double covers of the oriented, orthonormal frame bundles of $TP$, $\VV$ and $\HH$. We fix those structures for the rest of the paper. 

Given an oriented, real vector bundle $E$ with a metric and spin structure, its complex spinor bundle is denoted $S(E)$. For a tangent bundle, the notation is simplified as in $S(TP) = S(P) = S_{g_\PPP}$.  Sections of $S(\HH)$ are called horizontal spinors over $P$, while sections of $S(\VV)$ are the vertical spinors. Calling $\pi$ the projection from $P$ to $M$, there are natural isomorphisms
\beq 
\label{IsomorphismHorizontalBundle}
S(\HH) \; \simeq \;  S (\pi^\ast(TM)) \; \simeq \; \pi^\ast [S (M)] \ .
\eeq
In particular, a spinor $\varphi$ on $M$ has a unique lift as a horizontal spinor on $P$. It coincides with the pullback $\pi^\ast \varphi$ under this isomorphism. It is denoted $\varphi^\HH$ and is called the basic lift of $\varphi$ to $P$. At the same time, if we fix a point $x \in M$ and consider the fibre $K_x := \{ x\} \times K$ inside $P$ with its metric $g_K (x)$, there is a natural isomorphism between the restriction of $S(\VV)$ to $K_x$ and the spinor bundle of the fibre,
\beq
\label{IsomorphismVerticalBundle}
S(\VV) \, |_{K_x} \; \simeq \; S(K_x)  \ .
\eeq
Overall, since $M$ is even-dimensional, there is a natural isomorphism of spinor bundles
\beq
\label{TensorProductSpinorBundle}
S(P) \ \simeq \ S(\HH) \otimes S(\VV)  \ 
\eeq
that is compatible with the Clifford multiplication implied by \eqref{CliffordMultiplication}, in the sense that 
\begin{align}
\label{EquivarianceClifford2}
U \cdot (\varphi ^\HH \otimes \psi) \ &= \ (\Gamma_M \, \varphi) ^\HH \otimes (U \cdot \psi)  \linebr
X^\HH \cdot (\varphi ^\HH \otimes \psi) \ &= \ (X \cdot \varphi) ^\HH \otimes \psi \ . \nonumber
\end{align}
Here $U$ is any vertical vector field on $P$. It is regarded as such on the left-hand side and as a section of $\VV$ on the right-hand side. As before, $X$ is any vector in $TM$ and $X^\HH$ denotes its basic lift to $P$, as in \eqref{BasicLiftX}.

Since $M$ is contractible, its spinor bundle $S (M)$ is trivial. Due to \eqref{IsomorphismHorizontalBundle}, so is $S(\HH)$ as a bundle over $P$. This implies that a spinor $\Psi$ on $P$ can always be written as a sum 
 \beq
 \label{TensorHDSpinor}
\Psi (x,y) \; = \; \sum\nolimits_{b=1}^{2^{\lfloor \frac{m}{2} \rfloor}} \, \varphi_b^\HH(x) \otimes \psi^b(x,y) \ ,
\eeq
where $m$ is the dimension of $M$, the $\varphi_b$ are Dirac spinors on $M$ and the $\psi^b$ are vertical spinors on $P$. Here $x$ and $y$ denote coordinates on $M$ and $K$, respectively.
When $K$ is compact, the vertical spinors over a fibre $\{x\} \times K$ can always be written as a (possibly infinite) sum of eigenspinors of the internal Dirac operator $\sD^K$. Since the metric $g_K$ depends on $x$, the operator $\sD^K$ and its eigenspinors will also change along $M$, in general. 

Now suppose that the internal metric $g_K (x)$ is independent of $x$. Then an $L^2$-orthonormal basis of eigenspinors $\{ \psi^\alpha (y)\}$ on $K$ can be chosen uniformly over  $M$. So we can take each $\psi^b$ in \eqref{TensorHDSpinor} and expand its $y$-dependence as $\psi^b(x, y) = \sum_\alpha c^b_{\alpha}(x) \, \psi^\alpha (y)$. Inserting this into \eqref{TensorHDSpinor}, it is clear that the higher-dimensional spinor can then be written as a (possibly infinite) sum
\beq
\label{EigenMassDecompositionSpinors}
\Psi (x,y) \ = \ \sum\nolimits_{\alpha} \, \varphi_\alpha^\HH(x) \otimes \psi^\alpha(y) \ .
\eeq
Here the $\varphi_\alpha$ are Dirac spinors on $M$, the $ \varphi_\alpha^\HH$ are their horizontal lifts to $P$, and the $\psi^\alpha$ are eigenspinors of $\sD^K$ independent of the point on $M$.

\subsection{Decomposing the higher-dimensional Dirac operator}

The decomposition of spinors on $P$ into a tensor product of horizontal and vertical parts, as in \eqref{TensorProductSpinorBundle}, leads to a decomposition of the Levi-Civita connection and the Dirac operator on $P$. See \cite{Bap3, Reynolds}. To describe it, take an oriented, orthonormal trivialization of $TP$ adapted to the submersion metric $g_P \simeq (g_M, A, g_K)$. As in section \ref{Submersions}, this means a trivialization $\{ X_\mu^\HH, v_j \}$ formed by basic and vertical vector fields on $P$. Then the following formula was obtained in \cite{Bap3}, developing previous work in \cite{Moroianu, Reynolds, LS}.
\begin{proposition}
\label{thm:DecompositionDiracOperator}
Consider a spinor on $P$ of the form $\Psi = \varphi^\HH(x) \otimes \psi(x,y)$, as in \eqref{TensorHDSpinor}. The action of the higher-dimensional Dirac operator on $\Psi$ can be locally decomposed as 
\begin{align}
\label{GeneralFormulaDecompositionDiracOperator}
\sD^P \Psi \; =& \; \, g_M^{\mu \nu}\,  (X_\mu \cdot  \nabla^M_{\nu} \varphi)^\HH \otimes \psi  \; + \;   g_M^{\mu \nu}\, \, A_\nu^a\,  \, (X_\mu \cdot  \varphi)^\HH  \otimes  \big[ \cL_{e_a} + \frac{1}{2} \, \divergence (e_a) \big] \psi  \nonumber   \linebr
&+  \; ( \Gamma_M \, \varphi)^\HH  \otimes  \sD^K\psi   \; + \;  \frac{1}{8} \, \,  (F_A^a)^{\mu \nu } \,  (  X_\mu \cdot X_\nu \cdot \Gamma_M \, \varphi  )^\HH \otimes (e_a \cdot \psi)   \linebr
 &+\;  g_M^{\mu \nu}\,\, (X_\mu \cdot \varphi)^\HH \otimes \, \Big[  \partial_{X_\nu}  + \frac{1}{2}  \,\big( \partial_{X_\nu}   \log \sqrt{|g_K |}  \, \big)  \Big]  \, \psi  \  \nonumber \linebr
  &+\;  g_M^{\mu \nu}\,  (X_\mu \cdot \varphi)^\HH \otimes \Big( \frac{1}{8} \, \sum\nolimits_{ij}   \big\{ g_P([X_\nu, v_i], v_j)  -  g_P([X_\nu, v_j ] , v_i) \big\}  \, v_i \cdot v_j \cdot \psi  \Big) \, . \nonumber 
 \end{align}
 Here $\cL_{e_a}$ denotes the derivative \eqref{KLDerivative} of spinors on $K$; $\divergence (e_a)$ denotes the divergence of the internal vector field $e_a$ with respect to $g_K$; and $|g_K |$ is the modulus of the determinant of the matrix representing $g_K$ in a fixed coordinate system on $K$. 
\end{proposition}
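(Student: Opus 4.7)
I would compute $\sD^P(\varphi^\HH \otimes \psi)$ directly in the adapted $g_P$-orthonormal frame $\{X_\mu^\HH, v_j\}$ of Section~\ref{Submersions}, then split the output into horizontal and vertical Clifford factors using the rules in \eqref{EquivarianceClifford2}. Writing
$$\sD^P \Psi \; = \; g_M^{\mu\nu}\, X_\mu^\HH \cdot \nabla^P_{X_\nu^\HH}(\varphi^\HH \otimes \psi) \; + \; \sum_j v_j \cdot \nabla^P_{v_j}(\varphi^\HH \otimes \psi),$$
reduces the task to evaluating the spin covariant derivatives of $\varphi^\HH \otimes \psi$ along horizontal and vertical directions of $P$, and then gathering the outputs term by term.

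The key input is O'Neill's decomposition of the Levi-Civita connection of $g_P$ in terms of $\nabla^M$, $\nabla^K$ and the A- and T-tensors of the submersion (see \cite{ONeill1, Besse, FIP, Bap1}). From these identities I would read off the spin connection coefficients of $g_P$ in the frame $\{X_\mu^\HH, v_j\}$. They fall into four groups: purely horizontal coefficients that lift $\nabla^M$; purely vertical coefficients that lift $\nabla^K$; coefficients governed by the A-tensor, which is essentially the curvature $F_A$ of the gauge form through the identity $2\,A_{X^\HH} Y^\HH = [X^\HH, Y^\HH]^\VV$; and coefficients of type $g_P([X_\mu^\HH, v_j], \cdot)$, which encode both the $x$-variation of $g_K$ and the Kosmann-Lichnerowicz data of the fundamental vector fields $e_a$.

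Substituting these expressions into the two sums above and applying the Clifford equivariance rules \eqref{EquivarianceClifford2}, each line of \eqref{GeneralFormulaDecompositionDiracOperator} should be recovered in turn. The purely horizontal coefficients produce the lifted 4D Dirac term $(X_\mu \cdot \nabla^M_\nu\varphi)^\HH \otimes \psi$. The A-tensor piece produces Clifford products $v_i \cdot v_j$ acting on $\psi$ which, after summing $g_M^{\mu\nu} X_\mu^\HH \cdot$ and using the rule $v_i \cdot (\xi^\HH \otimes \eta) = (\Gamma_M \xi)^\HH \otimes (v_i\cdot\eta)$, reorganize into the Pauli-type term $\tfrac18 (F_A^a)^{\mu\nu}(X_\mu \cdot X_\nu \cdot \Gamma_M\varphi)^\HH \otimes (e_a \cdot \psi)$. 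The purely vertical coefficients reassemble into the internal Dirac operator, giving $(\Gamma_M \varphi)^\HH \otimes \sD^K\psi$. The remaining mixed coefficients split into a trace part that yields the measure correction $\tfrac12 \partial_{X_\nu}\log\sqrt{|g_K|}$ and an antisymmetric part that yields the last line of the formula. The most delicate step will be the identification of the gauge coupling: using $X_\mu^\HH = X_\mu + A_\mu^a e_a$ from \eqref{BasicLiftX}, the difference between $\nabla^P_{X_\mu^\HH}\psi$ and $\partial_{X_\mu}\psi$ is controlled by $A_\mu^a$ together with vertical connection coefficients of the form $g_K(\nabla_{v_j} e_a, v_k) - g_K(\nabla_{v_k} e_a, v_j)$, which by \eqref{KLDerivative} combine with $\nabla_{e_a}\psi$ to produce exactly $\cL_{e_a}\psi$; the corresponding trace accounts for the $\tfrac12 \divergence(e_a)\psi$ correction.

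The main obstacle will be the bookkeeping. I have to keep the four families of O'Neill-type coefficients strictly separate, pair each of them with the correct Clifford bilinear, and ensure that terms involving the $x$-dependence of $g_K$ stay out of the gauge coupling and settle instead into the last two lines of the decomposition. A further subtlety is the factor-of-two and sign conventions in the A-tensor, which must be reconciled with the definition \eqref{CurvatureDefinition} of $F_A$ and with the normalization of $\Gamma_M$ in \eqref{ChiralOperators}. Once these combinatorics are handled, the five lines of \eqref{GeneralFormulaDecompositionDiracOperator} should emerge as a direct rewriting of the frame computation.
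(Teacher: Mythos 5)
Your overall strategy is the right one and, in fact, the only reasonable one: expand $\sD^P$ in the adapted orthonormal frame $\{X_\mu^\HH, v_j\}$, read off the spin connection coefficients from the O'Neill decomposition, and sort the output using the Clifford equivariance rules \eqref{EquivarianceClifford2}. The paper itself does not prove Proposition~\ref{thm:DecompositionDiracOperator} --- it cites \cite{Bap3} for the formula --- so you are effectively reconstructing that derivation. The horizontal Dirac term, the $\sD^K$ term, the Pauli term, and the last (Lie-bracket) line all arise in the way you describe, and your remarks about sign bookkeeping in the $A$-tensor and the normalization of $\Gamma_M$ are the right places to be cautious.

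The one genuine misstep is in the attribution of the $\tfrac{1}{2}\,\divergence(e_a)$ correction. You place it in the horizontal piece $g_M^{\mu\nu}\,X^\HH_\mu\cdot\nabla^P_{X^\HH_\nu}$, as ``the corresponding trace'' of the vertical connection coefficients that build $\cL_{e_a}$. But those coefficients $g_P(\nabla^P_{X^\HH_\nu}v_i,v_j)$ are antisymmetric in $i,j$ by metric compatibility of $\nabla^P$, so they have no trace; the horizontal piece can only produce $A^a_\nu\,\cL_{e_a}\psi$ together with $\partial_{X_\nu}\psi$ and the Lie-bracket line, nothing scalar. Both scalar corrections --- the $\tfrac{1}{2}\,\divergence(e_a)$ and the $\tfrac{1}{2}\,\partial_{X_\nu}\log\sqrt{|g_K|}$ next to it --- come instead from the vertical piece $\sum_j v_j\cdot\nabla^P_{v_j}$ via the mixed $\HH\VV$ spin connection coefficients $g_P(\nabla^P_{v_j}v_i,\,X^\HH_\rho)=g_P(T_{v_j}v_i,\,X^\HH_\rho)$, i.e.\ via the second fundamental form of the fibres. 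Since $T$ is symmetric in its two vertical slots, only its trace (the mean curvature $N=\sum_j T_{v_j}v_j$) survives the Clifford contraction $\sum_j v_j\cdot v_i\,$, and $g_P(N,X^\HH_\rho)$ expands, on writing $X^\HH_\rho = X_\rho + A^a_\rho e_a$, into a $\partial_{X_\rho}\log\sqrt{|g_K|}$ part and an $A^a_\rho\,\divergence(e_a)$ part. So when you carry out the bookkeeping, keep the $\VV\VV$ coefficients along horizontal directions (antisymmetric, giving $\cL_{e_a}$ and the bracket line) strictly separate from the $T$-tensor along vertical directions (symmetric, whose trace gives the two scalar corrections): conflating them is where the plan as written would go wrong.
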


Formula \eqref{GeneralFormulaDecompositionDiracOperator} is simpler in regions where the Higgs-like scalars are constant, i.e. where the internal metric $g_K$ does not change along $M$. In this case $[X_\mu, v_i] = 0$ and every higher-dimensional spinor can be expressed as a sum of simpler tensor products of the form $\varphi(x)^\HH \otimes \psi(y)$, as in \eqref{EigenMassDecompositionSpinors}. In this simpler setting, we have:
\begin{corollary}
In regions where $g_K$ is constant along $M$, the action of the Dirac operator on a spinor of the form $\varphi(x)^\HH \otimes \psi(y)$ can be decomposed as 
\begin{align}
\label{SimplerDecompositionDiracOperator}
\sD^P (\varphi^\HH \otimes \psi ) \; =& \; \, g_M^{\mu \nu}\,  (X_\mu \cdot  \nabla^M_{\nu} \varphi)^\HH \otimes \psi  \; + \;   g_M^{\mu \nu}\, \, A_\nu^a\,  \, (X_\mu \cdot  \varphi)^\HH  \otimes  \big[ \cL_{e_a} + \frac{1}{2} \, \divergence (e_a) \big] \psi  \nonumber   \linebr
&+  \; ( \Gamma_M \, \varphi)^\HH  \otimes  \sD^K\psi   \; + \;  \frac{1}{8} \, \,  (F_A^a)^{\mu \nu } \,  (  X_\mu \cdot X_\nu \cdot \Gamma_M \, \varphi  )^\HH \otimes (e_a \cdot \psi) \ .
 \end{align}
\end{corollary}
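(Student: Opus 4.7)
The plan is to obtain the corollary as a direct specialization of the general decomposition formula in Proposition \ref{thm:DecompositionDiracOperator}. The strategy is to verify that, under the hypothesis that $g_K$ is constant along $M$ and the spinor factorizes as $\varphi(x)^\HH \otimes \psi(y)$, the last two lines of \eqref{GeneralFormulaDecompositionDiracOperator} vanish pointwise, leaving precisely \eqref{SimplerDecompositionDiracOperator}.

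First I would exploit the hypothesis on $g_K$ to choose an orthonormal frame of $TK$ that is independent of $x$. Concretely, since the family $g_K(x)$ is constant in $x$, I can pick a local $g_K$-orthonormal frame $\{ v_j \}$ on $K$ and extend it to $P$ in the obvious way so that its components in local coordinates on $K$ do not depend on the $M$-coordinate. For such a frame I would verify that the Lie bracket $[X_\mu^\HH, v_j]$ has no horizontal component along $X_\mu$ and, more importantly, that the pairings $g_P([X_\nu, v_i], v_j)$ entering the last line of \eqref{GeneralFormulaDecompositionDiracOperator} reduce to the vanishing $x$-derivatives of the frame coefficients. Hence the entire last line of the general formula drops out. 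Similarly, the factor $\partial_{X_\nu} \log \sqrt{|g_K|}$ in the penultimate line vanishes because $|g_K|$ is an $x$-independent function on $K$.

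Second I would use the factorization $\psi = \psi(y)$: this kills the remaining derivative $\partial_{X_\nu} \psi$ in the penultimate line. Together with the previous step, this shows that the last two lines of the general formula contribute zero when acting on $\varphi^\HH \otimes \psi$ with $\psi$ depending only on $y$. The surviving terms are exactly lines one through four, which are precisely what appears on the right-hand side of \eqref{SimplerDecompositionDiracOperator}. I would note in passing that on the second line the Kosmann--Lichnerowicz derivative $\cL_{e_a}$ and the divergence $\divergence(e_a)$ are unchanged because they are intrinsic to $(K, g_K)$ and do not involve differentiation in the $M$-direction.

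I do not expect a serious obstacle; the content of the corollary is really a bookkeeping simplification. The only point that requires a little care is confirming that the chosen vertical frame is compatible with the convention implicit in \eqref{GeneralFormulaDecompositionDiracOperator}, i.e. that the $v_j$ can indeed be taken $x$-independent when $g_K$ is so, and that writing the basic lifts $X_\mu^\HH$ via \eqref{BasicLiftX} yields $[X_\mu^\HH, v_j] \in \VV$ with vanishing components in the chosen frame. Once this is settled, the rest of the proof is just reading off the surviving terms.
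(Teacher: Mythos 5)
Your proposal is correct and takes the same route the paper takes: the paper proves the corollary by observing that when $g_K$ is $x$-independent one may pick the vertical frame $\{v_j\}$ to be $x$-independent, so that $[X_\nu, v_i] = 0$ and $\partial_{X_\nu}\log\sqrt{|g_K|} = 0$, and since $\psi = \psi(y)$ also $\partial_{X_\nu}\psi = 0$, killing the last two lines of \eqref{GeneralFormulaDecompositionDiracOperator}. One small point: the bracket appearing in the general formula is $[X_\nu, v_i]$ (with $X_\nu$ the bare $M$-vector field viewed in $TM \subset TP$), not $[X_\mu^\HH, v_j]$, so your aside about the horizontal lift is a harmless tangent; the argument you actually rely on — that $g_P([X_\nu, v_i], v_j)$ vanishes because it is the $x$-derivative of the $x$-independent frame — is exactly the right one.
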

This expression is valid for a general gauge one-form $A_\mu = A^a_\mu \, e_a$ on $M$ with values in the space of vector fields on $K$. Be they Killing or non-Killing with respect to $g_K$. The first term on the right-hand side contains the Dirac operator on $M$. The second term determines the couplings between gauge fields and fermions. The term with the internal Dirac operator $\sD^K$ generates mass terms for fermions on $M$. The last term is a Pauli-type coupling between the gauge field strength and spinors. It is a standard feature in Kaluza-Klein dimensional reductions. The dimensional reduction of \eqref{SimplerDecompositionDiracOperator} to $M$ will be discussed in section \ref{DimensionalReductionCPViolation}.

\begin{remark}
\label{RemarkPauliTerm1}
Consider the operator on spinors over $P$ defined by the Pauli term,
\beq
C (\varphi^\HH \otimes \psi) \, := \,   \frac{1}{8} \, \,  (F_A^a)^{\mu \nu } \,  (  X_\mu \cdot X_\nu \cdot \Gamma_M \, \varphi  )^\HH \otimes (e_a \cdot \psi) \ .
\eeq
It is algebraic and anti-self-adjoint with respect to the pairing $\langle \cdot, \cdot \rangle$ of spinors. Thus, the modified operator $\sD^P\! - C$ retains most of the useful properties of $\sD^P$, such as ellipticity, anti-self-adjointness with respect to $\langle \cdot, \cdot \rangle$, and the implicit coupling of 4D gauge fields to spinors through the Kosmann-Lichnerowicz derivative. So one could also consider $(\sD^P\! - C) \Psi = 0$ as a candidate for the physical equation of motion for spinors on $P$. An important conceptual disadvantage of the modified operator, however, is that it is defined only for submersion metrics of the form $g_P \simeq (g_M, A, g_K)$, not for general metrics on $P$.
\end{remark}

\section[Spinor symmetries on $M_4\times K$ induced by 4D reflections and parity]{Spinor symmetries on $M\times K$ induced by reflections and parity transformations on $M$}
\label{ReflectionsParity}

Let the spacetime $P = M \times K$ be equipped with a submersion metric $g_P$ equivalent to a triple $(g_M, A , g_K)$, as in the previous section. Reflections and parity transformations on Minkowski space admit natural extensions to diffeomorphisms of $P$ that leave the internal coordinates unchanged. The extended diffeomorphisms, however, are not isometries of the submersion metric. The transformation rules of $g_P$ under those diffeomorphisms encapsulate the usual transformation rules of 4D gauge fields $A^a_\mu$ and Higgs-like fields under reflections and parity inversions.

Now suppose that $P$ has a fixed orientation and topological spin structure. The diffeomorphisms of $P$ induced by reflections and parity inversions determine transformations of the higher-dimensional spinor fields, as we will see. But the construction of spinor bundles depends on the background metric, and $g_P$ is not preserved by those diffeomorphisms. So reflections and parity inversions will relate spinors defined on different bundles over $P$. This is slightly different from the usual 4D story, in which those transformations are isometries of $g_M$, and hence define automorphisms of the same 4D spinor bundle. Despite this difference, the Kaluza-Klein extensions of reflections and parity inversions still induce symmetries of the higher-dimensional Dirac equation $\sD^P \Psi = 0$. This is the main point for us. The purpose of this section is to describe these matters concisely. It spells out the main properties of the spinor transformations on $P$ corresponding to those symmetries. To our knowledge, the extension of the analyses in \cite{GN, Wetterich84} to our type of KK models, with general submersion metrics and transformations between different spinor bundles, has not been described previously in the literature.

\subsection*{Reflections of a single coordinate}

Let $M$ be an even-dimensional Minkowski space and $R: M \rightarrow M$  denote the reflection diffeomorphism that changes the sign of the $\nu$-th canonical coordinate. We denote by the same symbol its natural extension to a diffeomorphism of $P$. If $g_P$ is a submersion metric on $P$ equivalent to the triple $(g_M, A , g_K)$, then the pullback tensor $R^\ast g_P$ is a submersion metric equivalent to the triple $(g_M, R^\ast A, R^\ast g_K)$. Note that reflections are isometries of the Minkowski metric $g_M$. The extended diffeomorphism $R: P\rightarrow P$ can also be used to pushforward vector fields $W$ on $P$, which are denoted $R_\ast W$.

Now assume that $P$ has a fixed orientation and topological spin structure, as in section \ref{HorizontalVerticalSpinors} and appendix \ref{ConventionsSpinors}. The metrics $g_P$ and $R^\ast g_P$ determine two distinct spinor bundles over $P$, denoted $S_{g_\PPP}$ and $S_{\RR^\ast g_\PPP}$, respectively. Then the diffeomorphism $R$ can be lifted to a map of spinors described in the following proposition. 
\begin{proposition}  
\label{thm:ReflectionsSpinors}
Given a submersion metric $g_P$ on $P$, there exists a $\CC$-linear map of higher-dimensional spinors, denoted  $R: \Gamma (S_{g_\PPP}) \rightarrow \Gamma (S_{\RR^\ast g_\PPP})$, with the following properties:
\begin{align}
%\label{ReflectionsSpinors}
R(f\, \Psi) \ &= \  (R^\ast f)\, R(\Psi)    \label{ReflectionsFunctions}     \linebr
R \circ R (\Psi) \ &= \  \eta\; \Psi       \label{ReflectionsSquared}    \linebr
R(W \cdot \Psi) \ &= \ -\, (R_\ast W) \cdot  R(\Psi)     \label{ReflectionsCliffordMultiplications}    \linebr
R( \nabla_W^{g_\PPP}\, \Psi) \ &= \  \nabla_{R_\ast W}^{\RR^\ast g_\PPP}  \, [  R(\Psi) ]    \label{ReflectionsCovariantDerivative}  
\end{align}
for all spinors $\Psi \in \Gamma (S_{g_\PPP})$, all functions $ f \in C^\infty (P; \CC)$, all vector fields $W$ on $P$, and for some phase $\eta \in {\rm U}(1)$. This  map is unique up to multiplication by a constant phase.
\end{proposition}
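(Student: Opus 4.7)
The key observation is that, although $R$ is not an isometry of $g_P$, it is tautologically an isometry $R: (P, R^\ast g_P) \to (P, g_P)$ by the very definition of the pullback metric. Hence the standard machinery for lifting isometries to spinor bundles applies, with the caveat that here it will relate two distinct bundles over the same manifold rather than produce an automorphism of a single bundle.

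My plan is to construct the map in stages. Since $M$ is even-dimensional and $R$ flips a single coordinate, $R$ reverses orientation on $P$, so $dR$ induces an orientation-reversing bundle map between the oriented orthonormal frame bundles of $(P, R^\ast g_P)$ and $(P, g_P)$. I would correct this by precomposing in the domain with the reflection $e_\nu \mapsto -e_\nu$ of the $\nu$-th frame vector, obtaining an orientation-preserving bundle map $\tilde R$ covering $R$. Since the topological spin structure on $P$ is fixed and $P$ is connected, $\tilde R$ lifts to a bundle map $\hat R$ between the metric-subordinate $\Spin$ bundles, unique up to one global sign --- this sign is the source of the stated constant-phase ambiguity. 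Passing to associated bundles, $\hat R$ induces the desired $\CC$-linear map $R: \Gamma(S_{g_\PPP}) \to \Gamma(S_{\RR^\ast g_\PPP})$, which in a local frame with the $\nu$-th axis aligned to the reflected coordinate takes the concrete form $R(\Psi)(p) \propto \Gamma^\nu \cdot \Psi(R(p))$, where $\Gamma^\nu$ is the Clifford generator along the reflected direction.

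Verification then proceeds by direct calculation. Property \eqref{ReflectionsFunctions} is immediate, as the map is built from composition with $R$ and a fibrewise-linear operation. Property \eqref{ReflectionsSquared} follows from $R^2 = \mathrm{id}_P$ together with $(\Gamma^\nu)^2 = \pm 1$, which determines the phase $\eta \in {\rm U}(1)$ according to signature conventions and whether the reflected axis is timelike or spacelike. Property \eqref{ReflectionsCliffordMultiplications} reduces to the classical observation that conjugation by $\Gamma^\nu$ in the Clifford algebra fixes $\Gamma^\nu$ but reverses all other generators --- precisely $R_\ast$ followed by an overall sign flip, which explains the minus sign on the right-hand side. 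Property \eqref{ReflectionsCovariantDerivative} follows from the naturality of the Levi-Civita and induced $\Spin$ connections under isometries, applied to $R: (P, R^\ast g_P) \to (P, g_P)$. Uniqueness up to a constant phase is essentially automatic: two candidate maps differ by a bundle endomorphism commuting with Clifford multiplication, which Schur-type reasoning together with the connectedness of $P$ forces to be a single global scalar. The main obstacle will be sign bookkeeping --- reconciling the orientation correction, the Clifford anticommutations, and the Lorentzian signature so that properties \eqref{ReflectionsSquared} and \eqref{ReflectionsCliffordMultiplications} emerge with exactly the stated signs rather than with spurious extras.
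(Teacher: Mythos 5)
Your construction parallels the paper's: build an orientation-corrected frame-bundle map covering the reflection, lift to spin, read off the local formula $(R\Psi)(x,y) = e^{i\xi}\,\Gamma_\nu\,\Psi(R(x),y)$, verify the four properties in trivializations, and establish uniqueness via a Schur-type argument on the composite $\hat R\,\tilde R$. The place where your outline asserts rather than proves is the \emph{existence} of the spin lift. You write that the lift exists ``since the topological spin structure on $P$ is fixed and $P$ is connected,'' but connectedness only pins down a lift up to a global sign once one exists --- it does not furnish existence. What existence actually requires is that the orientation-corrected frame map $\tilde R = \rho_\nu \circ dR$ (with $\rho_\nu$ the flip of the $\nu$-th frame vector) pulls the $\mathbb{Z}_2$ cover over $F_{\mathrm{SO}}(g_P)$ back to the cover over $F_{\mathrm{SO}}(R^\ast g_P)$. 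This is not automatic from the two metric spin structures being subordinate to the same topological one: $\tilde R$ was built with the extra ingredient $\rho_\nu$, and an orientation-preserving bundle map over an orientation-reversing diffeomorphism need not respect the double cover.

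The paper handles this by a direct cocycle check in submersion-adapted trivializations. It observes that, because $M$ is contractible and $S(\HH)$ is trivial, the transition functions of $S_{g_P}$ over $M\times(\mathcal U_\alpha\cap\mathcal U_\beta)$ take the product form $I \otimes B_{\alpha\beta}$ with $B_{\alpha\beta}$ valued in $\Spin(k)$ and determined by the internal $K$-data. Pulling back by $R$ (which acts only on the $M$-factor) gives valid transition functions $I \otimes R^\ast B_{\alpha\beta}$ for $S_{R^\ast g_P}$, and $\Gamma_\nu = \gamma_\nu\otimes I$ commutes with these, so the local formula patches consistently on overlaps and the map is globally well-defined. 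Your outline needs an argument of this kind, using the product structure $P=M\times K$ with $M$ contractible, to close the gap. A minor side remark: the $\pm 1$ lift ambiguity you name is only a $\mathbb{Z}_2$ inside the full $\mathrm{U}(1)$ phase freedom asserted in the proposition; the latter is larger and arises simply because all four defining properties are insensitive to multiplication by a constant phase.
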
  
In the second equality, we have used the identity $R^\ast  R^\ast g_\PPP = g_\PPP$ to regard $R \circ R$ as an automorphism of $S_{g_\PPP}$. The third equality is a compatibility relation between spinor reflections and Clifford multiplication. The last equality is a relation between reflections and the lifted Levi-Civita connections on the spinor bundles $S_{g_\PPP}$ and $S_{\RR^\ast g_\PPP}$. 

The existence and uniqueness of these maps of higher-dimensional spinors is proved in appendix \ref{ProofsReflectionsParity}. Here we only mention that, in appropriately chosen trivializations of the bundles $S_{g_\PPP}$ and $S_{\RR^\ast g_\PPP}$ over a domain $M \times \mathcal{U}$ inside $M \times K$, the representative of the spinor $\Psi$ is a function $\Psi_{ \mathcal{U}} :  M \times \mathcal{U} \rightarrow \Delta_{m+k}$, and the representative of the reflected spinor $R (\Psi)$ is the function 
\beq
\label{LocalSpinorReflections}
(R \Psi)_{ \mathcal{U}} (x, y) \ = \ e^{i \xi}\: \Gamma_\nu \: \Psi_{ \mathcal{U}}(R(x), y) \ .
\eeq
Here $(x, y)$ are the coordinates on $M \times \mathcal{U}$; the reflection $R(x)$ acts on $x \in M$ by changing the sign of the coordinate $x^\nu$; the factor $e^{i \xi}$ is a fixed complex phase; and $\Gamma_\nu$ is the gamma matrix on spinor space $\Delta_{m+k}$ corresponding to the coordinate $x^\nu$ on $M$. The local formula \eqref{LocalSpinorReflections} extends the most common convention for reflections on 4D Minkowski space \cite{Stone, Witten2016}.\footnote{A factor $\Gamma_P$ could be inserted in \eqref{LocalSpinorReflections}. Then identities \eqref{ReflectionsCliffordMultiplications}, \eqref{ReflectionsDiracOperator} and \eqref{ReflectionsInnerProduct} would appear with a flipped sign for even-dimensional $K$.} 
Part of the existence proof is to choose and establish the consistency of the trivializations of $TP$, $S_{g_\PPP}$ and $S_{\RR^\ast g_\PPP}$ where this local formula applies. In this process, one should insist on choosing trivializations consistent with the fixed initial orientation of $P$, even though the reflection diffeomorphism inverts that orientation.

Denote by $\sD^{g_\PPP}$ and $\sD^{\RR^\ast g_\PPP}$ the standard Dirac operators on $S_{g_\PPP}$ and $S_{\RR^\ast g_\PPP}$. Denote by $\Gamma_P$ the chirality operators both on $S_{g_\PPP}$ and $S_{\RR^\ast g_\PPP}$, as defined on \eqref{ChiralOperators}. Then appendix \ref{ProofsReflectionsParity} also shows that:

\begin{proposition}  
\label{thm:PropertiesReflectionsSpinors}
The maps of spinors described in proposition \ref{thm:ReflectionsSpinors} satisfy:
\begin{align}
R \, ( \sD^{g_\PPP} \Psi) \ &= \  - \, \sD^{\RR^\ast g_\PPP}  (R \, \Psi)    \label{ReflectionsDiracOperator}    \linebr
R \, ( \Gamma_P \Psi) \ &= \ (-1)^{k+1}\,  \Gamma_P  (R \, \Psi)     \label{ReflectionsChirality}     \linebr
\langle R \, \Psi_1,  R \, \Psi_2 \rangle \ &= \ - \, (g_M)_{\nu \nu} \,\,  \langle \Psi_1, \Psi_2 \rangle \circ R   \label{ReflectionsInnerProduct}        
\end{align}
for all spinors $\Psi \in \Gamma (S_{g_\PPP})$. Here $k$ denotes the dimension of $K$, the inner product of spinors $\langle \cdot , \cdot \rangle$ is defined in \eqref{GeneralInnerProduct}, and  $\nu$ is the index of the reflected coordinate in $M$.
\end{proposition}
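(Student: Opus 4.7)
My plan is to derive each of the three identities from the four defining properties of $R$ collected in Proposition \ref{thm:ReflectionsSpinors}, supplemented where needed by the explicit local formula \eqref{LocalSpinorReflections}. For the Dirac operator identity \eqref{ReflectionsDiracOperator}, I would fix a local oriented orthonormal frame $\{f_a\}$ for the metric $R^\ast g_P$ near a point $p$. By the definition of pullback metric, the pushforward $\{R_\ast f_a\}$ is an orthonormal frame for $g_P$ near $R(p)$, with the same signature factors $\epsilon_a = (R^\ast g_P)(f_a, f_a) = g_P(R_\ast f_a, R_\ast f_a) = \pm 1$. Writing $\sD^{g_P} \Psi$ at $R(p)$ in this frame as $\sum_a \epsilon_a\, (R_\ast f_a)\cdot \nabla^{g_P}_{R_\ast f_a}\Psi$, applying $R$ term by term, and using \eqref{ReflectionsCliffordMultiplications} together with \eqref{ReflectionsCovariantDerivative} (plus the fact that $R_\ast R_\ast = \mathrm{id}$), each summand collects exactly one sign flip, yielding the overall minus on the right of \eqref{ReflectionsDiracOperator}.

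For the chirality identity \eqref{ReflectionsChirality}, I would work in local trivializations where $\Gamma_P$ acts by the constant matrix defined in \eqref{ChiralOperators}. Using \eqref{LocalSpinorReflections} on both sides, the claim reduces to comparing $\Gamma_\nu \Gamma_P$ and $\Gamma_P \Gamma_\nu$. Since $\Gamma_\nu$ anticommutes with each of the other $m+k-1$ generators appearing in $\Gamma_P$, one obtains $\Gamma_\nu \Gamma_P = (-1)^{m+k-1}\, \Gamma_P \Gamma_\nu$, which equals $(-1)^{k+1}\, \Gamma_P \Gamma_\nu$ because $m$ is even.

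For the inner product identity \eqref{ReflectionsInnerProduct}, I would again apply \eqref{LocalSpinorReflections} on the left-hand side; the unit phases $e^{\pm i \xi}$ cancel by sesquilinearity, reducing the question to $\langle \Gamma_\nu \Psi_1, \Gamma_\nu \Psi_2\rangle$ evaluated at $(R(x), y)$. Invoking the anti-self-adjointness of gamma matrices with respect to $\langle \cdot, \cdot \rangle$ from appendix \ref{ConventionsSpinors}, this equals $-\langle \Psi_1, \Gamma_\nu^2 \Psi_2\rangle = -(g_P)_{\nu\nu}\,\langle \Psi_1, \Psi_2\rangle$. Since $X_\nu$ is a basic vector, one has $(g_P)_{\nu\nu} = (g_M)_{\nu\nu}$, giving the desired formula after pulling back by $R$.

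The main obstacle is essentially notational rather than conceptual: one must carefully track which of the metrics $g_P$ or $R^\ast g_P$ underlies each spinor bundle, orthonormal frame, Levi-Civita connection and inner product, since $R$ intertwines $S_{g_P}$ and $S_{R^\ast g_P}$ rather than acting on a single bundle. A subsidiary check, used implicitly in the first step above, is that the pushforward of an $(R^\ast g_P)$-orthonormal frame is automatically $g_P$-orthonormal; this is immediate from the definition of the pullback metric, but must be recorded before invoking the frame formula for $\sD^{g_P}$.
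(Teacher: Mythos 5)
Your overall approach mirrors the paper's: derive \eqref{ReflectionsDiracOperator} directly from the defining properties of $R$ applied to an orthonormal frame, and reduce \eqref{ReflectionsChirality} and \eqref{ReflectionsInnerProduct} to gamma-matrix algebra. The Dirac-operator argument is correct, and it correctly invokes \eqref{ReflectionsCovariantDerivative} along with \eqref{ReflectionsCliffordMultiplications}. The chirality argument is also correct; the paper instead uses the abstract compatibility $R(W\cdot\Psi) = -(R_\ast W)\cdot R(\Psi)$ applied to $u_0\cdots u_{m+k-1}\cdot\Psi$ and then handles the orientation reversal of the pushforward frame explicitly, while you count anticommutations of $\Gamma_\nu$ against the $m+k-1$ other generators inside $\Gamma_P$ in a fixed trivialization. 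Both give the factor $(-1)^{m+k-1} = (-1)^{k+1}$ (using $m$ even); yours implicitly relies on the orientation-consistent choice of trivializations $\{\tilde{X}_\mu^\HH, \tilde{v}_j\}$ set up in the construction of $R$, which is fine since that choice is part of the definition.

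There is, however, a problem with your derivation of \eqref{ReflectionsInnerProduct}: it contains two sign errors that happen to cancel. In the paper's conventions (appendix \ref{ConventionsSpinors}), Clifford multiplication is \emph{self}-adjoint, not anti-self-adjoint, with respect to $\langle\cdot,\cdot\rangle$ in Lorentzian signature $t=1$: the appendix records $\langle v\cdot\psi_1, \psi_2\rangle = (-1)^{t-1}\langle\psi_1, v\cdot\psi_2\rangle$, which is $+\langle\psi_1, v\cdot\psi_2\rangle$ for $t=1$. So the correct step is $\langle\Gamma_\nu\Psi_1, \Gamma_\nu\Psi_2\rangle = +\langle\Psi_1, \Gamma_\nu^2\Psi_2\rangle$, with no extra minus. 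Meanwhile, by the Clifford relation \eqref{ConventionsGammaMatrices1} the paper uses, $\Gamma_\nu^2 = -\eta_{\nu\nu}\, I = -(g_M)_{\nu\nu}\, I$, not $+(g_M)_{\nu\nu}\, I$ as your final equality implicitly assumes. The two wrong signs compensate so the final coefficient $-(g_M)_{\nu\nu}$ comes out right, but either intermediate claim, used in isolation elsewhere, would give wrong answers. (The paper's own proof appeals to the hermiticity properties $(\gamma_k)^\dagger = \pm\gamma_k$ from \eqref{ConventionsGammaMatrices2} and the relation $\Gamma_\nu\Gamma_\nu = -(g_M)_{\nu\nu}$; either route works if the signs are tracked consistently with the stated conventions.)
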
  
The first identity implies that if $\Psi$ is in the kernel of $\sD^{g_\PPP}$, then the reflected spinor $R(\Psi)$ is in the kernel of $\sD^{\RR^\ast g_\PPP}$. In this sense,  reflections are a symmetry of the higher-dimensional massless Dirac equation. The second identity shows that spinor reflections are a symmetry of the Weyl equation only when $K$ is odd-dimensional. The third identity shows that a reflection will preserve or not the inner product of spinors depending on whether the reflected coordinate is spacelike or timelike.

\subsection*{Parity transformations}

Let $M$ be an even-dimensional Minkowski space and $\parity: M \rightarrow M$ denote the parity diffeomorphism that changes the sign of all spatial coordinates. The extension of this diffeomorphism to $P$ is denoted by the same symbol. If $g_P$ is a submersion metric on $P$ equivalent to the triple $(g_M, A , g_K)$, then the pullback tensor $\parity^\ast g_P$ is a submersion metric equivalent to the triple $(g_M, \parity^\ast A, \parity^\ast g_K)$. The extended diffeomorphism $\parity: P\rightarrow P$ can also be used to pushforward vector fields $W$ on $P$, which are denoted $\parity_\ast W$.

Assume that $P$ has a fixed orientation and topological spin structure, as in section \ref{HorizontalVerticalSpinors} and appendix \ref{ConventionsSpinors}. The metrics $g_P$ and $\parity^\ast g_P$ determine two distinct spinor bundles over $P$, denoted $S_{g_\PPP}$ and $S_{\sparity^\ast g_\PPP}$, respectively. Then the diffeomorphism $\parity$ can be lifted to a map of spinors described in the following proposition.
\begin{proposition}  
\label{thm:ParitySpinors}
Given a submersion metric $g_P$ on $P$, there exists a $\CC$-linear map of higher-dimensional spinors, denoted  $\parity: \Gamma (S_{g_\PPP}) \rightarrow \Gamma (S_{\sparity^\ast g_\PPP})$, with the following properties:
\begin{align}
\label{ParitySpinors}
\parity(f\, \Psi) \ &= \  (\parity^\ast f)\, \parity(\Psi)       \linebr
\parity \circ \parity (\Psi) \ &= \  \eta\; \Psi         \linebr
\parity(W \cdot \Psi) \ &= \  - \, (\parity_\ast W) \cdot  \parity(\Psi)        \linebr
\parity( \nabla_W^{g_\PPP}\, \Psi) \ &= \  \nabla_{\parity_\ast W}^{\sparity^\ast g_\PPP}  \, [  \parity(\Psi) ]   
\end{align}
for all spinors $\Psi \in \Gamma (S_{g_\PPP})$, all functions $ f \in C^\infty (P; \CC)$, all vector fields $W \in \Gamma (TP)$, and for some phase $\eta \in {\rm U}(1)$. This  map is unique up to multiplication by a constant phase.
\end{proposition}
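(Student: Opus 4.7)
My plan is to mirror the construction used for Proposition~\ref{thm:ReflectionsSpinors}, replacing the single gamma matrix $\Gamma_\nu$ by the product of all spatial gamma matrices. Over each chart $M\times\mathcal{U}$ of $P$ I would fix trivialisations of $S_{g_\PPP}$ and $S_{\sparity^\ast g_\PPP}$ built from oriented orthonormal spin frames consistent with the fixed initial orientation of $P$---so that the frames used for $\sparity^\ast g_P$ differ from the $\parity$-pullback of the $g_P$-frames by an orientation-correcting $\Spin$-gauge transformation, exactly as in the reflection case. In these trivialisations I would define $\parity$ locally by
\beq
(\parity\,\Psi)_{\mathcal{U}}(x,y)\ =\ e^{i\xi}\,\Gamma_1\,\Gamma_2\cdots\Gamma_{m-1}\,\Psi_{\mathcal{U}}(\parity(x),\,y),
\eeq
directly generalising \eqref{LocalSpinorReflections}. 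The first task is to check consistency on chart overlaps, which is an iteration of the argument already carried out in the proof of Proposition~\ref{thm:ReflectionsSpinors}, since parity fixes the internal coordinates and acts on $M$ as an orthogonal transformation.

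I would then verify the four properties in order. Linearity and the function-identity are immediate. For Clifford compatibility, the key facts rely on $m$ being even: moving one gamma past the product $\Gamma_1\cdots\Gamma_{m-1}$ contributes $(-1)^{m-1}=-1$ when the two gammas anticommute and $(-1)^{m-2}=+1$ when they commute, so one gets $\Gamma_1\cdots\Gamma_{m-1}\,\Gamma_0=-\Gamma_0\,\Gamma_1\cdots\Gamma_{m-1}$ and $\Gamma_1\cdots\Gamma_{m-1}\,\Gamma_i=+\Gamma_i\,\Gamma_1\cdots\Gamma_{m-1}$ for $1\le i\le m-1$; and via $\Gamma_{m-1+j}=\Gamma_M\otimes\tilde\gamma_j$ together with $\Gamma_M\gamma_\mu=-\gamma_\mu\Gamma_M$, also $\Gamma_1\cdots\Gamma_{m-1}\,\Gamma_{m-1+j}=-\Gamma_{m-1+j}\,\Gamma_1\cdots\Gamma_{m-1}$. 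Combined with the action of $\parity_\ast$---which flips spatial tangent vectors while preserving the time direction and all vertical vectors---these signs produce precisely the overall minus sign required by the third identity. Covariant-derivative compatibility follows from the analogous property of the reflection maps applied coordinate-by-coordinate, or equivalently by a direct check that the lifted spin connection of $\sparity^\ast g_P$ in the chosen trivialisation coincides with the $\parity$-pullback of the spin connection of $g_P$. The squaring identity reduces to the scalar computation
\[
(\Gamma_1\cdots\Gamma_{m-1})^2\ =\ (-1)^{(m-1)(m-2)/2}\,\prod_{i=1}^{m-1}(g_M)_{ii},
\]
after which $\parity\circ\parity=\eta\,\mathrm{id}$ with $\eta$ the product of this constant and $e^{2i\xi}$; here one also checks that the trivialisation of $S_{\sparity^\ast\sparity^\ast g_\PPP}=S_{g_\PPP}$ implicit in the second application of $\parity$ recombines into the original one, again as in the reflection case.

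Uniqueness follows from the standard Schur-type argument: any two $\CC$-linear maps satisfying the four identities differ by an automorphism of $S_{g_\PPP}$ commuting with Clifford multiplication by all vector fields and with the lifted Levi-Civita connection, and such an automorphism is forced to be multiplication by a locally constant phase, which is global by connectedness of $P$. The main obstacle I expect is bookkeeping rather than conceptual: organising the orthonormal and spin frames so that they all respect the fixed orientation of $P$ even though $\parity$ is orientation-reversing on $M$ (since $m-1$ is odd for $m$ even), and tracking the corresponding $\Spin$-gauge transformations cleanly enough that the sign accounting in the Clifford and squaring identities is unambiguous. Since these issues are already handled for a single reflection in the proof of Proposition~\ref{thm:ReflectionsSpinors}, the parity case should follow by adapting that framework to the product $\Gamma_1\cdots\Gamma_{m-1}$.
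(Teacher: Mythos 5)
Your proposal is correct and arrives at the same local formula as the paper's \eqref{LocalSpinorParity}, but the paper takes a more economical route. Rather than building $\parity$ from scratch via a local gamma-product formula and then re-checking chart consistency and the four identities, the paper sets $\parity := e^{i\zeta} R_1 \cdots R_{m-1}$ as the composition of the already-constructed spinor reflection maps. With that definition, $\parity$ is automatically a well-defined map $\Gamma(S_{g_\PPP}) \to \Gamma(S_{\sparity^\ast g_\PPP})$ (the target bundles chain through $S_{R_j^\ast \cdots R_{m-1}^\ast g_\PPP}$), and each of the four listed properties follows at once by iterating the corresponding property from Proposition \ref{thm:ReflectionsSpinors}; the explicit gamma-matrix sign bookkeeping you lay out is absorbed into those earlier verifications, with the crucial minus sign in the Clifford identity coming out as $(-1)^{m-1}=-1$ precisely because $m$ is even, as you note. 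Your uniqueness argument is the same Schur-type argument the paper cites as analogous to the reflection case. One small slip: the squaring constant should read $(\Gamma_1\cdots\Gamma_{m-1})^2 = (-1)^{(m-1)(m-2)/2}\prod_{i=1}^{m-1}\bigl(-(g_M)_{ii}\bigr)$, since $\Gamma_i^2 = -(g_M)_{ii}$ in the conventions of \eqref{ConventionsGammaMatrices1}; you dropped the extra factor $(-1)^{m-1}$. This only changes the value of the phase $\eta$ and does not affect the proposition.
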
  

The parity diffeomorphism of $M$ can be written as a sequence of reflections of the spatial coordinates, 
\beq
\parity (x) \ = \ R_1 \cdots R_{m-1} (x) \ .
\eeq
Using the properties of reflections stated in proposition \ref{thm:ReflectionsSpinors}, one can easily verify that maps of spinors of the form
\beq
\label{SpinorParityReflections}
\parity (\Psi) \ := \ e^{i \zeta} \, R_1 \cdots R_{m-1}\,  \Psi \ 
\eeq
satisfy all the properties of parity maps, as stated in proposition \ref{thm:ParitySpinors}. This proves the existence part of proposition \ref{thm:ParitySpinors}. The uniqueness part can be proved in a way entirely analogous to the proof of proposition \ref{thm:ReflectionsSpinors}. In appropriately chosen trivializations of the spinor bundles $S_{g_\PPP}$ and $S_{\sparity^\ast g_\PPP}$ over a domain $M \times \mathcal{U}$, as in the discussion leading to \eqref{LocalSpinorReflections}, the local formula for the spinor parity transformations is 
\beq
\label{LocalSpinorParity}
(\parity \Psi)_{ \mathcal{U}} (x, y) \ = \ e^{i [\zeta + (m-1) \xi ]}\;  \Gamma_{1} \cdots \Gamma_{m-1} \;\Psi_{ \mathcal{U}}(\parity(x), y) \ .
\eeq
Here the $ \Gamma_{l}$ are gamma matrices on $\Delta_{m+k}$ corresponding to the spacelike directions in $M$, as in \eqref{CliffordMultiplication}. We note that after dimensional reduction and the redefinition of 4D spinors stated in \eqref{Redefinition4DSpinors} --- which is necessary to bring the 4D Dirac equation to its traditional form --- the 4D component of parity inversion acts through multiplication of the 4D spinor by $\gamma_0$ only, as in the usual prescription. See \eqref{Redefinition4DParity}.

Using \eqref{SpinorParityReflections} and the properties of reflections stated in proposition \ref{thm:PropertiesReflectionsSpinors}, one can also verify directly that:
\begin{proposition}  
\label{thm:PropertiesParitySpinors}
The maps of spinors described in proposition \ref{thm:ParitySpinors} satisfy:
\begin{align}
\label{PropertiesParitySpinors}
\parity \, ( \sD^{g_\PPP} \Psi) \ &= \  - \, \sD^{\sparity^\ast g_\PPP}  (\parity \, \Psi)       \linebr
\parity \, ( \Gamma_P \Psi) \ &= \ (-1)^{k+1}\,  \Gamma_P  (\parity \, \Psi)         \linebr
\langle \parity \, \Psi_1,  \parity \, \Psi_2 \rangle \ &= \ -  \, \langle \Psi_1, \Psi_2 \rangle  \circ \parity     
\end{align}
for all spinors $\Psi \in \Gamma (S_{g_\PPP})$. Here $k$ denotes the dimension of $K$ and $\langle \cdot , \cdot \rangle$ is the inner product of spinors defined in \eqref{GeneralInnerProduct}.
\end{proposition}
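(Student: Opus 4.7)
The plan is to derive each of the three identities directly from the explicit representation \eqref{SpinorParityReflections} of $\parity$ as a phase $e^{i\zeta}$ times the composition $R_1 \cdots R_{m-1}$ of the $m-1$ spatial reflections, and then apply Proposition \ref{thm:PropertiesReflectionsSpinors} once per factor. Although each successive $R_i$ carries spinors through a chain of distinct bundles $S_{g_\PPP}, S_{R_{m-1}^\ast g_\PPP}, \ldots, S_{\sparity^\ast g_\PPP}$, the three identities in Proposition \ref{thm:PropertiesReflectionsSpinors} hold verbatim for any starting submersion metric, so the compositions pose no difficulty. The global phase $e^{i\zeta}$ appears on both sides of every identity and cancels out.

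For the first line of \eqref{PropertiesParitySpinors}, each application of \eqref{ReflectionsDiracOperator} produces a single minus sign, so iterating $m-1$ times yields an overall factor $(-1)^{m-1}$. Since $m$ is even, this equals $-1$, which is the stated coefficient. For the second line, each reflection commutes with $\Gamma_P$ up to $(-1)^{k+1}$ by \eqref{ReflectionsChirality}, so after $m-1$ applications the accumulated sign is $(-1)^{(k+1)(m-1)}$; because $m-1$ is odd, this simplifies to $(-1)^{k+1}$, matching the claim.

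For the inner product identity, iterating \eqref{ReflectionsInnerProduct} once per spatial reflection produces the cumulative coefficient $(-1)^{m-1} \prod_{i=1}^{m-1} (g_M)_{ii}$, while the base-point argument of $\langle \Psi_1, \Psi_2 \rangle$ is pulled back successively through $R_{m-1}, \ldots, R_1$; since these reflections commute and compose to $\parity$, the right-hand side becomes $\langle \Psi_1, \Psi_2 \rangle \circ \parity$. What remains is a small signature check: with the Minkowski convention fixed in the paper, the product of the $m-1$ spatial diagonal entries of $g_M$ together with the leading $(-1)^{m-1}$ collapses to the single factor $-1$ appearing in the statement. The only work in writing out a complete proof is thus the careful bookkeeping of signs, phases, and bundle labels under composition; no new geometric ingredient beyond Proposition \ref{thm:PropertiesReflectionsSpinors} is needed, which is why the author can simply assert the result.
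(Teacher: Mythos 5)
Your proposal is correct and follows exactly the path the paper indicates: the paper states that Proposition 4 can be "verified directly" from the decomposition $\parity = e^{i\zeta} R_1 \cdots R_{m-1}$ and the reflection identities of Proposition 3, and you have simply carried out that iteration, including the sign bookkeeping. The key observations — that the reflection identities apply at each stage because $R_i^\ast g_P$ is again a submersion metric, that $m-1$ is odd so $(-1)^{(k+1)(m-1)} = (-1)^{k+1}$, and that in the paper's signature convention the spatial entries $(g_M)_{ii}$ are all $+1$ — are all exactly right, so nothing is missing.
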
  
These identities imply that parity transformations are always a symmetry of the Dirac equation. They are a symmetry of the Weyl equation only for odd-dimensional $K$. They preserve the inner product of spinors only for even-dimensional $K$.

\section{Conjugation symmetries on spinor bundles}
\label{ConjugationSymmetries}

 This section gives a succinct account of conjugation automorphisms of spinor bundles in arbitrary signature $(s,t)$. It describes the essential properties of these symmetries of the massless Dirac equation. It adopts a geometric approach and hides the basis-dependent constructions, with gamma matrices in special representations, that are generally used to prove the existence of the maps. The vector bundle conjugation maps are a modest extension of the vector space version, described in appendix \ref{ConventionsSpinors} using the traditional Majorana forms. The bundle version includes the relations of conjugations with covariant derivatives, the Kosmann-Lichnerowicz derivative and the Dirac operator. Conjugation maps in arbitrary signatures are well studied. See \cite{Figueroa, Stone, Hitoshi}, for instance. The purpose of this section is to provide a short and hopefully clean summary that covers the general signatures and the bundle-related properties.
 At the end, it also describes results that help to understand conjugation maps in the specific setting of Riemannian submersions and their relations with reflections and parity transformations.

Let $P$ be a general oriented, connected manifold equipped with a metric of signature $(s,t)$ and a spin structure. Let $\nabla$ denote the Levi-Civita connection on $TP$ and also its lift to the complex spinor bundle $S_{g_\PPP}$. Define the sets
\begin{equation}
\label{DefinitionSetConjMaps}
H_{s,t} \ := \ 
\begin{cases}
\{-1, 1 \}  &  \text{if $s-t$ is even} \\
\{ (-1)^{\frac{s-t+1}{2}} \}  & \text{if $s-t$ is odd} \ .
\end{cases}
\end{equation}
\begin{proposition}
\label{thm: ConjLinearAutomorphisms}
For each value $\sigma \in H_{s,t}$, there exists a conjugate-linear automorphism of spinors, denoted $j_\sigma: \Gamma (S_{g_\PPP} ) \rightarrow \Gamma (S_{g_\PPP} )$, with the following properties:
\begin{align}
j_\sigma (\psi_1 + f \, \psi _2)\ &= \ j_\sigma (\psi_1) \;  + \;  \conj{f} \,  j_\sigma (\psi _2)    \linebr
j_\sigma (V \cdot \psi) \ &=  \  \sigma \ V \cdot    j_\sigma (\psi)   \label{CommutatorConjCliffordMultiplication} \linebr
j_\sigma  j_\sigma (\psi) \ &= \ 
\begin{cases}
\; (-1)^{ \lfloor \frac{s-t}{4} \rfloor } \ (-\sigma)^{\frac{s-t}{2}} \, \psi &  \text{if $s-t$ is even} \\
\; (-1)^{\frac{(s-t)^2 -1}{8}} \, \psi & \text{if $s-t$ is odd}
\end{cases}      \label{SquareConjLinearAutomorphisms} \linebr
j_\sigma (\nabla_V \psi) \ &=\  \nabla_V (j_\sigma \, \psi)
\end{align}
for all spinors $\psi \in \Gamma (S_{g_\PPP} )$, all functions $f \in C^\infty (P; \CC)$ and all vector fields $V$ on $P$. This automorphism is unique up to multiplication by a constant complex phase.
\end{proposition}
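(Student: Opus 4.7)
My plan is to bootstrap the bundle-level statement from the fiberwise version of the conjugation maps summarized in appendix \ref{ConventionsSpinors}. At a single point of $P$, the spin representation $\Delta_{s+t}$ carries conjugate-linear maps $j_\sigma^0 : \Delta_{s+t} \to \Delta_{s+t}$ for each allowed sign $\sigma \in H_{s,t}$, satisfying $j_\sigma^0(v \cdot \psi) = \sigma\, v \cdot j_\sigma^0(\psi)$ for all $v \in \mathbb{R}^{s+t}$, together with the squaring identities \eqref{SquareConjLinearAutomorphisms}. These are the classical Majorana/charge-conjugation structures whose existence and classification by the residue $s-t \pmod{8}$ is recalled in the appendix. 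I would take those as given and concentrate on globalization and the new relations with $\nabla$.

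The key observation for globalizing is that $j_\sigma^0$ intertwines Clifford multiplication by any \emph{even} product of vectors \emph{without} the sign $\sigma$: applying the defining rule twice cancels $\sigma^2 = 1$. Since $\Spin(s,t) \subset \Cl^{0}(s,t)$ is generated by even products, $j_\sigma^0$ satisfies $j_\sigma^0 \circ \rho(g) = \rho(g) \circ j_\sigma^0$ for every $g \in \Spin(s,t)$, where $\rho$ denotes the complex spin representation. Writing $S_{g_\PPP} = P_{\Spin} \times_\rho \Delta_{s+t}$, I would define $j_\sigma$ on sections by $j_\sigma(\psi)(p) := j_\sigma^0(\psi(p))$ in any equivariant trivialization; the $\Spin$-equivariance of $j_\sigma^0$ then guarantees that this is well-defined independently of the trivialization, yielding a global conjugate-linear bundle automorphism that inherits \eqref{CommutatorConjCliffordMultiplication} and \eqref{SquareConjLinearAutomorphisms} directly from the fiber-level identities.

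For compatibility with the Levi-Civita lift, recall that in a local $g_P$-orthonormal frame the spinor covariant derivative reads $\nabla_V \psi = \partial_V \psi + \tfrac{1}{4}\, \omega^{ab}(V)\, e_a \cdot e_b \cdot \psi$ with \emph{real} connection coefficients $\omega^{ab}$. Each $e_a \cdot e_b$ is an even Clifford product, so by the same observation $j_\sigma$ commutes with it; combined with conjugate-linearity of $j_\sigma$ and reality of $\omega^{ab}$, this yields $j_\sigma(\nabla_V \psi) = \nabla_V (j_\sigma \psi)$. Uniqueness up to a constant phase follows from Schur's lemma: if $j'$ is a second such map, then $j' \circ j_\sigma^{-1}$ is a complex-linear bundle endomorphism commuting with Clifford multiplication (since both factors intertwine it with the same sign $\sigma$). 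Fiberwise irreducibility of $\Delta_{s+t}$ as a $\Cl(s,t) \otimes \CC$-module forces this endomorphism to be a scalar on each fiber; compatibility with $\nabla$ makes the scalar covariantly constant, and connectedness of $P$ promotes it to a global complex constant. The squaring identity applied to both $j_\sigma$ and $j'$ then forces the modulus of that constant to be one, leaving only a $\mathrm{U}(1)$ phase of freedom. The main subtlety of this plan is the globalization step --- specifically, verifying that the fiber-level $j_\sigma^0$ descends past the two-fold cover $\Spin \to \SO$ --- but this is exactly what the even-product observation delivers.
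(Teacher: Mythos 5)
Your proposal is correct and takes essentially the same route the paper has in mind: the paper constructs the fiber-level $j_\sigma$ in appendix~\ref{ConventionsSpinors} via Majorana forms (propositions~\ref{thm: MajoranaForms} and~\ref{thm: ConjLinearMaps}), then globalizes by remarking that $j_\sigma$ commutes with the $\Spin^0_{s,t}$-action and with the spinorial Levi-Civita lift, and asserts uniqueness up to a phase. You simply supply the details the paper leaves tacit — the even-Clifford-product observation for $\Spin$-equivariance, the reality of the local connection coefficients for $\nabla$-compatibility, and the Schur-plus-squaring argument for uniqueness.
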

A conjugation map with $j_\sigma  j_\sigma = 1$ is called a real structure on the spinor bundle. It allows the consistent imposition of the Majorana condition on spinors, $j_\sigma(\psi) = \psi$. A conjugation with $j_\sigma  j_\sigma = -1$ is called a quaternionic (or pseudo-real) structure on $S_{g_\PPP}$. It determines an action of the quaternions on spinors through the representation $(\mathbf{i}, \mathbf{j}, \mathbf{k}) \rightarrow (i, j_\sigma, i\, j_\sigma)$. The sign $\sigma$ tells us whether $j_\sigma$ commutes or anticommutes with Clifford multiplication and the Dirac operator.

Let $\cL_V \psi$ denote the Kosmann-Lichnerowicz derivative of the spinor $\psi$ along the vector field $V$. Let $\sD$ denote the standard Dirac operator on $S_{g_\PPP}$. Let $\Gamma_P$ denote the complex chirality operator on $S_{g_\PPP}$, normalized so that $\Gamma_P \Gamma_P = 1$. Then we have:
\begin{proposition}
\label{thm: PropertiesConjLinearAutomorphisms}
The automorphisms of proposition \ref{thm: ConjLinearAutomorphisms} have the additional properties:
\begin{align}
j_\sigma (\cL_V \psi) \ &= \ \cL_V (j_\sigma \psi)  \label{CommutatorConjKLDerivative}   \linebr
j_\sigma (\sD \,  \psi) \ &=  \   \sigma  \ \sD (j_\sigma \, \psi)   \label{CommutatorConjDiracOperator} \linebr
j_\sigma  ( \Gamma_P \, \psi) \ &= \ 
\begin{cases}
\; (-1)^{ \frac{s-t}{2}} \ \Gamma_P \; j_\sigma (\psi) &  \text{if $s-t$ is even} \\
\; \Gamma_P \; j_\sigma (\psi) & \text{if $s-t$ is odd} \ 
\end{cases}        \label{CommutatorConjChiralityOperator}  \linebr
j_{-}  (\psi) &= \ \zeta \; \Gamma_P \, j_+ (\psi)   \quad   \qquad \qquad  \text{if $s-t$ is even} \, .         \label{RelationTwoConjugations}
\end{align}
for all spinors $\psi \in \Gamma (S_{g_\PPP} )$, all vector fields $V$ on $P$, and some phase $\zeta \in U(1)$.
\end{proposition}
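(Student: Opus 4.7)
The plan is to derive \eqref{CommutatorConjKLDerivative}--\eqref{CommutatorConjChiralityOperator} by reducing each to a short algebraic computation in a local $g_P$-orthonormal frame $\{v_l\}_{l=1}^n$ on $P$, using only the four axioms of proposition \ref{thm: ConjLinearAutomorphisms}; and to obtain \eqref{RelationTwoConjugations} from the uniqueness clause of that same proposition. The main obstacle will be the signature-dependent sign bookkeeping for the chirality identity.

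For \eqref{CommutatorConjKLDerivative}, I would expand the Kosmann--Lichnerowicz derivative as in \eqref{KLDerivative},
\[
\cL_V \psi \; = \; \nabla_V \psi \; - \; \frac{1}{8} \sum_{j,k} c_{jk}(V)\, v_j \cdot v_k \cdot \psi, \qquad c_{jk}(V) := g(\nabla_{v_j} V, v_k) - g(\nabla_{v_k} V, v_j) \in \mathbb{R},
\]
and apply $j_\sigma$. The first term is preserved by the covariant-derivative axiom of proposition \ref{thm: ConjLinearAutomorphisms}; in the second, the real coefficients $c_{jk}(V)$ are fixed by conjugate-linearity, while each Clifford bilinear $v_j \cdot v_k \cdot$ contributes a factor $\sigma^2 = 1$ from two applications of \eqref{CommutatorConjCliffordMultiplication}. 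For \eqref{CommutatorConjDiracOperator}, write the Dirac operator in the local frame as $\sD \psi = \sum_l \epsilon_l\, v_l \cdot \nabla_{v_l} \psi$ with real $\epsilon_l = \pm 1$, and apply $j_\sigma$ termwise to pick up exactly one factor of $\sigma$ per summand from \eqref{CommutatorConjCliffordMultiplication}.

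For \eqref{CommutatorConjChiralityOperator}, write $\Gamma_P = \mu\, v_1 \cdot v_2 \cdots v_n$ with $\mu \in \mathbb{C}^\times$ the unique (up to sign) normalization making $\Gamma_P^2 = 1$. A standard Clifford computation gives $(v_1 \cdots v_n)^2 = (-1)^{n(n-1)/2}\, \prod_l \epsilon_l \in \{\pm 1\}$, which forces $\mu^2 \in \{\pm 1\}$ and hence $\bar\mu/\mu = \mu^{-2}$ is a sign. Applying $j_\sigma$ to $\Gamma_P \psi$ and invoking \eqref{CommutatorConjCliffordMultiplication} $n$ times yields
\[
j_\sigma(\Gamma_P \psi) \; = \; \mu^{-2}\, \sigma^n\, \Gamma_P\, j_\sigma(\psi).
\]
A case split on the parity of $n = s+t$, together with \eqref{DefinitionSetConjMaps} in the odd case (which pins $\sigma = (-1)^{(s-t+1)/2}$), collapses the prefactor to $(-1)^{(s-t)/2}$ when $s-t$ is even and to $1$ when $s-t$ is odd. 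This case-by-case sign audit is the most delicate but ultimately routine step, and it is the main obstacle in the proof.

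For \eqref{RelationTwoConjugations}, assume $s-t$ is even and define $\tilde\jmath(\psi) := \Gamma_P\, j_+(\psi)$. Conjugate-linearity is immediate since $\Gamma_P$ is $\mathbb{C}$-linear and $j_+$ is conjugate-linear. Because $\Gamma_P$ anticommutes with Clifford multiplication by every $V$ in even total dimension, and $j_+$ satisfies \eqref{CommutatorConjCliffordMultiplication} with $\sigma = +1$, one checks $\tilde\jmath(V \cdot \psi) = -V \cdot \tilde\jmath(\psi)$. Both $\Gamma_P$ and $j_+$ commute with $\nabla_V$, so does $\tilde\jmath$. Finally, using \eqref{CommutatorConjChiralityOperator} just proved for $\sigma = +1$ together with $\Gamma_P^2 = 1$, a short computation yields $\tilde\jmath \circ \tilde\jmath = (-1)^{\lfloor (s-t)/4 \rfloor}\, \mathrm{id}$, exactly the value of $j_- \circ j_-$ prescribed by \eqref{SquareConjLinearAutomorphisms} for $\sigma = -1$. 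Thus $\tilde\jmath$ meets all defining properties of $j_-$, and the uniqueness clause of proposition \ref{thm: ConjLinearAutomorphisms} delivers $\tilde\jmath = \zeta\, j_-$ for some phase $\zeta \in U(1)$, which is \eqref{RelationTwoConjugations}.
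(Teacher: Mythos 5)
Your proof is correct and follows essentially the same approach the paper implicitly uses: reduce everything to local-frame Clifford algebra built on the four defining axioms of proposition \ref{thm: ConjLinearAutomorphisms} (which in turn package the vector-space Majorana-form identities of appendix~\ref{ConventionsSpinors}), then invoke the uniqueness clause of that proposition for the final identity. Your handling of \eqref{CommutatorConjKLDerivative} and \eqref{CommutatorConjDiracOperator} via realness of the coefficients, your computation $j_\sigma(\Gamma_P\psi)=\mu^{-2}\sigma^n\Gamma_P\,j_\sigma(\psi)$ with the subsequent parity case split, and your verification that $\tilde\jmath:=\Gamma_P j_+$ satisfies the defining axioms for $j_-$ (in particular $\tilde\jmath^2=(-1)^{\lfloor(s-t)/4\rfloor}\mathrm{id}$) are all sound.

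One harmless slip worth flagging: under the paper's Clifford convention $v_j\cdot v_k + v_k\cdot v_j = -2\eta_{jk}$ (so $v_l^2 = -\eta_{ll} = -\epsilon_l$), the square should read $(v_1\cdots v_n)^2 = (-1)^{n(n-1)/2}(-1)^n\prod_l\epsilon_l$, carrying an extra factor $(-1)^n$ relative to what you wrote. This does not affect your argument, because you only use that expression to conclude $(v_1\cdots v_n)^2\in\{\pm1\}$, hence $\mu^2\in\{\pm1\}$ and $\bar\mu/\mu=\mu^{-2}$ is a sign — all of which remain true — and your subsequent case analysis correctly uses the explicit normalization $\mu=i^{\lfloor(s-t+1)/2\rfloor}$ from \eqref{DefinitionChiralOperator}.
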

Thus, conjugation maps are always symmetries of the Dirac equation $\sD \psi = 0$. They preserve the condition $\Gamma_P \, \psi = \psi$ in all signatures except $s-t = 2$ (mod  4).

 It is also possible to write useful relations between the conjugation maps $j_\sigma$ and the spinor inner products defined by \eqref{GeneralInnerProduct}.
These relations follow directly from lemma \ref{thm: MorePropertiesConjLinearMaps} and property \eqref{CommutatorConjDiracOperator}.
\begin{proposition}
\label{thm: MorePropertiesConjLinearAutomorphisms}
The maps $j_\sigma$ satisfy:
\begin{align}
\langle j_\sigma \, \psi_1 , \,  j_\sigma \, \psi_2 \rangle \ &= \ 
\begin{cases}
(-1)^{\frac{t(s+1)}{2}}\, \sigma^t \;   \conj{\langle \psi_1 , \, \psi_2 \rangle}   &  \text{if $s-t$ is even} \\ 
(-1)^{\frac{st}{2}}\; \conj{\langle \psi_1 , \, \psi_2 \rangle}   &  \text{if $s-t$ is odd} 
\end{cases}       \label{InnerProductConjLinearAutomorphisms}  \linebr
\langle j_\sigma \, \psi_1 , \,  \sD\,  j_\sigma \, \psi_2 \rangle \ &= \ 
\begin{cases}
(-1)^{\frac{t(s+1)}{2}}\, \sigma^{t+1} \;   \conj{\langle \psi_1 , \, \sD\, \psi_2 \rangle}   &  \text{if $s-t$ is even} \\ 
(-1)^{\frac{st + s - t +1}{2}}  \; \conj{\langle \psi_1 , \, \sD \, \psi_2 \rangle}   &  \text{if $s-t$ is odd} 
\end{cases} 
\end{align}
for all spinors $\psi_1, \psi_2 \in \Gamma (S_{g_\PPP} )$ and for the inner product in \eqref{GeneralInnerProduct}.
\end{proposition}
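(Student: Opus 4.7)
The remark at the end of the excerpt already states the strategy: both identities follow from the pointwise lemma \ref{thm: MorePropertiesConjLinearMaps} together with the intertwining property \eqref{CommutatorConjDiracOperator}. I would organize the argument in exactly that order.

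\textbf{Step 1 — the pairing identity.} The spinor pairing $\langle\cdot,\cdot\rangle$ of \eqref{GeneralInnerProduct} is fibrewise, and proposition \ref{thm: ConjLinearAutomorphisms} constructs $j_\sigma$ as a bundle map whose restriction to each fibre $(S_{g_\PPP})_p \cong \Delta_{s+t}$ is a conjugate-linear map satisfying \eqref{CommutatorConjCliffordMultiplication}--\eqref{SquareConjLinearAutomorphisms}. So the first identity reduces at each point to its vector-space version, which is the content of lemma \ref{thm: MorePropertiesConjLinearMaps}: the lemma yields the prefactor $(-1)^{t(s+1)/2}\sigma^t$ in the even case and $(-1)^{st/2}$ in the odd case. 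Invoking the lemma pointwise, and noting that these signs are constants on the connected manifold $P$, gives the first line of the proposition.

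\textbf{Step 2 — the Dirac pairing identity.} Apply the identity just proved to the pair $(\psi_1, \sD\psi_2)$, so that
\begin{equation*}
\langle j_\sigma \psi_1,\; j_\sigma(\sD \psi_2)\rangle \ = \ \epsilon(s,t,\sigma)\; \overline{\langle \psi_1, \sD \psi_2 \rangle},
\end{equation*}
where $\epsilon(s,t,\sigma)$ is the sign obtained in Step 1. Now feed in \eqref{CommutatorConjDiracOperator}, i.e.\ $j_\sigma\sD = \sigma\sD j_\sigma$, to rewrite the left-hand side as $\sigma\,\langle j_\sigma\psi_1, \sD(j_\sigma\psi_2)\rangle$. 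In both parities $\sigma^2 = 1$ (for $s-t$ even $\sigma=\pm 1$ by definition, and for $s-t$ odd $\sigma=(-1)^{(s-t+1)/2}$ is a sign because $s-t+1$ is even), so multiplying through by $\sigma$ gives the Dirac pairing identity with overall prefactor $\sigma\,\epsilon$.

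\textbf{Step 3 — matching the signs.} I would then check that $\sigma\cdot\epsilon$ reproduces the two cases in the statement. For $s-t$ even this is immediate: $\sigma\cdot\sigma^t = \sigma^{t+1}$, so the factor becomes $(-1)^{t(s+1)/2}\sigma^{t+1}$. For $s-t$ odd the factor is $(-1)^{(s-t+1)/2}\cdot(-1)^{st/2}=(-1)^{(st+s-t+1)/2}$, as claimed.

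\textbf{Main obstacle.} There is no conceptual difficulty; once the pointwise lemma is in place the argument is purely formal. The genuinely delicate point is bookkeeping of the signs in Step 3 — particularly in the odd case, where the factor $\sigma$ absorbed from \eqref{CommutatorConjDiracOperator} is itself a signed power of $i$ that must be combined with a half-integer exponent — and ensuring that the normalization conventions for $j_\sigma$ used in proposition \ref{thm: ConjLinearAutomorphisms} agree with those of the Majorana forms in lemma \ref{thm: MorePropertiesConjLinearMaps}. This compatibility check with the conventions of appendix \ref{ConventionsSpinors} is the one step where I would recompute carefully rather than trust a one-line derivation.
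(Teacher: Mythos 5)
Your proposal is correct and matches the paper's own derivation, which is stated in one line: the identities "follow directly from lemma \ref{thm: MorePropertiesConjLinearMaps} and property \eqref{CommutatorConjDiracOperator}." Your Steps 1--3 are precisely that argument made explicit, and the sign bookkeeping in Step 3 checks out; the only small slip is in the closing commentary, where you describe the absorbed $\sigma$ in the odd case as a "signed power of $i$ ... with a half-integer exponent" — in fact $(s-t+1)/2$ is an integer when $s-t$ is odd, so $\sigma$ is a genuine sign, exactly as you had already used in the computation itself.
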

Now suppose that the tangent bundle of $P$ decomposes as a sum of two orthogonal subbundles, $TP = E_1 \oplus E_2$. Assume that the metric inherited by $E_l$ is non-degenerate with signature $(s_l, t_l)$. Choose orientations on the summands $E_l$ compatible with the orientation of $TP$. If one of the subbundles has even rank, we have an isomorphism of spinor bundles $S_{g_\PPP} \simeq S (E_1) \otimes S (E_2)$. There is also an inherited Clifford multiplication of vector fields in $E_l$ and spinors in $S (E_l)$. In these conditions, the conjugation automorphisms of $S_{g_\PPP}$ can be reconstructed using the conjugations in the two factors $S (E_l)$.

\begin{proposition}
\label{FactorizationConjugations}
Choose a sign $\sigma \in H_{s,t}$. When $E_1$ has even rank, the conjugation automorphism $j_\sigma$ of proposition \ref{thm: ConjLinearAutomorphisms} can be decomposed as
\begin{equation}
j_\sigma  \ =\   j^1_{\sigma} \otimes j^2_{\sigma'}
\end{equation}
under the isomorphism $S_{g_\PPP} \simeq S (E_1) \otimes S (E_2)$. Here we have defined $\sigma' = (-1)^{\frac{s_1 -t_1}{2}} \sigma$. It can be checked that $\sigma' \in H_{s_2, t_2}$.
\end{proposition}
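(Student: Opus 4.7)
The plan is to build the candidate map $\tilde{\jmath} := j^1_\sigma \otimes j^2_{\sigma'}$ as a conjugate-linear endomorphism of $S(E_1)\otimes S(E_2)\simeq S_{g_\PPP}$, verify it satisfies the defining properties of $j_\sigma$ listed in proposition \ref{thm: ConjLinearAutomorphisms}, and then invoke the uniqueness-up-to-phase clause of that proposition to identify $\tilde{\jmath}$ with $j_\sigma$. The square relation \eqref{SquareConjLinearAutomorphisms} will then be automatic, since any two maps differing by a constant phase have the same square.

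First I would dispatch the parenthetical claim $\sigma'\in H_{s_2,t_2}$ by a short case analysis. Because $E_1$ has even rank, $s_1-t_1$ is even and $(-1)^{(s_1-t_1)/2}$ is a genuine sign. If $s-t$ is even then $s_2-t_2$ is also even and $H_{s_2,t_2}=\{\pm 1\}$ contains $\sigma'=\pm\sigma$ trivially. If $s-t$ is odd then $s_2-t_2$ is odd, and substituting $\sigma=(-1)^{(s-t+1)/2}$ together with the identity $s-t=(s_1-t_1)+(s_2-t_2)$ reduces $\sigma'$ to $(-1)^{(s_2-t_2+1)/2}$, matching the prescription in \eqref{DefinitionSetConjMaps}.

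The heart of the argument is the Clifford compatibility \eqref{CommutatorConjCliffordMultiplication} for $\tilde{\jmath}$. Conjugate linearity is automatic from the two factors. For $V\in\Gamma(E_1)$ the tensor-product Clifford rule of \eqref{CliffordMultiplication} gives $V\cdot(\psi_1\otimes\psi_2)=(V\cdot\psi_1)\otimes\psi_2$, so the sign $\sigma$ comes immediately from $j^1_\sigma$. For $V\in\Gamma(E_2)$ the chirality twist in \eqref{CliffordMultiplication} yields $V\cdot(\psi_1\otimes\psi_2)=(\Gamma_{E_1}\psi_1)\otimes(V\cdot\psi_2)$; applying $\tilde{\jmath}$ and using identity \eqref{CommutatorConjChiralityOperator} for $j^1_\sigma$ (with even $s_1-t_1$) produces a factor $(-1)^{(s_1-t_1)/2}$, while $j^2_{\sigma'}$ contributes $\sigma'$, and by construction $(-1)^{(s_1-t_1)/2}\sigma'=\sigma$. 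So $\tilde{\jmath}(V\cdot\Psi)=\sigma\,V\cdot\tilde{\jmath}(\Psi)$ for every $V\in\Gamma(TP)$.

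Compatibility of $\tilde{\jmath}$ with the Levi-Civita connection on $S_{g_\PPP}$ then follows without requiring $E_1$ or $E_2$ to be parallel subbundles: in any local orthonormal frame the spin connection is a sum of a directional derivative and a term quadratic in Clifford generators, and each quadratic product $e_i\cdot e_j$ commutes with $\tilde{\jmath}$ because the two signs $\sigma^2=1$ cancel. Having checked conjugate linearity, the Clifford rule and the connection commutation, proposition \ref{thm: ConjLinearAutomorphisms} forces $\tilde{\jmath}=j_\sigma$ up to a global phase, which can be absorbed into the intrinsic phase ambiguities of $j^1_\sigma$ and $j^2_{\sigma'}$. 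The main obstacle I anticipate is purely sign bookkeeping: one has to keep careful track of the parities of $s-t$ and $s_1-t_1$ when invoking \eqref{CommutatorConjChiralityOperator} and \eqref{SquareConjLinearAutomorphisms}, but the algebra is finite and explicit, and the restriction that $E_1$ has even rank is exactly what makes the chirality operator $\Gamma_{E_1}$ behave cleanly in the argument above.
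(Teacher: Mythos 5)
Your strategy — build the candidate $\tilde{\jmath}=j^1_\sigma\otimes j^2_{\sigma'}$, verify the axioms of proposition \ref{thm: ConjLinearAutomorphisms}, and invoke uniqueness — is the natural one, and the paper states proposition \ref{FactorizationConjugations} without proof, so there is no alternative argument to compare against. Your sign bookkeeping is correct: $\sigma'\in H_{s_2,t_2}$ follows as you say, and the computation $(-1)^{(s_1-t_1)/2}\,\sigma'=\sigma$ makes the Clifford compatibility \eqref{CommutatorConjCliffordMultiplication} hold for vectors in $E_2$ via the chirality twist in \eqref{CliffordMultiplication} and identity \eqref{CommutatorConjChiralityOperator}.

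Two steps are stated too loosely and should be tightened. First, the covariant-derivative check. The throwaway remark that things work ``without requiring $E_1$ or $E_2$ to be parallel subbundles'' dodges the actual point. In a local frame adapted to $E_1\oplus E_2$, $\tilde{\jmath}$ is a \emph{constant} conjugate-linear matrix on $\Delta_{n_1}\otimes\Delta_{n_2}$ (being the tensor product of two constant conjugation matrices), so it commutes with the directional-derivative term in \eqref{StandardSpinorCovariantDerivative}; and it commutes with the quadratic Clifford term because the two $\sigma$'s cancel, even for cross terms $v_i\cdot v_j$ with $v_i\in E_1$, $v_j\in E_2$. Both halves of that argument rely on using frames adapted to the splitting and on the $\Spin(s_1,t_1)\times\Spin(s_2,t_2)$-equivariance of the conjugation maps, and that is what should be said. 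Second, your dismissal of the square relation \eqref{SquareConjLinearAutomorphisms} is slightly circular as phrased: the uniqueness-up-to-phase clause of proposition \ref{thm: ConjLinearAutomorphisms} characterizes maps satisfying \emph{all four} listed properties, including the square. Dropping the square property, properties (1), (2), (4) alone only determine $j_\sigma$ up to a nonzero real scalar, not up to a phase. What saves you is that $\tilde{\jmath}^2=(j^1_\sigma)^2\otimes(j^2_{\sigma'})^2=\pm 1$ is manifestly a sign; combined with $j_\sigma^2=\pm 1$ this forces the proportionality constant between $\tilde{\jmath}$ and $j_\sigma$ to have unit modulus, and then $\tilde{\jmath}^2=j_\sigma^2$ is automatic. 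Say that explicitly rather than appealing to ``automatic''.
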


\subsection*{Composition of conjugations and parity transformations}

Suppose now that  $P$ is a spacetime of the form $M\times K$, where $M$ is an even-dimensional Minkowski space. As discussed in section \ref{ReflectionsParity}, reflections and parity inversions on $M$ extend to diffeomorphisms of $P$ that can be lifted to transformations of spinors over $P$. One can verify that those transformations commute with conjugations of spinors, essentially.

\begin{lemma}  
Fix the sign $\sigma$ and the corresponding conjugate-linear automorphisms $j_\sigma$ of $S_{g_\PPP}$, $S_{R^\ast g_\PPP}$ and $S_{\parity^\ast g_\PPP}$. For phase choices in definitions \eqref{LocalSpinorReflections} and \eqref{SpinorParityReflections} satisfying $e^{2i \xi} = \sigma$ and $e^{2i \zeta} = 1$, all spinor reflections and parity transformations commute with $j_\sigma$.
\end{lemma}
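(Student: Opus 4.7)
My plan is to work everywhere in the compatible trivializations of $TP$, $S_{g_\PPP}$, $S_{\RR^\ast g_\PPP}$ and $S_{\sparity^\ast g_\PPP}$ used in section \ref{ReflectionsParity} and appendix \ref{ProofsReflectionsParity} to derive the local formulas \eqref{LocalSpinorReflections} and \eqref{LocalSpinorParity}. The essential point is that in these trivializations Clifford multiplication on all three spinor bundles is realized by the same fixed gamma matrices $\Gamma_l$ on $\Delta_{m+k}$. From the characterization \eqref{CommutatorConjCliffordMultiplication}, once a constant phase of $j_\sigma$ on $S_{g_\PPP}$ has been chosen, the conjugation is represented locally by a conjugate-linear map $\Psi_\UU \mapsto B\,\overline{\Psi_\UU}$, where $B$ is a matrix on $\Delta_{m+k}$ satisfying $B\,\overline{\Gamma_l} = \sigma\,\Gamma_l\, B$ for every $l$. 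Because these defining relations depend only on the Clifford algebra and not on the metric that realises it, the same matrix $B$ represents $j_\sigma$ on $S_{\RR^\ast g_\PPP}$ and $S_{\sparity^\ast g_\PPP}$ in the matching trivializations.

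With this setup, commutativity with a single reflection is a brief calculation. Using \eqref{LocalSpinorReflections}, the conjugate-linearity of $j_\sigma$ and the intertwining $B\,\overline{\Gamma_\nu} = \sigma\,\Gamma_\nu\, B$,
\begin{align*}
\bigl(j_\sigma\, R\, \Psi\bigr)_{\UU}(x,y) \ &= \ B\,\overline{e^{i\xi}\,\Gamma_\nu\,\Psi_{\UU}(R(x),y)} \ = \ e^{-i\xi}\,\sigma\,\Gamma_\nu\, B\,\overline{\Psi_{\UU}(R(x),y)}, \linebr
\bigl(R\, j_\sigma\, \Psi\bigr)_{\UU}(x,y) \ &= \ e^{i\xi}\,\Gamma_\nu\, B\,\overline{\Psi_{\UU}(R(x),y)}.
\end{align*}
These two expressions coincide precisely when $e^{2i\xi} = \sigma$, which is the first hypothesis of the lemma. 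Hence $R\,j_\sigma = j_\sigma\, R$.

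The parity case follows by iterating the same manipulation on \eqref{LocalSpinorParity}. Commuting $B$ through the string $\overline{\Gamma_1}\cdots\overline{\Gamma_{m-1}}$ produces a factor $\sigma^{m-1}$, so
\begin{align*}
\bigl(j_\sigma\, \parity\, \Psi\bigr)_{\UU}(x,y) \ &= \ e^{-i[\zeta + (m-1)\xi]}\,\sigma^{m-1}\,\Gamma_1\cdots\Gamma_{m-1}\, B\,\overline{\Psi_{\UU}(\parity(x),y)}, \linebr
\bigl(\parity\, j_\sigma\, \Psi\bigr)_{\UU}(x,y) \ &= \ e^{i[\zeta + (m-1)\xi]}\,\Gamma_1\cdots\Gamma_{m-1}\, B\,\overline{\Psi_{\UU}(\parity(x),y)}.
\end{align*}
Equality is equivalent to $\sigma^{m-1} = e^{2i\zeta}\,(e^{2i\xi})^{m-1}$, which under the hypotheses $e^{2i\xi} = \sigma$ and $e^{2i\zeta} = 1$ collapses to the tautology $\sigma^{m-1} = \sigma^{m-1}$. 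Therefore $\parity\, j_\sigma = j_\sigma\, \parity$.

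The one step that requires genuine care, and which I expect to be the main obstacle, is the opening claim that a single conjugate-linear map with matrix $B$ represents $j_\sigma$ on each of the three bundles $S_{g_\PPP}$, $S_{\RR^\ast g_\PPP}$, $S_{\sparity^\ast g_\PPP}$ simultaneously. This amounts to showing that the trivializations constructed in appendix \ref{ProofsReflectionsParity} for the three bundles over $M \times \UU$ can all be chosen so that the abstract generators $\Gamma_l$ act identically, and that the residual phase ambiguity in $j_\sigma$ on each bundle can be absorbed once and for all into a single overall normalisation. Once this compatibility is established, the remainder of the proof is the mechanical local computation above.
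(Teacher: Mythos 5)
Your proof is correct and follows what is evidently the paper's intended route (the paper states this lemma without a separate appendix proof, treating it as a direct computation in local trivializations). The core calculation is right: commuting $j_\sigma = B\,\overline{(\;\cdot\;)}$ through the factor $e^{i\xi}\,\Gamma_\nu$ in \eqref{LocalSpinorReflections} produces the condition $\sigma = e^{2i\xi}$, and iterating through the string $\Gamma_1\cdots\Gamma_{m-1}$ of \eqref{LocalSpinorParity} then forces $e^{2i\zeta} = 1$ once the reflection phase is already fixed.

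The step you flagged as the main obstacle — that a single matrix $B$ represents $j_\sigma$ simultaneously on $S_{g_\PPP}$, $S_{\RR^\ast g_\PPP}$ and $S_{\sparity^\ast g_\PPP}$ in the matching trivializations — is indeed the only part that requires an argument, but it is already supplied by the structure set up in appendix \ref{ProofsReflectionsParity}. There, the trivializations of all three spinor bundles over $M\times\UU_\alpha$ are induced from orthonormal frames $\{X_\mu^\HH, v_j^\alpha\}$ and $\{\tilde X_\mu^\HH, \tilde v_j^\alpha\}$ of $TP$, and in both cases the Clifford action is the same fixed set of matrices $\Gamma_l$ on $\Delta_{m+k}$ via \eqref{CliffordMultiplication}. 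The matrix $B$ on $\Delta_{m+k}$ is then determined (up to one overall phase) by $B\,\overline{\Gamma_l} = \sigma\,\Gamma_l\,B$; this forces $B$ to commute with the real linear span of the bivectors $\Gamma_i\Gamma_j$, hence with the $\Spin^0(m-1,1)\times\Spin^0(k)$-valued transition functions $I\otimes B_{\alpha\beta}$ and $I\otimes R^\ast B_{\alpha\beta}$ of the three bundles, so the local expression $\Psi_\UU\mapsto B\,\overline{\Psi_\UU}$ patches consistently and defines the same $j_\sigma$ on each bundle once a single normalizing phase of $B$ is chosen. With that observation made explicit, your proof is complete.
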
 
Choosing such a value for the phase of $\parity$, we now consider the composed maps $j_\sigma \, \parity = \parity \, j_\sigma$ between spinors in $S_{g_\PPP}$ and spinors in $S_{\sparity^\ast g_\PPP}$.
\begin{proposition}
\label{thm: MorePropertiesCP}
The composed maps $j_\sigma \parity$ satisfy:
\begin{align}
j_\sigma \parity\, (\sD \,  \Psi) \ &=  \  - \, \sigma  \ \sD (j_\sigma \parity \, \Psi)   \label{CommutatorCPDiracOperator} \linebr
j_\sigma \parity \, ( \Gamma_P \, \Psi) \ &= \ 
\begin{cases}
\; (-1)^{ \frac{m+k}{2}} \ \Gamma_P \; ( j_\sigma \parity \, \Psi) &  \text{if $k$ is even} \\
\; \Gamma_P \; ( j_\sigma \parity \, \Psi) & \text{if $k$ is odd} \ 
\end{cases}    \label{CommutatorCPChiralityOperator}   \linebr
\langle j_\sigma \parity \, \psi_1 , \,  j_\sigma  \parity \, \psi_2 \rangle \ &= \ 
\begin{cases}
(-1)^{\frac{m+k+2}{2}}\, \sigma \;   \conj{\langle \Psi_1 , \, \Psi_2 \rangle}   &  \text{if $k$ is even} \\ 
(-1)^{\frac{m+k+1}{2}}\; \conj{\langle \Psi_1 , \, \Psi_2 \rangle}   &  \text{if $k$ is odd} 
\end{cases}  \, 
\end{align}
for all spinors $\Psi_1, \Psi_2 \in \Gamma (S_{g_\PPP} )$ and for the inner product in \eqref{GeneralInnerProduct}.
\end{proposition}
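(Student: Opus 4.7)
The plan is to derive each of the three identities by combining the parity identities \eqref{PropertiesParitySpinors} with the conjugation identities \eqref{CommutatorConjDiracOperator}, \eqref{CommutatorConjChiralityOperator} and \eqref{InnerProductConjLinearAutomorphisms}, freely commuting $j_\sigma$ past $\parity$ via the preceding lemma (with the stated phase choices $e^{2i\xi} = \sigma$, $e^{2i\zeta} = 1$). The only bookkeeping fact required is that on $P = M \times K$ with $M$ Lorentzian of even dimension $m$ and $K$ Riemannian of dimension $k$, the signature of $g_P$ is $(s,t) = (m+k-1, 1)$, so $s - t = m+k-2$ has the same parity as $k$. This lets us translate the $(s,t)$-dependent signs of Section \ref{ConjugationSymmetries} into the $(m,k)$-dependent signs stated here.

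For the Dirac operator identity, simply chain
\begin{equation*}
j_\sigma \parity \, (\sD \Psi) \ = \ j_\sigma \bigl(- \sD (\parity \Psi)\bigr) \ = \ -\sigma\, \sD (j_\sigma \parity \Psi),
\end{equation*}
applying \eqref{PropertiesParitySpinors} in the first step and \eqref{CommutatorConjDiracOperator} in the second.

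For the chirality identity, apply \eqref{PropertiesParitySpinors} followed by \eqref{CommutatorConjChiralityOperator}. When $k$ is even, the product of the two signs is $(-1)^{k+1} \cdot (-1)^{(s-t)/2} = (-1)^{k+1}(-1)^{(m+k-2)/2} = (-1)^{(m+k)/2}$; when $k$ is odd, both factors are $+1$, giving the second case. For the inner product identity, use \eqref{PropertiesParitySpinors} to turn $\langle \parity\Psi_1, \parity\Psi_2\rangle$ into $-\langle \Psi_1, \Psi_2\rangle \circ \parity$ (the composition with $\parity$ being implicit in the statement), then apply \eqref{InnerProductConjLinearAutomorphisms} with $s = m+k-1$ and $t=1$. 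For $k$ even the conjugation factor is $(-1)^{t(s+1)/2}\sigma^t = (-1)^{(m+k)/2}\sigma$, which combined with the $-1$ above produces $(-1)^{(m+k+2)/2}\sigma$. For $k$ odd the conjugation factor is $(-1)^{st/2} = (-1)^{(m+k-1)/2}$, which combined with the $-1$ produces $(-1)^{(m+k+1)/2}$.

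The main obstacle is purely sign bookkeeping --- making sure the signature-dependent factors from Propositions \ref{thm: PropertiesConjLinearAutomorphisms} and \ref{thm: MorePropertiesConjLinearAutomorphisms} are correctly specialised to $(s,t) = (m+k-1, 1)$ and cleanly absorbed into the exponents $(m+k)/2$, $(m+k+1)/2$, $(m+k+2)/2$. There is no substantive analytical content beyond what has already been established in the quoted propositions and the commutativity lemma.
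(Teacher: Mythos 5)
Your proof is correct and takes essentially the same approach the paper implies (the paper states the proposition without an explicit proof, leaving it as a direct combination of Propositions \ref{thm:PropertiesParitySpinors}, \ref{thm: PropertiesConjLinearAutomorphisms} and \ref{thm: MorePropertiesConjLinearAutomorphisms}). The key bookkeeping you identify --- that $(s,t)=(m+k-1,1)$ so $s-t = m+k-2$ has the parity of $k$ since $m$ is even --- is exactly what is needed, and the sign arithmetic in all three cases checks out. One small remark: you invoke the commutativity lemma $j_\sigma\parity = \parity j_\sigma$ as a tool, but your actual chain of equalities never needs it, since the composite $j_\sigma\parity$ is read right-to-left throughout (apply $\parity$, then $j_\sigma$ on $S_{\sparity^\ast g_\PPP}$); the lemma only guarantees the other order would give the same map.
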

Thus, the transformations $j_\sigma \parity$ are always symmetries of the Dirac equation $\sD \,  \Psi = 0$ in higher dimensions. They also preserve the Weyl condition $\Gamma_P \Psi = \Psi$ except when $m+k = 2$ (mod 4). That condition is usually imposed in Kaluza-Klein models with even-dimensional $K$ in order to obtain a correlation between the Minkowski and internal chiralities of spinors, as described in \cite{Witten83, Bap3}.

\section{Actions of compact groups on $K$ and its spinors}
\label{ActionsCompactGroups}

\subsection*{Lie derivatives of spinors along fundamental vector fields}

Let $K$ be an orientable, connected manifold with a fixed topological spin structure. For any given Riemannian metric $g$ on $K$, we have the complex spinor bundle $S_g \rightarrow K$. For any vector field $V$ on $K$, the Kosmann-Lichnerowicz derivative $\cL_V$ acts on sections of that bundle as in \eqref{KLDerivative}. It satisfies the formula
 \begin{align}
\label{NonClosureKLDerivative}
( \, [\cL_U, \cL_V] \, -\, \cL_{[U, V]} \,  ) \, \psi  \ =& \  \frac{1}{4}\, g^{ir} g^{js}  g^{kl} \, \big\{ \, (\Lie_U g)(v_r , v_k) \;  (\Lie_V g)(v_s , v_l)  \linebr 
 &\, - \;  (\Lie_U g)(v_s , v_k) \;  (\Lie_V g)(v_r, v_l)  \, \big\} \; v_i \cdot v_j \cdot \psi  \ . \nonumber
\end{align}
So the closure relation $[\cL_U, \cL_V] = \cL_{[U, V]}$ is satisfied when $U$ or $V$ are conformal Killing with respect to $g$, but not in general \cite{Kosmann}. In this section we show that, when a compact group acts on $K$, there is a natural modification of the Kosmann-Lichnerowicz derivative that satisfies the closure relation for all fundamental vector fields of the action. Even if they are not conformal Killing.
 
Let $G$ be a compact, connected Lie group. Suppose that $G$ acts on $K$ on the left through diffeomorphisms that preserve the orientation and the topological spin structure, but not necessarily the metric $g$. Denote by $\mathfrak{g}$ the Lie algebra of fundamental vector fields on $K$ induced by the $G$-action. These need not be Killing or conformal Killing vector fields with respect to $g$. Then we have the following result:
\begin{proposition}
There are natural derivatives of spinors, $\sL_V :  \Gamma (S_g ) \rightarrow \Gamma (S_g)$, with 
\begin{align}
\sL_{U + \lambda V} \, \psi  \ &= \ \sL_{U} \, \psi  \: + \:  \lambda\,  \sL_{V} \, \psi   \label{NewDerivativeLinearity}  \linebr
\sL_{V} \, ( \psi_1 + f\, \psi_2)  \ &= \  \sL_{V} \,  \psi_1 \: + \:   f\, \sL_{V} \, \psi_2  \: + \:   (\dd f)(V) \, \psi_2    \label{NewDerivativeDerivation}     \linebr
\Lie_V \langle  \psi_1, \, \psi_2 \rangle \ &= \ \langle  \sL_V \psi_1, \, \psi_2 \rangle \: + \: \langle  \psi_1, \, \sL_V \psi_2 \rangle  \label{NewDerivativeInnerProduct} 
\end{align}
for all vector fields $U$ and $V$ on $K$, all $\lambda \in \mathbb{R}$ and $f \in C^\infty (K, \CC)$, that also satisfy
\beq 
[\sL_{U} , \sL_{V} ]  \, \psi  \ = \  \sL_{[U, V]}  \, \psi      \label{NewDerivativeClosure} 
\eeq
when $U$ or $V$ are fundamental fields in $\mathfrak{g}$. The operator $\sL_V$ coincides with the Kosmann-Lichnerowicz derivative when $V \in  \mathfrak{g}$ is conformal Killing with respect to $g$.
\end{proposition}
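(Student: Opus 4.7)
The plan is to reduce everything to the classical case of Lie derivatives of spinors along Killing fields, by averaging the metric. Since $G$ is compact, one can average $g$ over $G$ with respect to Haar measure to obtain a $G$-invariant Riemannian metric $\hg$ on $K$; every fundamental field $V \in \lieg$ is then Killing for $\hg$. Invoke the canonical bundle map $\alpha : S_g \to S_{\hg}$ of appendix \ref{SpinorsDifferentMetrics}: it is $C^\infty(K, \CC)$-linear, covers the identity on $K$, intertwines Clifford multiplications appropriately, and preserves the Hermitian inner product of spinors. Then set
\[
\sL_V \, \psi \ := \ \alpha^{-1}\bigl(\, \hcL_V\, ( \alpha \psi)\, \bigr) \, ,
\]
where $\hcL$ is the Kosmann--Lichnerowicz derivative on $S_{\hg}$.

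With this definition, the first three properties are structural. Linearity \eqref{NewDerivativeLinearity} in $V$ is immediate. The Leibniz rule \eqref{NewDerivativeDerivation} is inherited from $\hcL$ via the $C^\infty$-linearity of $\alpha$. For the inner product identity \eqref{NewDerivativeInnerProduct}, note that $\hcL_V$ differs from $\hat{\nabla}_V$ by an algebraic Clifford action on a 2-form, which is skew-Hermitian, while $\hat{\nabla}$ itself is metric; hence $\hcL_V$ satisfies the inner product identity with respect to the $\hg$-spinor product for \emph{every} vector field $V$, not only for Killing ones. Because $\alpha$ preserves Hermitian products, the identity transports back to $S_g$ and the $g$-spinor product for all $V$.

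The closure property \eqref{NewDerivativeClosure} is where the $G$-invariance of $\hg$ is essential. Applying the non-closure identity for $\cL$ displayed just above the statement, now to the metric $\hg$, gives $[\hcL_U, \hcL_V] = \hcL_{[U, V]}$ whenever $U$ or $V$ is $\hg$-conformal Killing, which by construction includes every field of $\lieg$. Conjugating through $\alpha$ yields $[\sL_U, \sL_V] = \sL_{[U, V]}$ under the same hypothesis. For the coincidence claim on conformal Killing fields, I would verify directly that the correction $\tau_V$ in the explicit formula $\sL_V \psi = \cL_V \psi + \tau_V \psi$ vanishes when $V \in \lieg$ is $g$-conformal Killing; in that case both $\Lie_V g = \lambda \, g$ and $\Lie_V \hg = 0$ hold, and the construction of $\alpha$ from $g$ and $\hg$ in appendix \ref{SpinorsDifferentMetrics} forces $\Lie_V \alpha = 0$, so $\tau_V$ vanishes.

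The main technical obstacle is to bring the abstract definition $\alpha^{-1} \hcL_V \alpha$ into the explicit closed form $\cL_V + \tau_V$ of the excerpt. One must expand $\hcL_V(\alpha \psi)$ with the Leibniz rule for $\hat{\nabla}$, push $\alpha^{-1}$ across Clifford multiplications, identify the difference $\alpha^{-1} \hat{\nabla}_V \alpha - \nabla_V$ as a Clifford action of a specific 2-form, and then collect it together with the mismatch between the skew 2-form Clifford terms in $\hcL$ and $\cL$ into the single expression built from $\Lie_V \alpha$. This identification is precisely what the formulas of appendix \ref{SpinorsDifferentMetrics} are set up to deliver; once it is in place, the rest of the proposition follows structurally from the fact that averaging upgrades $\lieg$ to an algebra of genuine Killing fields of $\hg$.
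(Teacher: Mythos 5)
Your construction is exactly the one the paper uses: average $g$ to a $G$-invariant metric $\hg$, note that all of $\lieg$ is Killing for $\hg$, transport the Kosmann--Lichnerowicz derivative $\hcL$ through the canonical bundle isometry $\alpha: S_g \to S_{\thg}$, and identify $\sL_V = \cL_V + \tau_V$ via the transport formula \eqref{TransportedKosmannDerivative}. The verification of linearity, the Leibniz rule, the inner product identity, and closure is also the same, so the bulk of your argument matches the paper's proof step for step.

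The one genuine flaw is in your justification of the final coincidence claim. You assert that $\Lie_V g = \lambda g$ together with $\Lie_V \hg = 0$ ``forces $\Lie_V \alpha = 0$.'' That is false in general: the computation in the proof of proposition \ref{InvarianceKLDerivativesKilling} (with $f = \lambda$ and $f_\alpha = 0$) gives $\alpha^{-1}(\Lie_V \alpha) = \tfrac{\lambda}{2}\, I$, i.e. $\Lie_V \alpha = \tfrac{\lambda}{2}\,\alpha$, which is nonzero whenever the conformal factor $\lambda$ is nonzero. So you cannot conclude $\tau_V = 0$ from vanishing of $\Lie_V \alpha$. The correct route, and the one the paper takes, is that $\alpha^{-1}(\Lie_V \alpha)$ is a \emph{scalar multiple of the identity}, so the off-diagonal coefficients $g\big(\alpha^{-1}(\Lie_V\alpha)(v_j), v_k\big)$ appearing in $\tau_V$ vanish for all $j \neq k$, and hence $\tau_V = 0$. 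Your conclusion is therefore correct, but the intermediate step needs to be replaced by this ``multiple of the identity'' argument (which is the content of proposition \ref{InvarianceKLDerivativesKilling}), not by the false claim that $\alpha$ is annihilated by $\Lie_V$.
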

The derivatives $\sL_V$ define a representation of the Lie algebra $\mathfrak{g}$ on the space of spinors $\Gamma (S_g)$. They can be constructed as follows. Define the auxiliary, average metric
\beq
\label{DefinitionAverageMetric}
\hg (U, V)  \ \  :=  \  \int_{h \in G} (r_h^\ast g) (U, V) \ \vol_G  \ .
\eeq
Here $r_h$ is the diffeomorphism $K$ corresponding to $h$ and $\vol_G$ denotes the bi-invariant, normalized, Haar measure on $G$. Then $\hg$ is a $G$-invariant metric on $K$. All the vector fields in $\mathfrak{g}$ are Killing with respect to $\hg$. Thus, denoting $\hcL_V$ the Kosmann-Lichnerowicz derivative on $S_{\thg}$, it follows from \eqref{NonClosureKLDerivative} applied to $\hg$ that 
\beq
\label{Closure2}
[\,\hcL_U ,\, \hcL_V \, ] \ = \ \hcL_{[U, V]}  \  \ ,
\eeq
as operators on $\hg$-spinors, when $U$ or $V$ are in $\mathfrak{g}$. So these operators satisfy the desired closure relation, but on the wrong spinor bundle. We have to transport them back to $S_g$.

Now, there exists a unique smooth automorphism $\alpha : TK  \rightarrow TK$ that projects to the identity on $K$, is positive-definite, and satisfies
\begin{align}
g (\alpha(U), V)  \ &= \ g (U, \alpha(V))  \linebr 
\hg (U, V) \ &= \ g (\alpha^{-1}(U), \alpha^{-1}(V))  
\end{align}
for all vector fields $U$ and $V$ on $K$. This map lifts to an automorphism of spinor bundles $\alpha: S_g \rightarrow S_{\thg}$, which we denote by the same symbol. See \cite{BG, AHermann} and appendix \ref{SpinorsDifferentMetrics}. 
Then we define
\beq
\label{TransportedDerivativeDefinition}
\sL_V \, \psi \ := \ \alpha^{-1} \circ \hat{\cL}_V \circ \alpha (\psi)  \ .
\eeq
Due to \eqref{Closure2}, it is clear that the derivatives $\sL_V$ satisfy the closure relation \eqref{NewDerivativeClosure} on $S_g$ for all vector fields in $\mathfrak{g}$. Using formula \eqref{TransportedKosmannDerivative} of appendix \ref{SpinorsDifferentMetrics}, we also have:
\begin{proposition}
The derivatives $\sL_V$ and $\cL_V$ of spinors in $S_g$ are related by 
\beq
\label{RelationTwoDerivatives}
\sL_V \, \psi \ = \ \cL_V\, \psi \; + \;  \frac{1}{4} \; \sum_{j\neq k} \; g \big( \alpha^{-1} (\Lie_V \alpha) (v_j),\, v_k \big) \; v_j \cdot v_k \cdot \psi \ 
\eeq
for all vector fields $V$ on $K$. Here $\Lie_V \alpha$ denotes the standard Lie derivative of $\alpha$. 
\end{proposition}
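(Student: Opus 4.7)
The plan is to deduce \eqref{RelationTwoDerivatives} directly from the general transport identity \eqref{TransportedKosmannDerivative} in appendix \ref{SpinorsDifferentMetrics}. Since $\sL_V$ is \emph{defined} by $\sL_V \psi := \alpha^{-1}\circ \hcL_V\circ \alpha(\psi)$ in \eqref{TransportedDerivativeDefinition}, the proposition is really a statement about how the Kosmann--Lichnerowicz derivative on $S_{\hg}$ looks from $S_g$ after conjugation by the spinor-bundle isomorphism $\alpha$. In particular, nothing new has to be proved beyond unpacking that appendix identity; the task is to show that its right-hand side matches the Clifford-quadratic correction written in \eqref{RelationTwoDerivatives}.

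To implement this, I would choose a local $g$-orthonormal frame $\{v_j\}$ on $K$, so that $\{\hat v_j := \alpha(v_j)\}$ is automatically $\hg$-orthonormal by the defining identity $\hg(U,W) = g(\alpha^{-1}U,\alpha^{-1}W)$. In this frame the appendix formula expresses $\alpha^{-1}\hcL_V\alpha$ as $\cL_V$ plus an algebraic Clifford-quadratic correction built from $\alpha$ and $\Lie_V\alpha$. The key algebraic observation is the Clifford relation $v_j\cdot v_k + v_k\cdot v_j = -2\delta_{jk}$: the diagonal and symmetric parts of the matrix $M_{jk}(V) := g(\alpha^{-1}(\Lie_V\alpha)(v_j),v_k)$ collapse to a scalar that, thanks to the $g$-self-adjointness of $\alpha$, is cancelled by the conformal-rescaling contributions encoded in the transport identity. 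What survives is only the antisymmetric combination, which, once combined with the standard $\tfrac{1}{8}$ prefactor of $\cL_V$ and the antisymmetrisation in $j\leftrightarrow k$, produces the overall factor $\tfrac{1}{4}$ and the restricted sum over $j\neq k$ displayed in \eqref{RelationTwoDerivatives}.

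The main obstacle will be bookkeeping. The raw transport identity naturally mixes the difference of the two Levi-Civita connections $\nabla^g - \nabla^{\hg}$ with covariant derivatives of $\alpha$ and with $\Lie_V\alpha$, and one has to reassemble these disparate pieces into the single compact expression involving $\alpha^{-1}(\Lie_V\alpha)$ alone. A useful consistency check at the end is that when $V\in\mathfrak{g}$ is conformal Killing for $g$, the $G$-invariance of $\hg$ together with the defining relation between $g$, $\hg$ and $\alpha$ forces $\Lie_V\alpha$ to be a pointwise scalar multiple of the identity, so its antisymmetric part in any $g$-orthonormal frame vanishes and the correction collapses, recovering $\sL_V = \cL_V$ as promised in the previous proposition.
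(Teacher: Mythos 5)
Your core observation is exactly the paper's: the proposition follows immediately from the definition $\sL_V := \alpha^{-1}\circ\hcL_V\circ\alpha$ in \eqref{TransportedDerivativeDefinition} together with the appendix identity \eqref{TransportedKosmannDerivative}, specialized to $g_\alpha = \hg$. The paper's ``proof'' is literally just that citation.

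However, your proposal then goes on to describe a process of reassembling and cancellation as if \eqref{TransportedKosmannDerivative} were some raw, unsimplified identity that still needs to be massaged into the form of \eqref{RelationTwoDerivatives}. It does not: the right-hand side of \eqref{TransportedKosmannDerivative} is already term-for-term identical to the right-hand side of \eqref{RelationTwoDerivatives}. What you are sketching is, in effect, the derivation of \eqref{TransportedKosmannDerivative} itself from \eqref{ComparisonConnectionsTM} and \eqref{ComparisonConnectionsSM}, which the paper states but does not spell out, so this is extra rather than necessary work for the present statement. Two of the details in that sketch are also off. First, the $\tfrac14$ prefactor and the restriction $j\neq k$ come directly from \eqref{ComparisonConnectionsSM}, not from doubling the $\tfrac18$ in the definition \eqref{KLDerivative} of $\cL_V$; the reason the sum in \eqref{RelationTwoDerivatives} effectively picks out the antisymmetric part of $M_{jk} := g(\alpha^{-1}(\Lie_V\alpha)(v_j),v_k)$ is simply that $v_j\cdot v_k = -v_k\cdot v_j$ for $j\neq k$, not a cancellation against ``conformal-rescaling contributions.'' Second, in your consistency check, the conformal Killing hypothesis (for both $g$ and $\hg$, as needed in proposition \ref{InvarianceKLDerivativesKilling}) forces $\alpha^{-1}(\Lie_V\alpha)$ to be a scalar multiple of the identity, not $\Lie_V\alpha$ itself — equivalently $\Lie_V\alpha$ is a scalar multiple of $\alpha$. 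The resulting vanishing of the correction term is the same, but the statement as written is imprecise.
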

\noindent
The two operators coincide when $V$ is conformal Killing. Their first variations also agree when $V$ is close to being conformal Killing. This follows from proposition \ref{InvarianceKLDerivativesKilling} and \eqref{FirstVariationKLDerivative}.
\begin{remark}
When $K$ is not compact, the average of the metric $g$ with respect to a maximal compact group acting on $K$ is not necessarily a metric $\hg$ with maximal isometry group, or with maximal Killing algebra. For example, when $K=\mathbb{R}^n$, it is better to take $\hg$ to be the Euclidean metric. Then the transport of the Kosmann-Lichnerowicz derivative $\hat{\cL}_V = \partial_V   - \frac{1}{8} (\partial_j V_k -   \partial_k V_j ) \hat{\gamma}^j \hat{\gamma}^k$ through the map $\alpha$, as in \eqref{TransportedDerivativeDefinition}, gives us a spinor derivative on $(\mathbb{R}^n, g)$ that satisfies the closure relation for all vector fields induced by the action of the Euclidean conformal group on $\mathbb{R}^n$. 
\end{remark}

\section{Representation spaces versus $\slashed{D}$-eigenspaces}
\label{RepresentationSpaces}

\subsection*{Spinorial representation spaces}

The new derivatives $\sL_V$ determine a representation of the Lie algebra of fundamental $G$-vector fields on the space of spinors in $S_g(K)$. For non-isometric $G$-actions, this representation does not preserve the eigenspaces of the Dirac operator. In a Kaluza-Klein model with $K$ as the internal manifold, it couples the 4D massive gauge fields to fermions. This coupling can mix fermions with different masses and allows chiral interactions \cite{Bap3}.
\begin{proposition}
\label{thm: DefinitionUnitaryRep}
The map $\rho: \mathfrak{g} \times \Gamma (S_g) \rightarrow \Gamma (S_g)$ given by 
\beq
\label{DefinitionUnitaryRep}
\rho_V (\psi) \ := \ \sL_V \psi \; + \; \frac{1}{2} \, \divergence_g(V) \, \psi
\eeq
defines a unitary representation of the Lie algebra $\mathfrak{g}$ on the space of complex spinors equipped with its natural $L^2$ Hermitian product.
\end{proposition}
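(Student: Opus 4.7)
The plan is to verify the two defining properties of a unitary Lie algebra representation separately: the bracket identity $[\rho_U,\rho_V]\psi = \rho_{[U,V]}\psi$ for $U,V \in \mathfrak{g}$, and the anti-Hermiticity of each $\rho_V$ with respect to the $L^2$ Hermitian inner product on $\Gamma(S_g)$. Linearity in $V$ and the derivation property in $\psi$ are inherited directly from \eqref{NewDerivativeLinearity} and \eqref{NewDerivativeDerivation}.

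For the bracket identity, I would expand $[\rho_U,\rho_V]\psi$ and apply the derivation property \eqref{NewDerivativeDerivation} to push $\sL_U$ through the scalar $\divergence(V)$, giving $\sL_U(\divergence(V)\psi) = \divergence(V)\sL_U\psi + U(\divergence V)\psi$, and analogously with $U$ and $V$ swapped. The two terms of type $\divergence(V)\sL_U\psi$ cancel against terms of the form $\divergence(U)\sL_V\psi - \divergence(V)\sL_U\psi$ generated by the cross product, and the product $\tfrac14[\divergence(U)\divergence(V) - \divergence(V)\divergence(U)]\psi$ vanishes since divergences are scalar functions. What survives is
\begin{equation*}
[\rho_U,\rho_V]\psi \; = \; [\sL_U,\sL_V]\psi \; + \; \tfrac{1}{2}\bigl[U(\divergence V) - V(\divergence U)\bigr]\psi .
\end{equation*}
The first term equals $\sL_{[U,V]}\psi$ by the closure relation \eqref{NewDerivativeClosure}, valid since $U,V \in \mathfrak{g}$. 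The remaining identity needed is $\divergence([U,V]) = U(\divergence V) - V(\divergence U)$, which is immediate from $\Lie_{[U,V]} = [\Lie_U,\Lie_V]$ applied to the Riemannian volume form $\vol_g$ together with the defining relation $\Lie_W \vol_g = \divergence(W)\,\vol_g$.

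For anti-Hermiticity, I would start from the pointwise Leibniz rule \eqref{NewDerivativeInnerProduct}, namely $V\langle\psi_1,\psi_2\rangle = \langle \sL_V\psi_1,\psi_2\rangle + \langle \psi_1,\sL_V\psi_2\rangle$, and integrate over the compact manifold $K$. Integration by parts gives the standard identity $\int_K V(f)\,\vol_g = -\int_K f\,\divergence(V)\,\vol_g$ (which follows from the vanishing of $\int_K \Lie_V(f\,\vol_g)$ on a closed manifold), applied to $f = \langle\psi_1,\psi_2\rangle$. The resulting relation
\begin{equation*}
\int_K \bigl\{\langle \sL_V\psi_1,\psi_2\rangle + \langle \psi_1,\sL_V\psi_2\rangle\bigr\}\vol_g \; = \; -\int_K \divergence(V)\,\langle\psi_1,\psi_2\rangle\,\vol_g
\end{equation*}
is precisely what is needed: absorbing the $\tfrac12\divergence(V)$ corrections on the left-hand side into the operator $\rho_V$ yields $\langle \rho_V\psi_1,\psi_2\rangle_{L^2} + \langle \psi_1,\rho_V\psi_2\rangle_{L^2} = 0$, so $\rho_V$ is anti-Hermitian and $e^{t\rho_V}$ acts unitarily.

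The computation is essentially forced and no step uses $V$ being (conformal) Killing, so the statement does cover genuinely non-isometric $G$-actions. The only point demanding any care, and the closest thing to an obstacle, is that the scalar correction $\tfrac12\divergence(V)$ must respect the Lie bracket; this is ensured by the classical vector-field identity $\divergence([U,V]) = U(\divergence V) - V(\divergence U)$, which exactly accounts for the non-invariance of $\vol_g$ under non-isometric vector fields.
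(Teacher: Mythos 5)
Your proposal is correct and follows essentially the same route as the paper's proof: expand the commutator using the derivation property \eqref{NewDerivativeDerivation}, cancel the terms symmetric in $U$ and $V$, invoke the closure relation \eqref{NewDerivativeClosure}, and close the bracket using $\divergence([U,V]) = U(\divergence V) - V(\divergence U)$ obtained from $\Lie_{[U,V]}\vol_g = [\Lie_U,\Lie_V]\vol_g$. The unitarity argument is likewise the same: integrate the Leibniz rule \eqref{NewDerivativeInnerProduct} over the compact $K$, absorb the $\Lie_V\vol_g = (\divergence_g V)\vol_g$ contribution, and observe that $\int_K \Lie_V[\langle\psi_1,\psi_2\rangle\vol_g] = 0$ by Stokes' theorem.
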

Now consider the Dirac operator $\sD$ on $S_g$.
When $K$ is compact, the space of $L^2$-integrable spinors can be decomposed as an infinite, orthogonal sum of $\sD$-eigenspaces,
\beq
\label{DecompositionDiracEigenspaces}
L^2 (S_g)  \  = \  \bigoplus_{m \in \mathbb{Z}}  E_{m}   \ .
\eeq
When the dimension of $K$ is not 3 (mod 4), the spectrum of $\sD$ is symmetric and the eigenvalues associated to $E_m$ satisfy $\mu (-m) = - \mu (m)$ \cite{Ginoux}.
The $\sD$-eigenspaces $E_m$ are not preserved by $\rho$ and the derivatives $\sL_V$, in general. When $G$ does not act on $K$ through isometries, \eqref{CommutatorNewDerivativeDiracOperator} implies that $\rho_V$ does not commute with $\sD$.

Nevertheless, it is possible to define a second decomposition of the space of spinors as a sum of finite-dimensional, complex spaces on which $\mathfrak{g}$ acts irreducibly through $\rho$. These are the $G$-representation spaces $W_{m, \pi}$. They are defined by taking decomposition \eqref{DecompositionDiracEigenspaces} for the spinors of the averaged metric $\hg$; decomposing the $\sD_{\thg}$-eigenspaces into irreducible subspaces $\hat{E}_{m, \pi}$; and then transporting these subspaces back to $S_g$ through the map $\alpha$. So $W_{m, \pi} = \alpha^{-1}(\hat{E}_{m, \pi})$. This is explained in appendix \ref{ProofsRepresentationSpaces}, which contains all the proofs for this section.
\begin{proposition}
\label{thm: DecompositionRepresentationSpaces1}
When $K$ is compact, there is an orthogonal decomposition
\beq
\label{DecompositionRepresentationSpaces1}
L^2 (S_g)  \  = \   \bigoplus_{m \in \mathbb{Z}}  \bigoplus_\pi \, n_{m, \pi} \, W_{m, \pi} \ ,
\eeq
preserved by the representation \ref{DefinitionUnitaryRep}. Here $n_{m, \pi} \in \mathbb{N}_0$ and the $W_{m, \pi}$ are finite-dimensional spaces of spinors on which the $\mathfrak{g}$-representation \eqref{DefinitionUnitaryRep} is equivalent to the irreducible representation $\pi$. If $G$ acts on $K$ isometrically, and hence preserves the $\sD$-eigenspaces in \eqref{DecompositionDiracEigenspaces}, the representation spaces can be chosen so that $W_{m, \pi} \, \subseteq \, E_m$.
\end{proposition}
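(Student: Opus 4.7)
My strategy is to solve the problem first on the averaged metric $\hg$, where $G$ acts by isometries and the analysis reduces to classical representation theory, and then transport the resulting decomposition back to $(K,g)$ via the canonical spinor map $\alpha$ of appendix \ref{SpinorsDifferentMetrics}.

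I would begin on $(K,\hg)$. Because $\hg$ is $G$-invariant, every $V \in \mathfrak{g}$ is Killing for $\hg$, so $\divergence_{\hg} V = 0$ and $\hat\rho_V = \hcL_V$; the classical identity $[\hcL_V, \sD_{\hg}] = 0$ for Killing fields then gives a unitary $\mathfrak{g}$-action on $L^2(S_{\hg})$ that commutes with $\sD_{\hg}$. Compactness of $K$ and ellipticity of $\sD_{\hg}$ yield the orthogonal Hilbert-space decomposition $L^2(S_{\hg}) = \bigoplus_m \hat E_m$ into finite-dimensional eigenspaces, each preserved by the $\hat\rho$-action. Compactness and connectedness of $G$ allow this Lie-algebra representation on $\hat E_m$ to be lifted to a unitary representation of $G$ (passing to a spin cover if necessary), which by complete reducibility splits orthogonally as $\hat E_m = \bigoplus_\pi n_{m,\pi}\, \hat W_{m,\pi}$ into irreducibles of type $\pi$.

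The main subtlety is the transport back to $L^2(S_g)$. The map $\alpha$ is a pointwise Hermitian isometry of spinor spaces but not an $L^2$-isometry, and conjugation by $\alpha$ does not exactly intertwine $\rho$ with $\hat\rho$ since $\divergence_g V$ and $\divergence_{\hg} V$ differ when $g$ fails to be $G$-invariant. Both defects are cured simultaneously by the scalar rescaling
\[
\beta(\psi) \ := \ \bigl(\vol_g / \vol_{\hg}\bigr)^{1/2} \, \alpha(\psi) \, ,
\]
which is a unitary isomorphism $L^2(S_g) \to L^2(S_{\hg})$. A short calculation based on \eqref{TransportedDerivativeDefinition} and the derivation property of $\hcL_V$ gives
\[
\beta^{-1}\hat\rho_V \beta \ = \ \sL_V \, + \, \tfrac{1}{2}\bigl(\divergence_g V - \divergence_{\hg} V\bigr) \ = \ \rho_V \, ,
\]
using $\divergence_{\hg} V = 0$. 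Setting $W_{m,\pi} := \beta^{-1}(\hat W_{m,\pi})$ then yields the claimed orthogonal decomposition of $L^2(S_g)$ into finite-dimensional $\rho$-invariant subspaces of irreducible type $\pi$. In the isometric case $g = \hg$, so $\alpha = \beta = \mathrm{id}$, the operator $\rho_V = \hat\rho_V$ commutes with $\sD$, and one may take $W_{m,\pi} = \hat W_{m,\pi} \subseteq \hat E_m = E_m$.

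The principal obstacle, and the reason for routing through $\hg$ at all, is this intertwining: transport by $\alpha$ alone preserves neither the $L^2$-structure nor the representation $\rho$, and the volume-density correction $(\vol_g/\vol_{\hg})^{1/2}$ is exactly what is needed to convert the canonical bundle isomorphism into a unitary equivalence of the representations. The remaining ingredients---elliptic spectral theory on a compact manifold and complete reducibility for compact connected Lie groups---are entirely standard.
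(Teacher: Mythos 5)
Your argument is correct and follows the same route as the paper: pass to the averaged metric $\hg$, decompose the $\sD_{\thg}$-eigenspaces into irreducibles under the Killing-field Lie derivatives, and transport back via the density-corrected spinor isomorphism. Your map $\beta = (\vol_g/\vol_{\hg})^{1/2}\alpha$ is literally the paper's $\tilde\alpha = (\det\alpha)^{-1/2}\alpha$, and your intertwining computation $\beta^{-1}\hat\rho_V\beta = \rho_V$ reproduces the paper's key step.
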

\begin{remark}
\label{thm: RemarkComplexRepresentations1}
In \eqref{DecompositionRepresentationSpaces1}, the sum $\bigoplus_\pi$ ranges over all inequivalent irreducible representations of the compact Lie algebra $\mathfrak{g}$. The non-negative integers $n_{m, \pi}$ are the multiplicities of the representations. For any fixed value of $m$, only a finite number of them are non-zero as $\pi$ varies. The representations $W_{m, \pi}$ may be real, complex or quaternionic type. For the complex ones, one can guarantee that $n_{m, \pi} = n_{m, \conj{\pi}}$ when the dimension of $K$ is not 1 (mod 4).  The real representations appear with even multiplicity when $K$ has dimension 3, 4, 6 (mod 8).  The quaternionic ones appear with even multiplicity when $K$ has dimension 0, 2, 7 (mod 8). All this is justified in appendix \ref{ProofsRepresentationSpaces}.
\end{remark}
When the manifold $K$ is even-dimensional, this decomposition of the space of spinors can be organized differently. We can use the splitting $S_g = S_g^+ \oplus S_g^-$ and require the representation spaces to be made of Weyl spinors. So we get a third decomposition:
\begin{proposition}
\label{thm: DecompositionRepresentationSpaces2}
When $K$ is compact and even-dimensional, there is an orthogonal decomposition
\beq
\label{DecompositionRepresentationSpaces2}
L^2 (S_g)  \  = \   \bigoplus_{m \in \mathbb{N}_0}  \bigoplus_\pi \, n_{m, \pi} \, (W^+_{m, \pi} \oplus W^-_{m, \pi}) \ ,
\eeq
preserved by the representation \eqref{DefinitionUnitaryRep}. Here $n_{m, \pi} \in \mathbb{N}_0$ and the $W^\pm_{m, \pi}$ are finite-dimensional spaces of Weyl spinors on which the $\mathfrak{g}$-representation \eqref{DefinitionUnitaryRep} is equivalent to the irreducible representation $\pi$. If $G$ acts on $K$ isometrically, the representation spaces can be chosen so that $W^\pm_{0, \pi} \, \subseteq \, E_0$ and, for all $m > 0$,
\beq
W^+_{m, \pi} \oplus W^-_{m, \pi} \ = \  E_{m, \pi} \oplus E_{-m, \pi} \, 
\eeq
where $E_{\pm m, \pi}$ is a subspace of $E_{\pm m}$ on which $\mathfrak{g}$ acts irreducibly through $\pi$.
\end{proposition}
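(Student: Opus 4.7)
The plan is to reduce to the isometric case by working first on the averaged side $S_{\hg}$, and then transporting the chirality-refined decomposition back to $S_g$ through the isomorphism $\alpha$. On $S_{\hg}$ the $G$-action is isometric, so its lift to spinors commutes with both $\sD^{\hg}$ and the chirality operator $\Gamma_K$. I would start from the $\sD^{\hg}$-eigenspace decomposition as in \eqref{DecompositionDiracEigenspaces}, written with hats. Since $k$ is even, $\Gamma_K$ anticommutes with $\sD^{\hg}$, so it maps $\hat{E}_m$ to $\hat{E}_{-m}$ for $m \neq 0$ and preserves $\hat{E}_0$.

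Next, I would decompose each $\sD^{\hg}$-eigenspace into irreducible $G$-components, $\hat{E}_m = \bigoplus_\pi n_{m,\pi}\, \hat{E}_{m,\pi}$. Because $\Gamma_K$ is a $G$-equivariant linear isomorphism $\hat{E}_m \to \hat{E}_{-m}$, it matches $\pi$-isotypic components on both sides, giving $n_{m,\pi} = n_{-m,\pi}$ and a pairing $\hat{E}_{m,\pi} \simeq \hat{E}_{-m,\pi}$. For $m > 0$, the $G$-invariant and $\Gamma_K$-invariant subspace $\hat{E}_{m,\pi} \oplus \hat{E}_{-m,\pi}$ then splits under the chirality projectors $P^\pm = \tfrac{1}{2}(1 \pm \Gamma_K)$ into two Weyl components $\hat{W}^\pm_{m,\pi}$. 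Each restriction $P^\pm : \hat{E}_{m,\pi} \to \hat{W}^\pm_{m,\pi}$ is an injective $G$-equivariant map, and so both $\hat{W}^+_{m,\pi}$ and $\hat{W}^-_{m,\pi}$ are irreducible of the same type $\pi$ with common multiplicity $n_{m,\pi}$. For $m=0$ the kernel $\hat{E}_0 = \hat{E}_0^+ \oplus \hat{E}_0^-$ already splits by chirality, so I would simply decompose each chiral piece into irreducibles $\hat{W}^\pm_{0,\pi}$, allowing some $\hat{W}^+_{0,\pi}$ or $\hat{W}^-_{0,\pi}$ to be trivial when the $\pm$-multiplicities do not match.

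To transfer this to $S_g$, I would define $W^\pm_{m,\pi} := \alpha^{-1}(\hat{W}^\pm_{m,\pi})$. The bundle isomorphism $\alpha : S_g \to S_{\hg}$ covers the identity on $K$ and preserves the orientation, so it intertwines the chirality splittings $S_g^\pm$ and $S_{\hg}^\pm$, making each $W^\pm_{m,\pi}$ a space of Weyl spinors. By the definition \eqref{TransportedDerivativeDefinition}, $\alpha$ intertwines $\sL_V$ with $\hcL_V$, and since the divergence term in \eqref{DefinitionUnitaryRep} is a scalar function on $K$ it commutes with $\alpha$, so $\alpha$ also intertwines the full representation $\rho$ on the two sides. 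Thus the $W^\pm_{m,\pi}$ are $\rho(\mathfrak{g})$-invariant subspaces on which $\rho$ acts irreducibly by $\pi$. Orthogonality of the summands in \eqref{DecompositionRepresentationSpaces2} reduces to orthogonality on the $\hg$-side, which follows from self-adjointness of $\sD^{\hg}$, the fact that $\Gamma_K$ is an involutive $\hg$-isometry anticommuting with $\sD^{\hg}$, and unitarity of the $G$-action; this transfers to $S_g$ because $\alpha$ is an $L^2$-unitary isomorphism, as recalled in appendix \ref{SpinorsDifferentMetrics}. The final statement of the proposition is then automatic: in the isometric case $\hg = g$ and $\alpha$ is the identity, so the construction yields $W^+_{m,\pi} \oplus W^-_{m,\pi} = \hat{E}_{m,\pi} \oplus \hat{E}_{-m,\pi}$ by definition, with $\hat{E}_{\pm m,\pi} \subseteq \hat{E}_{\pm m} = E_{\pm m}$.

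The main technical point is the injectivity of the chirality projectors $P^\pm$ on $\hat{E}_{m,\pi}$ for $m \neq 0$, which feeds into the claim that $\hat{W}^+_{m,\pi}$ and $\hat{W}^-_{m,\pi}$ carry equivalent irreducible representations. This is however straightforward: any nonzero $\psi \in \hat{E}_m$ of pure chirality would force $\sD^{\hg} \psi$ to lie in the opposite chirality subbundle, contradicting $\sD^{\hg} \psi = \mu(m)\, \psi$ with $\mu(m) \neq 0$. A minor bookkeeping obstacle is organizing the $m = 0$ sector, where $n^+_{0,\pi}$ and $n^-_{0,\pi}$ need not coincide; this is handled by allowing some of the paired Weyl summands to vanish, consistent with the formulation of the proposition.
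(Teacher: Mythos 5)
Your overall strategy — pass to the averaged metric $\hg$, refine the $\sD^{\hg}$-eigenspaces by chirality, then transport back to $S_g$ — is the same as the paper's. Your $m>0$ argument via the chirality projectors $P^\pm$ on $\hat E_{m,\pi}\oplus\hat E_{-m,\pi}$ is a valid variant of the paper's, which instead establishes a $\mathfrak{g}$-equivariant isomorphism $\hat E_m^+\to\hat E_m^-$ through $\sD^{\hg}$; either route produces a common multiplicity $n_{m,\pi}$ for the two chiralities when $m>0$.

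The gap is at $m=0$. You explicitly allow the positive- and negative-chirality multiplicities of $\pi$ inside $\hat E_0$ to differ and propose letting some $\hat W^\pm_{0,\pi}$ be trivial. That is incompatible with the form of \eqref{DecompositionRepresentationSpaces2}: the proposition asserts a single $n_{0,\pi}$ in front of the paired summand $W^+_{0,\pi}\oplus W^-_{0,\pi}$, with each $W^\pm_{0,\pi}$ a space on which $\rho$ acts by $\pi$ — a trivial space does not carry $\pi$. So the equality $n^+_{0,\pi}=n^-_{0,\pi}$ is genuinely needed, and it is not automatic. The paper obtains it from the Atiyah--Hirzebruch index result \cite{Witten83, AH} applied to the isometric $G$-action on $(K,\hg)$: the $G$-equivariant index of $\sD^{\hg}$ vanishes on each isotypic component, forcing each $\pi$ to occur in $\ker\sD^{\hg}\cap S_{\hg}^+$ and $\ker\sD^{\hg}\cap S_{\hg}^-$ with equal multiplicity. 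Without this input your construction does not yield the proposition as stated.

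There is also a secondary error in the transport step. The lift $\alpha:S_g\to S_{\hg}$ preserves the pointwise pairing \eqref{AlphaIsometry} but is not $L^2$-unitary, since $\vol_{\hg}=(\det\alpha)\,\vol_g$; and by \eqref{TransportedDerivativeDefinition} it conjugates $\hcL_V$ to $\sL_V$, not to $\rho_V$. Defining $W^\pm_{m,\pi}=\alpha^{-1}(\hat W^\pm_{m,\pi})$ therefore gives $\sL_V$-invariant subspaces, but multiplication by $\tfrac{1}{2}\divergence_g(V)$ need not preserve them, so they are not $\rho_V$-invariant in general, and your claim that the divergence term "commutes with $\alpha$" so that "$\alpha$ intertwines the full representation $\rho$" does not follow. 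The paper instead uses the corrected map $\tilde\alpha=(\det\alpha)^{-1/2}\alpha$, which is $L^2$-unitary and satisfies $\tilde\alpha^{-1}\circ\hcL_V\circ\tilde\alpha=\rho_V$. Replacing $\alpha$ by $\tilde\alpha$ throughout your transport step fixes this.
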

\begin{remark} 
\label{thm: RemarkComplexRepresentations2}
As before, in \eqref{DecompositionRepresentationSpaces2} the sum $\bigoplus_\pi$ ranges over all inequivalent irreducible $\mathfrak{g}$-representations. For fixed $m$, only a finite number of multiplicities $n_{m, \pi}$ are non-zero as $\pi$ varies. For the representations of complex type, one can again guarantee that  $n_{m, \pi} = n_{m, \conj{\pi}}$. When the dimension of $K$ is 0 (mod 8), the multiplicity $n_{m, \pi}$ of any quaternionic representation $\pi$ must be even. When the dimension of $K$ is 4 (mod 8), the multiplicity $n_{m, \pi}$ of any real representation $\pi$ must be even. The proof of this is in appendix \ref{ProofsRepresentationSpaces}.
\end{remark}

The observation below helps to interpret the effect of conjugations on spinors. It will be useful in the KK setting (see remark \ref{InterpretationHDConjugations}). It extends well-known properties of conjugations on finite-dimensional vector spaces. It is proved in appendix \ref{ProofsRepresentationSpaces}.
\begin{lemma}
\label{SpecialBasesInternalSpinors}
There exists an $L^2$-orthonormal basis $\{ \psi^\alpha \}$ of spinors in $S_g$ satisfying:
\begin{itemize}
\item[1)] Each spinor $\psi^\alpha$ belongs to one of the irreducible representation spaces $W_{m, \pi}$ in \eqref{DecompositionRepresentationSpaces1}, when $k$ is odd, and to one of the irreducible spaces $W_{m, \pi}^\pm$ in \eqref{DecompositionRepresentationSpaces2}, when $k$ is even.
\item[2)] The conjugation maps $j_\sigma$ may preserve some basis spinors, up to a phase, and otherwise will swap them in pairs and multiply them by phases.
\end{itemize}
\end{lemma}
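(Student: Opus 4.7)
The plan is to exploit the fact that the conjugations $j_\sigma$ commute with the representation $\rho$, and then adapt the decomposition of \eqref{DecompositionRepresentationSpaces1}/\eqref{DecompositionRepresentationSpaces2} and the basis within each summand to this commuting antilinear action.

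First I would verify the commutativity $j_\sigma \rho_V = \rho_V j_\sigma$ for every fundamental field $V \in \mathfrak{g}$. By proposition \ref{thm: PropertiesConjLinearAutomorphisms}, $j_\sigma$ commutes with $\cL_V$. The extra term in \eqref{RelationTwoDerivatives} is Clifford multiplication by the even element $v_j \cdot v_k$ with a real coefficient, and \eqref{CommutatorConjCliffordMultiplication} applied twice gives $j_\sigma(v_j \cdot v_k \cdot \psi) = \sigma^2 \, v_j \cdot v_k \cdot j_\sigma \psi = v_j \cdot v_k \cdot j_\sigma \psi$; hence $j_\sigma$ commutes with $\sL_V$. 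Since $\divergence_g(V)$ is real-valued, it follows that $j_\sigma \rho_V = \rho_V j_\sigma$.

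Next I would refine the decomposition so that $j_\sigma$ permutes the summands. Since $j_\sigma$ is antilinear and intertwines $\rho$, the image $j_\sigma(W)$ of an irreducible $\mathfrak{g}$-subspace on which $\rho$ acts by $\pi$ is again an irreducible $\mathfrak{g}$-subspace, but carrying the conjugate representation $\bar{\pi}$. Working inside each (canonical) isotypic component, I would choose the decomposition into copies of the irreducible as follows. For $\pi \not\cong \bar{\pi}$, fix any decomposition of the $\pi$-isotypic component into copies $W_1,\ldots,W_n$, and decree the decomposition of the $\bar{\pi}$-isotypic component to be $j_\sigma(W_1),\ldots,j_\sigma(W_n)$. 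For $\pi \cong \bar{\pi}$ (real or quaternionic), the isotypic component is a free module over $\mathrm{End}_\mathfrak{g}(\pi) \in \{\mathbb{R},\mathbb{H}\}$ on which $j_\sigma$ acts as a commuting antilinear involution-up-to-sign, and standard linear algebra provides a basis of irreducible summands which $j_\sigma$ either fixes individually or swaps in pairs. In the even-$k$ setting one uses \eqref{CommutatorConjChiralityOperator} to check that this can be carried out without breaking the Weyl splitting of \eqref{DecompositionRepresentationSpaces2}: $j_\sigma$ either preserves chirality or swaps $W^{+}$ with $W^{-}$, and the pairing in \eqref{RelationTwoConjugations} handles the remaining case.

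Then, within each $W$ in this refined decomposition, I would build the basis according to the three possible local types of $j_\sigma$. When $j_\sigma(W) = W' \neq W$, pick any $L^2$-orthonormal basis $\{e_i\}$ of $W$ and set $f_i := j_\sigma e_i$; by \eqref{InnerProductConjLinearAutomorphisms} specialized to the Riemannian case ($t = 0$), $\langle f_i, f_j\rangle = \overline{\langle e_i, e_j\rangle} = \delta_{ij}$, and the orthogonality $W \perp W'$ is automatic because they sit in distinct isotypic components. When $j_\sigma(W) = W$ with $j_\sigma^2|_W = +1$, the fixed set $W^{j_\sigma}$ is a real form of $W$ whose inner products are real by \eqref{InnerProductConjLinearAutomorphisms}, so any real-orthonormal basis of $W^{j_\sigma}$ is complex-orthonormal in $W$ and is preserved pointwise by $j_\sigma$. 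When $j_\sigma(W) = W$ with $j_\sigma^2|_W = -1$, a short calculation combining \eqref{InnerProductConjLinearAutomorphisms} and the Hermitian symmetry of $\langle\cdot,\cdot\rangle$ gives $\langle v, j_\sigma v\rangle = 0$ for every $v \in W$, so iterated Gram-Schmidt on pairs $(v_1, j_\sigma v_1, v_2, j_\sigma v_2, \ldots)$ produces an orthonormal basis in which $j_\sigma$ swaps each $e_i$ with $f_i$ up to the phase $-1$.

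The main obstacle is the middle step: within an isotypic component of a self-conjugate irrep, showing that one can pick a decomposition into irreducible summands which is globally permuted by $j_\sigma$. This reduces to a finite-dimensional problem about an antilinear operator on $\CC^n \otimes V_\pi$ commuting with $\mathfrak{g}$, which by Schur's lemma descends to an antilinear operator on $\CC^n$ commuting with the commutant algebra ($\mathbb{R}$ or $\mathbb{H}$), and is settled by the standard normal form for such maps. The rest is bookkeeping and standard Hilbert-space orthogonality.
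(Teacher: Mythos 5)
Your proof is correct and reaches the same conclusion, but it is organized quite differently from the paper's. The paper proves the lemma by an explicit case analysis over $k \bmod 4$: it first determines which of $j_{\pm}$ exists and whether it preserves the blocks $W_m$ (resp.\ $W_m^\pm$) or swaps $W_m \leftrightarrow W_{-m}$ (resp.\ $W_m^+ \leftrightarrow W_m^-$), then directly builds the basis block by block, outsourcing the existence of adapted bases inside invariant irreducible summands to the real/quaternionic dichotomy and treating swapped blocks via $\{\psi^\alpha, j_\sigma \psi^\alpha\}$. Your route is more uniformly representation-theoretic: you first verify that $j_\sigma$ commutes with $\rho$ (a fact the paper uses implicitly, inherited from propositions \ref{thm: ConjLinearAutomorphisms}--\ref{thm: PropertiesConjLinearAutomorphisms} and the realness of the $\tau_{e_a}$ coefficients), then explicitly refine the decomposition inside each isotypic component so that $j_\sigma$ permutes the chosen irreducible summands, and finally classify the three local types of the $j_\sigma$-action ($W \leftrightarrow W'$, $j_\sigma^2|_W = +1$, $j_\sigma^2|_W = -1$) with the Gram--Schmidt calculations spelled out. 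The advantage of your treatment is that it makes explicit a step the paper glosses over — that one can rechoose the decomposition of a multiplicity-$n$ isotypic component so that the antilinear $j_\sigma$ actually permutes the summands; this is what "standard normal form" buys you, and it is a genuine gap in rigour in the paper's argument. The paper's $k \bmod 4$ layout, on the other hand, makes transparent the interaction of $j_\sigma$ with the Dirac eigenspaces $E_{\pm m}$ and with chirality (swapping $W_m^+ \leftrightarrow W_m^-$ when $k\equiv 2\pmod 4$), which in your version is compressed into the remark invoking \eqref{CommutatorConjChiralityOperator} and \eqref{RelationTwoConjugations}. To make your write-up fully watertight you should also state explicitly which finite-dimensional block you work within (e.g.\ $W_m$ when $j_\sigma$ preserves $m$, and $W_m \oplus W_{-m}$ when $j_{-}$ maps $E_m$ to $E_{-m}$), so that the ``isotypic components'' in your argument are unambiguously finite-dimensional.
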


\subsection*{Chiral fermions and local gauge anomalies}

Consider the case of even-dimensional $K$. In decomposition \eqref{DecompositionRepresentationSpaces2}, the chiral subspaces $W^+_{m, \pi}$ and $W^-_{m, \pi}$ transform under the same $\mathfrak{g}$-representation and appear with the same multiplicity $n_{m, \pi}$. So the Lie algebra $\mathfrak{g}$ acts isomorphically on the two big spaces of Weyl spinors, $L^2 (S^+_g) $ and $L^2 (S_g^-) $. However, this does not imply that $\mathfrak{g}$ acts identically on the two chiral components of any given $\sD$-eigenspinor, $\psi =  \psi^+ + \psi^-$. Just as in \cite{Bap3}, a simple calculation using the properties of $\sD$ and $\Gamma_K$ yields 
\beq
\label{RelationCommutatorChirality}
\int_K \langle\, \phi, \, [\sD\!,\,  \rho_V] \, \Gamma_K  \psi \, \rangle \, \vol_{g}\, = \, (\mu + \mu') \int_K \big\{ \langle\, \phi^+ , \: \rho_V \psi^+  \, \rangle   -   \langle\, \phi^- , \: \rho_V \psi^-  \, \rangle \big\} \, \vol_{g} \ . 
\eeq
Here $\psi$ and $\phi$ are two $\sD$-eigenspinors with positive eigenvalues $\mu$ and $\mu'$. When the $G$-action on $K$ is non-isometric, the commutator $[\sD\!,\,  \rho_V]$ is in general non-zero, as follows from \eqref{CommutatorNewDerivativeDiracOperator}. So we should have different matrix elements $\blangle\, \phi^+ , \: \rho_V \psi^+  \, \brangle_{L^2}   \neq  \blangle\, \phi^- , \: \rho_V \psi^-  \, \brangle_{L^2}$. In contrast, when $G$ acts on $K$ through isometries, we have $\rho_V = \cL_V$ and $[\sD\!,\,  \rho_V]= 0$. So the $\mathfrak{g}$-action on $\sD$-eigenspinors cannot have chiral asymmetries. A detailed discussion of the appearance of chiral couplings for non-isometric actions on $K$ is given in \cite{Bap3}. 

Note that although $W^+_{m, \pi}$ and $W^-_{m, \pi}$ transform under the same gauge representation, for non-isometric actions the two spaces may be spanned by $\sD$-eigenspinors with different eigenvalues, or by distinct linear combinations of such $\sD$-eigenspinors. What does this imply for a KK model with internal space $K$? It implies that the right- and left-handed components that transform in a given representation may belong to different 4D fermions with distinct masses, rather than be the two chiral components of the same fermion.
For example, suppose that light fermions interact with a force only through their left-handed components, as in the Standard Model. Then there should be fermions interacting similarly with that force through their right-handed components, but now they can be heavy fermions. They need not have the same mass.

Remark \ref{thm: RemarkComplexRepresentations2} says that we always have an equality of multiplicities $n_{m, \pi} = n_{m, \conj{\pi}}$ in decomposition \eqref{DecompositionRepresentationSpaces2}. In particular, inside each chiral subspace $W_m^\pm = \bigoplus_\pi \, n_{m, \pi} \, W^\pm_{m, \pi}$, the irreducible representations of complex type always appear in conjugate pairs. This has useful consequences in our Kaluza-Klein models with internal space $K$. It guarantees that the chiral interactions of 4D fermions generated by the $\mathfrak{g}$-representation $\rho$ are free of local gauge anomalies \cite[sec. 75]{Srednicki}. It does not say anything about the interactions generated by the higher order correction terms $\tau_{e_a}$ in the Dirac equations \eqref{DimensionReductionDiracEquation2}.

\subsection*{Group representations and fermion generations}

The irreducible unitary representations of the Lie algebra $\mathfrak{g}$ on the finite-dimensional spaces of spinors $W_{m, \pi}$ and $W^\pm_{m, \pi}$ can always be integrated to representations of a covering group $\tilde{G}$ of $G$ on the same spaces. So a non-isometric $G$-action on $(K, g)$ preserving the topological spin structure determines a unitary $\tilde{G}$-representation on the space of spinors on $S_g$. This group representation extends the usual one for isometric actions. Note that, with this simple definition, an element $h \in \tilde{G}$ that does not preserve the metric $g$ still transforms $g$-spinors to $g$-spinors, not $g$-spinors to $(r_h^\ast g)$-spinors. That transformation, however, will not commute with the Dirac operator $\sD^g$, in general.

The elements in the kernel of the covering $\tilde{G} \rightarrow G$ act trivially on $K$, but may act non-trivially on the spinors. When the compact group $G$ is semisimple, the cover $\tilde{G}$ is compact semisimple as well. In this case, the centre of $\tilde{G}$ is a finite subgroup, and so is the kernel of the covering map.
Now let $\mathfrak{g}' \subset \mathfrak{g}$ be the subalgebra of fundamental Killing fields on $K$. Assume that it is semisimple. Let $\tilde{G}' \subset \tilde{G}$ be the corresponding exponentiated subgroup. The Dirac operator on $S_g$ commutes with the infinitesimal transformations $\rho_V$ when $V \in \mathfrak{g}'$ is Killing. So it also commutes with the exponentiated transformations determined by the elements of $\tilde{G}'$.

A first observation about this lift to spinors of non-isometric actions on $K$ is the following. Any element in the centre of $\tilde{G}$ defines a transformation of spinors that commutes with the infinitesimal gauge transformations $\rho_V$, for all $V \in \mathfrak{g}$. However, some of these central elements may lie outside the subgroup $\tilde{G}'$. So the corresponding spinor transformations need not commute with the Dirac operator on $K$.
Therefore, in the Kaluza-Klein model on $M \times K$, and assuming that $G$ is semisimple, we get a finite set of transformations between 4D fermions in the same $\mathfrak{g}$-representations but with possibly different masses. This is a potential source of different generations of the same type of particle.
We use the word {\it potential} because we have not verified that, in practice, these transformations do not commute with $\sD$. They could be trivial or act by scalar multiplication of spinors, for example, in which case they would not relate fermions with different masses.

A perhaps more pertinent observation about fermion generations is the following. Suppose that we start with a metric on $K$ with large isometry group $G$. The eigenvalues of $\sD$ are typically degenerate, with the associated eigenspinors transforming under $\mathfrak{g}$-representations, as mentioned before. Let $\mu$ be such an eigenvalue, and let $E_\mu$ be the associated $\sD$-eigenspace. Consider a perturbation of the initial metric that breaks the isometry group to $G' \subset G$. It should reduce the degeneracy of $\mu$ and produce branched eigenvalues $\mu_1, \ldots, \mu_r$ \cite{Wang, BG}. This splits the $\mathfrak{g}$-representation space into a sum of $\mathfrak{g}'$-subspaces $E_\mu \rightarrow E_{\mu_1} \oplus \cdots \oplus E_{\mu_r}$ associated with the new eigenvalues. Now note that the same $\mathfrak{g}'$-irreducible representation may appear in different $\sD$-eigenspaces $E_{\mu_j}$. In the Kaluza-Klein model on $M \times K$, this corresponds to 4D fermions in the same $\mathfrak{g}'$-representation but with slightly different masses, all branched off the initial eigenvalue $\mu$. So this is another potential mechanism for the emergence of different generations of the same particle. In this scenario, the masses of the distinct generations should differ by amounts controlled by the scale of the initial eigenvalue $\mu$ and by the scale of the metric perturbation producing the symmetry breaking $G \rightarrow G'$ (presumably the electroweak scale). So the mass differences should not be at the Planck scale. It would be interesting to study this phenomenon in explicit examples, such as the one described in appendix \ref{Example}.

\section{Dimensionally reduced equations and CP violation}
\label{DimensionalReductionCPViolation}

\subsection{Dimensional reduction of the Dirac equation on $M_4 \times K$}

In this section, we take $M$ to be 4D Minkowski space. For simplicity, we also assume that the submersion metric $g_P \simeq (g_M, A, g_K)$ has constant internal geometry. So the metric $g_K$ is constant along $M$. After the work in sections \ref{ActionsCompactGroups} and \ref{RepresentationSpaces}, the action of the higher-dimensional Dirac operator \eqref{SimplerDecompositionDiracOperator} on a spinor $\varphi^\HH \otimes \psi $ can be written as
\begin{align}
\label{SimplerDecompositionDiracOperator2}
\sD^P (\varphi^\HH \otimes \psi ) \; =& \; \, g_M^{\mu \nu}\,  (X_\mu \cdot  \nabla^M_{\nu} \varphi)^\HH \otimes \psi  \; + \;   g_M^{\mu \nu}\, \, A_\nu^a\,  \, (X_\mu \cdot  \varphi)^\HH  \otimes [\, ( \rho_{e_a} + \tau_{e_a} ) \,  \psi\, ]  \nonumber   \linebr
&+  \; ( \Gamma_M \,  \varphi)^\HH  \otimes  \sD^K\psi   \; + \;  \frac{1}{8} \, \,  (F_A^a)^{\mu \nu } \,  (  X_\mu \cdot X_\nu \cdot \Gamma_M \,  \varphi  )^\HH \otimes (e_a \cdot \psi) \ .
 \end{align}
Here $\rho_{e_a} (\psi)$ is a unitary representation of the gauge algebra $\mathfrak{g}$ on the space of internal spinors, as defined in \eqref{DefinitionUnitaryRep}. The term $\tau_{e_a}(\psi)$ is the non-minimal coupling
\beq
\tau_{e_a} ( \psi) \ = \ - \, \frac{1}{4} \; \sum_{j\neq k} \; g \big( \alpha^{-1} (\Lie_{e_a} \alpha) (v_j),\, v_k \big) \; v_j \cdot v_k \cdot \psi 
\eeq
that emerged in \eqref{TransportedDerivativeDefinition} and \eqref{RelationTwoDerivatives}. From proposition \ref{InvarianceKLDerivativesKilling}, we know that it vanishes when the internal vector field $e_a$ is conformal Killing for the metric $g_K$. In fact, even the first variation of $\tau_{e_a} ( \psi)$ vanishes when $e_a$ is close to being conformal Killing, as shown by \eqref{FirstVariationKLDerivative}. Thus, we expect this non-minimal coupling to be very small for gauge fields $A_\mu^a$ with small mass, for which the Lie derivatives $\Lie_{e_a} g_K$ appearing in \eqref{MassFormula} are small.

Now take a general higher-dimensional spinor and write it in the form $\Psi =   \sum_\alpha \tilde{\varphi}_\alpha^\HH(x) \otimes \psi^\alpha(y)$, as in \eqref{EigenMassDecompositionSpinors}. The internal spinors $\{ \psi^\alpha \}$ are all independent and form a $L^2$-orthonormal basis of the space of sections of $S_{g_\KKK}$. Thus, decomposition \eqref{SimplerDecompositionDiracOperator2} implies that the equation $\sD^P \Psi = 0$ over $P$ is equivalent to an infinite set of equations over $M_4$ for the 4D spinors $\tilde{\varphi}_\alpha$. These equations read
\begin{multline}
\label{DimensionReductionDiracEquation}
 \gamma^\mu \,  \big\{  \nabla^{M}_{X_\mu} \,  \tilde{\varphi}_\alpha \, + \,  A^a_\mu \,  \,  \blangle \, \psi_\alpha  \, , ( \rho_{e_a} +  \tau_{e_a} ) \psi^\beta \, \brangle_{L^2} \,  \,  \tilde{\varphi}_\beta  \big\}  \,  
+ \,  \blangle \, \psi_\alpha  \, , \,  \sD^K  \psi^\beta \, \brangle_{L^2}\, \, \gamma_5 \, \tilde{\varphi}_\beta  \; \linebr +\; \frac{1}{8}\, \,  (F_A^a)_{\mu \nu } \, \blangle \psi_\alpha \, , e_a \cdot \psi^\beta \brangle_{L^2}\, \,  \gamma^\mu  \gamma^\nu \, \gamma_5 \,\tilde{\varphi}_\beta  \ = \ 0 \ , 
\end{multline}
where we sum over $\beta$ and use the traditional $\gamma^\mu$ notation. When $\psi^\beta$ is an eigenspinor of $\sD^K$ with eigenvalue $m_\beta$, the 4D equations \eqref{DimensionReductionDiracEquation} are similar to, but not identical to, the usual gauged Dirac equations. Besides the non-minimal coupling $\tau_{e_a}$ and the Pauli term, they have extra $\gamma_5$ factors and a non-standard kinetic term.  
However, denoting 
\beq
\label{Redefinition4DSpinors}
\varphi_\alpha \; := \;  \frac{1}{\sqrt{2}} \, ( I    +  i  \gamma_5 ) \, \tilde{\varphi}_\alpha \; = \; \exp{(i \pi \gamma_5 / 4)}  \, \tilde{\varphi}_\alpha  \ ,
\eeq
as in \cite[p. 22]{Duff}, it is easy to check that the redefined 4D spinors satisfy 
\begin{multline}
\label{DimensionReductionDiracEquation2}
 i \, \gamma^\mu \,  \big\{  \nabla^{M}_{X_\mu} \,  \varphi_\alpha \, + \,  A^a_\mu \,  \,  \blangle \, \psi_\alpha  \, , ( \rho_{e_a} +  \tau_{e_a} ) \psi^\beta \, \brangle_{L^2} \,  \,  \varphi_\beta  \big\}  \,  
+ \,  \blangle \, \psi_\alpha  \, , \,  \sD^K  \psi^\beta \, \brangle_{L^2}\, \,  \varphi_\beta  \; \linebr +\; \frac{1}{8}\, \,  (F_A^a)_{\mu \nu } \, \blangle \psi_\alpha \, , e_a \cdot \psi^\beta \brangle_{L^2}\, \,  \gamma^\mu  \gamma^\nu \,  \varphi_\beta  \ = \ 0 \ .
\end{multline}
So the $\gamma_5$ factors disappear and we get the traditional Dirac kinetic term in Minkowski space with signature $-+++$ and the gamma matrices conventions in appendix \ref{ConventionsSpinors}. This equation is the main result of this section.

\begin{remark}
Let $G'$ be a subgroup of $G$ that acts on $K$ through isometries, and let $\mathfrak{g}' \subset \mathfrak{g}$ be its algebra of fundamental Killing fields on $K$. Then the gauge fields $A^a_\mu$ are massless and the operators $\rho_{e_a}$ and $\sD^K$ commute for all $e_a \in \mathfrak{g}'$. Thus, as described in section \ref{RepresentationSpaces}, it is possible to choose a $L^2$-orthonormal, complex basis $\{\psi^\beta \}$ formed by $\sD^K$-eigenspinors that transform in irreducible representations of $G'$. In other words, it is possible to align the mass basis and the $G'$-representation basis of the internal spinors. For the non-Killing vector fields $e_a$ in $\mathfrak{g} \setminus \mathfrak{g}'$, which are linked to massive 4D gauge fields, this basis alignment should not be possible in general.
\end{remark}

\begin{remark}
\label{RemarkPauliTerm2}
The appearance of a 4D Pauli term is a departure from the prescriptions of the Standard Model. In the case of the abelian 5D Kaluza-Klein model, this term contributes to the anomalous magnetic moment of the charged fermion. It has been estimated that the magnitude of this contribution is negligible, beyond the reach of current measurements (e.g. \cite{CN, Dolan}). However, those estimates rely on the assumption that the higher-dimensional vacuum metric is determined by the simple Einstein-Hilbert action on $P$. That assumption should not hold in realistic models. Higher-order corrections to that action seem necessary to introduce the different mass scales observed in reality and to stabilize the vacuum metric \cite{Bap1}. So the physical contribution of the Pauli term is not completely clear yet. This term can always be removed as in remark \ref{RemarkPauliTerm1}. The resulting operator is less natural than $\sD$ but preserves its most important features.
\end{remark}

Observe that a systematic application of the field redefinition \eqref{Redefinition4DSpinors} also affects how reflections and parity inversion act on spinors on $M_4$. Denoting by $\tilde{R}_M$ the operator before the redefinition and by $R_M$ the operator after, we have 
\beq
R_M\ =\ e^{\frac{i\pi}{4} \gamma_5} \: \tilde{R}_M \: e^{-\frac{i\pi}{4} \gamma_5}  \ =\ e^{\frac{i\pi}{2} \gamma_5} \: \tilde{R}_M\ =\  i\, \gamma_5 \: \tilde{R}_M \ =\ i \, e^{i \xi} \, \gamma_5 \,  \gamma_\nu   \ ,
\eeq
where the last equality follows from definition \eqref{LocalSpinorReflections}. Similarly, for 4D parity inversion,
\beq
\label{Redefinition4DParity}
\parity_M\ =\ e^{\frac{i\pi}{4} \gamma_5} \: \tilde{\parity}_M \: e^{-\frac{i\pi}{4} \gamma_5}  \ =\ e^{\frac{i\pi}{2} \gamma_5} \: \tilde{\parity}_M\ =\  i\, \gamma_5 \: \tilde{\parity}_M  \ = \  -\,  e^{i [\zeta + 3 \xi]} \, \gamma_0  \ ,
\eeq
where the last equality uses \eqref{LocalSpinorParity}. 
The precise phases that appear in these transformations are not essential. The 4D conjugation maps, in contrast, are not affected by the field redefinition. From \eqref{CommutatorConjChiralityOperator} we know that $j^M_\sigma$ is conjugate-linear and anticommutes with $\gamma_5$, so we have
\beq
\label{RedefinitionInvariance4DConjugation}
e^{\frac{i\pi}{4} \gamma_5}  \; j^M_\sigma \; e^{-\frac{i\pi}{4} \gamma_5}  \ =\  e^{\frac{i\pi}{4} \gamma_5}  \; e^{-\frac{i\pi}{4} \gamma_5}  \; j^M_\sigma \ = \  j^M_\sigma  \ .
\eeq

\subsection{The CP-transformed equation in 4D}

Conjugations of spinors and parity inversions are exact symmetries of the massless Dirac equation on $M_4 \times K$. So is their composition $j_\sigma \parity$. This was described in sections \ref{ReflectionsParity} and \ref{ConjugationSymmetries}. So if a higher-dimensional spinor satisfies $\sD \Psi = 0$, we will always have $\sD (j_\sigma \parity \, \Psi) = 0$ as well.  The purpose of this section is to determine how the second equation looks after dimensional reduction to 4D. 

When the internal metric $g_K$ is constant along $M_4$, any higher-dimensional spinor can be decomposed as $\Psi  =   \sum_\alpha \tilde{\varphi}_\alpha^\HH(x) \otimes \psi^\alpha(y)$, as in \eqref{EigenMassDecompositionSpinors}. Here $\{\psi^\alpha \}$ is a $L^2$-orthonormal basis of the space of internal spinors in $S_{g_\KKK}$. Combining proposition \eqref{FactorizationConjugations} with the local formula \eqref{LocalSpinorParity} for parity inversions, it is clear that the action of $j_\sigma \parity$ on such spinors is 
\beq
\label{DimensionalFactorizationCP}
 j_\sigma \, \parity \, (\Psi)\ = \ \sum_\alpha \,  [  j^M_\sigma\, \tilde{\parity}_M \, \tilde{\varphi}_\alpha (x)]^\HH \, \otimes \,  [ j^K_{- \sigma} \, \psi^\alpha(y)] \ .
\eeq
Here $\tilde{\parity}_M $ is the raw parity inversion in 4D, before the field redefinition \eqref{Redefinition4DSpinors}. Now denote the CP-transformed spinor in 4D by the abbreviated symbol
\beq
\tilde{\varphi}^{cp_\sigma}_\alpha \ := \ j_\sigma^M \tilde{\parity}_M \,\tilde{\varphi}_\alpha \ .
\eeq
 Just as was done in \eqref{DimensionReductionDiracEquation}, in the previous section, we can rewrite the higher-dimensional equation $\sD (j_\sigma \parity \Psi) = 0$ as an infinite set of equations for spinors over $M_4$. The only differences are that we now have $\tilde{\varphi}^{cp_\sigma}_\alpha$ instead of $\tilde{\varphi}_\alpha$, and the internal basis $ j^K_{- \sigma} \, \psi^\alpha$ instead of $\psi^\alpha$. However, identity \eqref{InnerProductConjLinearAutomorphisms} in the Riemannian case, together with the commutation relations of $j_\sigma$ with Clifford multiplication and the Dirac operator, imply that
\begin{align}
 \blangle \ j^K_{- \sigma} \psi_\alpha  \, , \,  \sD^K (j^K_{- \sigma} \psi^\beta )\, \brangle_{L^2} \ &= \ - \sigma \, \blangle \, j^K_{- \sigma} \psi_\alpha  \, , \,  j^K_{- \sigma} (\sD^K \psi^\beta) \, \brangle_{L^2}  \ = \ - \sigma \conj{\blangle \, \psi_\alpha  \, , \,  \sD^K  \psi^\beta \, \brangle}_{L^2}  \linebr
\blangle  j^K_{- \sigma}  \psi_\alpha \, , \, e_a \cdot ( j^K_{- \sigma}  \psi^\beta )\brangle_{L^2} &= \ -\, \sigma \, \blangle  j^K_{- \sigma} \psi_\alpha \, , \, j^K_{- \sigma} ( e_a \cdot \psi^\beta) \brangle_{L^2} = -\, \sigma \, \conj{\blangle \psi_\alpha \, , \, e_a \cdot \psi^\beta \brangle}_{L^2} \ .   \nonumber
\end{align}
Moreover, \eqref{InnerProductConjLinearAutomorphisms} and the fact that $j_\sigma$ commutes with the internal operators $\cL_{e_a}$, $\rho_{e_a}$ and $\tau_{e_a}$, produce the third identity
\beq
 \blangle \ j^K_{- \sigma} \psi_\alpha  \, , \,   ( \rho_{e_a} +  \tau_{e_a} ) (j^K_{- \sigma} \psi^\beta )\, \brangle_{L^2} \ = \ \conj{\blangle \, \psi_\alpha  \, , \,  ( \rho_{e_a} +  \tau_{e_a} )  \psi^\beta \, \brangle}_{L^2}  \ .
\eeq
So the same arguments that led to \eqref{DimensionReductionDiracEquation} now imply that $\sD (j_\sigma \parity \, \Psi) = 0$ is equivalent to the set of 4D equations
\begin{multline}
\label{DimensionReductionCPDiracEquation}
 \gamma^\mu \,  \big\{  \nabla^{M}_{X_\mu} \,  \tilde{\varphi}^{cp_\sigma}_\alpha \, + \, (A^p)^a_\mu \,  \,  \conj{\blangle \, \psi_\alpha  \, , ( \rho_{e_a} +  \tau_{e_a} ) \psi^\beta \, \brangle}_{L^2} \,  \,  \tilde{\varphi}^{cp_\sigma}_\beta  \big\}  \,  
- \, \sigma \,  \conj{\blangle \, \psi_\alpha  \, , \,  \sD^K  \psi^\beta \, \brangle}_{L^2}\, \, \gamma_5 \, \tilde{\varphi}^{cp_\sigma}_\beta  \; \linebr -\; \frac{\sigma}{8}\, \,  (F_{A^p}^a)_{\mu \nu } \, \conj{\blangle \psi_\alpha \, , e_a \cdot \psi^\beta \brangle}_{L^2}\, \,  \gamma^\mu  \gamma^\nu \, \gamma_5 \,\tilde{\varphi}^{cp_\sigma}_\beta  \ = \ 0 \ . 
\end{multline}
Here we have used that $j_\sigma \parity\, \Psi$ is a spinor in $S_{\parity^\ast g_\PPP}$, as discussed in section \ref{ReflectionsParity}, not in the original bundle $S_{g_\PPP}$. In particular, if the submersion metric $g_P$ is equivalent to the triple $(g_M, A, g_K)$, with $g_M$ the Minkowski metric and $g_K$ constant along $M_4$, then the pullback metric $\parity^\ast g_P$ is equivalent to $(g_M, A^p, g_K)$, where the new connection 1-form is just the pullback $A^p := \parity^\ast A$ of the original one. Explicitly, it is of course given by
\beq
\label{ParityGaugeFields4D}
(A^p)^a_\mu \, \, |_{x} \ :=\  (-1)^{1+ \delta_{0\mu}} \, A^a_\mu \, \,  |_{\parity(x)}  \qquad \quad {(\text{no sum over $\mu$)}} \ .
\eeq
The last step in the derivation is the customary redefinition of the 4D spinors, in order to obtain Dirac equations in their traditional guise. As in \eqref{Redefinition4DSpinors}, we define 
\beq
\label{CPTransformation4D}
\varphi^{cp_\sigma}_\alpha\ := \ e^{\frac{i\pi}{4} \gamma_5} \,  \tilde{\varphi}^{cp_\sigma}_\alpha\  =  \ j_\sigma^M \parity_M \, \varphi_\alpha \ .
\eeq
The second equality is a consequence of the transformation rules \eqref{Redefinition4DParity} and \eqref{RedefinitionInvariance4DConjugation}. In terms of the redefined spinors, equation \eqref{DimensionReductionCPDiracEquation} then takes the final form
\begin{multline}
\label{DimensionReductionCPDiracEquation2}
 i \, \gamma^\mu \,  \big\{  \nabla^{M}_{X_\mu} \,  \varphi^{cp_\sigma}_\alpha \, + \,  (A^p)^a_\mu \,  \, \conj{ \blangle \, \psi_\alpha  \, , ( \rho_{e_a} +  \tau_{e_a} ) \psi^\beta \, \brangle}_{L^2} \,  \,  \varphi^{cp_\sigma}_\beta  \big\}  \,  
- \, \sigma \,  \conj{\blangle \, \psi_\alpha  \, , \,  \sD^K  \psi^\beta \, \brangle}_{L^2}\, \,  \varphi^{cp_\sigma}_\beta  \; \linebr -\; \frac{\sigma}{8}\, \,  (F_{A^p}^a)_{\mu \nu } \, \conj{\blangle \psi_\alpha \, , e_a \cdot \psi^\beta \brangle}_{L^2}\, \,  \gamma^\mu  \gamma^\nu \,  \varphi^{cp_\sigma}_\beta  \ = \ 0 \ .
\end{multline} 
This is the main result in the section. It says that the CP-transformed spinors $\varphi^{cp_\sigma}_\alpha$ satisfy equations of motion on $M_4$ very similar to those satisfied by the original $\varphi_\alpha$, i.e. to \eqref{DimensionReductionDiracEquation2}. The differences are that the gauge form is now $A^p$, instead of $A$; we have different signs when $\sigma = 1$; and four types of matrices ---  determined by the operators $\rho_{e_a}$, $\tau_{e_a}$, $\sD^K$, and $e_a \cdot$ on the space of internal spinors --- appear complex-conjugated in the new equation.

\begin{remark}
When $K$ is even-dimensional, hence so is $P$, there are two higher-dimensional conjugations, $j_\pm = j^M_{\pm} \otimes j^K_{\mp}$. See \eqref{DefinitionSetConjMaps} and \eqref{FactorizationConjugations}. They are both symmetries of the equation $\sD^P \Psi = 0$. Since $j^K_-$ anticommutes with the internal Dirac operator $\sD^K$, the conjugation $j^M_{+} \otimes j^K_{-}$ flips the signs of all the mass terms in the 4D equations, after dimensional reduction. So the conjugation that most closely adheres to the usual conventions of 4D physics is $j_- = j^M_{-} \otimes j^K_{+}$. It exists when the dimension of $K$ is not 3 (mod 4).
\end{remark}
\begin{remark}
\label{InterpretationHDConjugations}
The action of the higher-dimensional conjugations $j_\sigma$ appearing in  \eqref{DimensionalFactorizationCP} and  \eqref{FactorizationConjugations} can be better interpreted with the help of lemma \ref{SpecialBasesInternalSpinors}. In a basis of internal spinors $\{ \psi^\alpha\}$ with the properties stated in the lemma, we see that $j_\sigma$ just C-transforms the 4D spinors $\varphi_\alpha$, multiplies them by phases, and swaps some of them in pairs.
\end{remark}
To end this section, we note that when the dimension of $K$ is not 2 (mod 4), the higher-dimensional transformation $j_\sigma \parity$ preserves the conditions for Weyl spinors, $\Gamma_P \Psi = \pm \Psi$. This follows from \eqref{CommutatorCPChiralityOperator}. The CP-transformation spinors on $M_4$, in contrast, always preserves the Weyl conditions. Using the definition $\Gamma_M = i \gamma_0  \gamma_1 \gamma_2 \gamma_3 = \gamma_5$, it is clear from \eqref{CPTransformation4D} and properties \eqref{Redefinition4DParity} and \eqref{CommutatorConjChiralityOperator} that $\varphi_\alpha  = \pm\, \gamma_5 \, \varphi_\alpha$ is equivalent to $\varphi^{cp_\sigma}_\alpha  =  \pm\, \gamma_5 \, \varphi^{cp_\sigma}_\alpha$, as usual in 4D.

\subsection{CP violation}

In this discussion we will assume that the gauge fields $A^a_\mu$ with light bosons are all linked to internal vector fields $e_a$ induced by a $G$-action on $K$, which needs not be isometric. In other words, the vacuum metric on $K$ is such that the remaining vector fields in the infinite basis $\{ e_a \}$ have large Lie derivatives $\Lie_{e_a} g_K$, and hence the corresponding 4D gauge bosons have large masses. We also assume that we are in a region of Minkowski spacetime where the only non-zero gauge fields $A^a_\mu$ are those with massless or light bosons. So no excitations at Planck scale. In this case, all the relevant operators $ \rho_{e_a}$ in the Dirac equations \eqref{DimensionReductionDiracEquation2} belong to a proper $\mathfrak{g}$-representation on the space of spinors on $K$.

We now turn to the main point. Equations \eqref{DimensionReductionDiracEquation2} and \eqref{DimensionReductionCPDiracEquation2} for the 4D spinors $\varphi_\alpha$ and the CP-transformed $\varphi^{cp_\sigma}_\alpha$ both describe solutions of the massless Dirac equation on $M \times K$. They are broadly similar equations, but not entirely equivalent. Even if we redefine the gauge representations by complex conjugation, $ \rho_{e_a} \rightarrow \conj{\rho_{e_a}}$, the two equations still remain formally different. Thus, in our Kaluza-Klein setting, there is no reason to expect that left-handed particles interact with any of the physical 4D forces in exactly the same way that right-handed antiparticles do. If a force is described by a gauge representation that is equivalent to its complex conjugate, such as the fundamental $\SU(2)$ representation, that fact by itself is not enough to render \eqref{DimensionReductionDiracEquation2} and \eqref{DimensionReductionCPDiracEquation2} equivalent. Other terms in the equations remain different. So the 4D equations produced by Kaluza-Klein models are not invariant under the traditional formulation of CP symmetry.

The specific sources of CP violation, however, depend on the details of the higher-dimensional metric $g_P$. When only massless gauge fields $A_\mu^a$ are turned on, the relevant vector fields $e_a$ on $K$ are solely the Killing ones. These satisfy the identities
\beq
\big[\, \sD^K , \, \rho_{e_a} \, \big] = 0 \qquad \text{and} \qquad \tau_{e_a} = 0  \ .
\eeq
The first identity implies that the transformations $\rho_{e_a}$ preserve the $\sD^K\!$-eigenspaces. So there are $L^2$-orthonormal bases of the irreducible representation spaces $W_{m, \pi}$ composed entirely of $\sD^K\!$-eigenspinors. In other words, the Kaluza-Klein model will have a representation basis of 4D fermions that, at the same time, is a mass basis. With this choice of basis, the equations of motion \eqref{DimensionReductionCPDiracEquation2} for antiparticles become
\begin{multline}
\label{DimensionReductionCPDiracEquation3}
 i \, \gamma^\mu \,  \big\{  \nabla^{M}_{X_\mu} \,  \varphi^{cp_\sigma}_\alpha \, + \,  (A^p)^a_\mu \,  \, \conj{ \blangle \, \psi_\alpha  \, ,  \rho_{e_a} \, \psi^\beta \, \brangle}_{L^2} \,  \,  \varphi^{cp_\sigma}_\beta  \big\}  \,  
- \, \sigma \,  m_\alpha \, \,  \varphi^{cp_\sigma}_\alpha  \; \linebr -\; \frac{\sigma}{8}\, \,  (F_{A^p}^a)_{\mu \nu } \, \conj{\blangle \psi_\alpha \, , e_a \cdot \psi^\beta \brangle}_{L^2}\, \,  \gamma^\mu  \gamma^\nu \,  \varphi^{cp}_\beta  \ = \ 0 \ ,
\end{multline}
where the masses $m_\alpha$ are the real $\sD^K$-eigenvalues. Thus, if we redefine the gauge representation by complex conjugation, for $\sigma = -1$ the only source of inequivalence when compared to the particle equations \eqref{DimensionReductionDiracEquation2} is the Pauli term. This term is often considered negligible in KK models \cite{Thirring, Dolan} (but do see remarks \ref{RemarkPauliTerm1} and \ref{RemarkPauliTerm2}). In the simpler abelian 5D KK model, the Killing vector field $e_a$ is $\nabla^{g_\KKK}$-parallel on $K = S^1$. One can then show that a sign change of the gauge field $A_\mu^p$ is enough to render \eqref{DimensionReductionCPDiracEquation3} completely equivalent to \eqref{DimensionReductionCPDiracEquation2}, when $\sigma = -1$. So there is no CP violation in the traditional 5D KK model. % unless one admits orbifold singularities or higher-dimensional gauge fields.

In regions where massive gauge fields $A_\mu^a$ are turned on these properties no longer hold, even if all the gauge bosons are light. Some non-Killing internal vector fields 
$e_a$ become relevant too. The respective transformations $\rho_{e_a}$ do not commute with $\sD^K$ anymore, so do not preserve the $\sD^K$-eigenspaces. There is an infinite-dimensional complex matrix relating the  $\sD^K\!$-eigenspinors to the orthonormal bases of the irreducible representation spaces --- the spaces $W_{m, \pi}$ of proposition \ref{thm: DecompositionRepresentationSpaces1}. To our knowledge, there is no a priori reason to expect that this CKM-like matrix can be rendered entirely real on general grounds. This is one of the sources of inequivalence between equations \eqref{DimensionReductionDiracEquation2} and \eqref{DimensionReductionCPDiracEquation2}. 

Another source is the Pauli term, as before. Finally, the third source of CP violation is the new non-minimal gauge coupling matrix $\blangle \, \psi_\alpha  \, ,\tau_{e_a} \cdot \psi_\beta \, \brangle_{L^2}$, which becomes non-zero when $e_a$ is not Killing and the gauge field $A^a_\mu$ is massive. This is an anti-Hermitian matrix acting on the space of internal spinors. 
Overall, to our knowledge, there is no a priori reason to expect the existence of a clever choice of representation basis $\{ \psi^\alpha \}$ that renders the CKM-like mass matrix $\blangle \, \psi_\alpha  \, , \,  \sD^K  \psi^\beta \, \brangle_{L^2}$, the Pauli term matrix $\blangle \psi_\alpha \, , \, e_a \cdot \psi^\beta \brangle_{L^2}$, and the new non-minimal gauge coupling matrix $\blangle \, \psi_\alpha  \, , \, \tau_{e_a} \cdot \psi_\beta \, \brangle_{L^2}$, all simultaneously real on general grounds. It would be interesting, nevertheless, to investigate these matters in explicit examples based on different compact spin manifolds $K$.

\section{Conclusions}

This paper investigates the properties of Kaluza-Klein models on spacetimes $M\times K$ equipped with general submersion metrics. These higher-dimensional metrics generalize the Kaluza ansatz. They allow internal geometries that vary along $M$ and, in addition, encode 4D gauge fields linked to general vector fields on $K$ --- both Killing and non-Killing. In previous papers \cite{Bap1, Bap3}, we have shown that 4D gauge fields linked to non-Killing vector fields on $K$ have the following properties, after dimensional reduction:
\begin{enumerate}[itemsep=1mm, topsep=7pt]
\item[\bf 1.] They gain a non-zero mass \eqref{MassFormula} through a geometric Higgs-like mechanism, in which the internal metric $g_K$ plays the role of the traditional Higgs field.
\item[\bf 2.] Their gauge interactions can mix fermions with different masses.
\item[\bf 3.] They can have asymmetric (chiral) couplings to left- and right-handed fermions.
\end{enumerate} 
The last property circumvents well-known no-go arguments against the existence of chiral fermions in KK models \cite{CS, Wetterich83a, Wetterich83, Witten83}. Those arguments apply only to massless 4D gauge fields, linked to vector fields on $K$ that are exactly Killing.

In the present paper, we broaden the investigation of the dimensional reduction of the Dirac equation $\sD \Psi = 0$ on $M\times K$ equipped with submersion metrics. We establish that: 
\begin{enumerate}[itemsep=1mm, topsep=7pt]
\item[\bf 4.] The dimensionally reduced Dirac equation naturally violates CP symmetry in four dimensions. The interactions of 4D massive gauge fields with fermions contain three terms that can break CP. One of these is analogous to the CKM matrix, arising from the misalignment between the mass bases and representation bases of fermions.
\item[\bf 5.] The 4D gauge representations on spinors determined by the higher-dimensional Dirac operator are always self-conjugate, and hence anomaly-free.
\end{enumerate} 
We also suggest a geometric mechanism for the emergence of fermion generations in these KK models. It is tied to the branching of the degenerate eigenvalues of the Dirac operator on $K$ that occurs when the isometry group of $g_K$ is broken to a smaller subgroup. 

Thus, in aggregate, the pure gravity KK models based on submersion metrics offer new geometric paths to derive several central properties of the Standard Model, at least qualitatively.
Besides the physically motivated results, the paper also develops certain geometric constructions that may be of independent interest. Specifically:
\begin{enumerate}[itemsep=1mm, topsep=7pt]
\item[\bf a.] It considers the extension of reflections and parity inversion on $M$ to diffeomorphisms of $M \times K$, describing the action of these diffeomorphisms on the submersion metric, higher-dimensional spinors, and the Dirac operator.
\item[\bf b.] It gives a short, geometric account of the conjugation automorphisms $j_\pm$ of spinor bundles on manifolds of arbitrary signature $(s,t)$, and their action on the Dirac operator and the Kosmann-Lichnerowicz derivative.
\item[\bf c.] It introduces a new Lie derivative of spinors along non-Killing vector fields induced by actions of compact groups. This derivative reduces to the Kosmann-Lichnerowicz one for conformal Killing vector fields, but is otherwise distinct.
\end{enumerate} 

The analysis presented in this paper also has clear limitations. It is purely classical and establishes only general properties of the KK models. It offers relevant proofs of concept but does not delve into specific examples with phenomenological interest. Thus, from a physical viewpoint, it is still far from describing realistic particle models or producing numerical predictions. In this regard, it would be interesting to choose particular examples of internal geometries and work out explicitly their 4D footprint in the related KK models. One such example could be the geometry suggested in appendix \ref{Example}, where $K$ is the group manifold $\SU (3)$ equipped with a natural left-invariant metric having isometry group $(\SU(3) \times \SU(2) \times \mathrm{U}(1))/\mathbb{Z}_6$. Studying explicit examples will also provide a deeper understanding of the various formal properties established here.

Another limitation of the present analysis is that it does not address the dynamical origin of the vacuum metric on $K$. The discussion assumes the existence and stability of some $g_K$, and then proceeds to relate the properties of 4D fermions and gauge fields to the properties of that metric. Finding a natural mechanism that dynamically generates realistic and stable vacuum metrics --- possibly an appropriate correction to the higher-dimensional Einstein-Hilbert action --- remains an important unresolved challenge in pure gravity Kaluza-Klein models.

\newpage

\addtocontents{toc}{\cftpagenumbersoff{section}} 

\appendix
\appendixpage
\noappendicestocpagenum
\addappheadtotoc

\addtocontents{toc}{\cftpagenumberson{section}}

\section{Conventions for spinors in signature $(s,t)$}
\label{ConventionsSpinors}

\subsection*{Gamma matrices, inner products and chirality operator}

Consider the vector space $\mathbb{R}^{n}$ equipped with the standard, non-degenerate metric $\eta$ of signature $(s,t)$. It has $s$ independent vectors of positive norm. Our convention is that these vectors represent space-like directions. So the internal space $K$ of a Kaluza-Klein model has a Riemannian metric with positive signature.

The spinor space $\Delta_n$ is the complex coordinate space of dimension $2^{\lfloor \frac{n}{2} \rfloor}$. For each choice of signature $(s,t)$, there are $n$ matrices $\gamma_j$ that act on $\Delta_n$ and satisfy the relations
\beq
\label{ConventionsGammaMatrices1}
 \gamma_j \,  \gamma_k   +   \gamma_k \,  \gamma_j  \ = \ - 2 \, \eta_{j k} \ I \ .
\eeq
These matrices are not unique and can be rotated with $\mathrm{O}_{s,t}$ similarity transformations. They can be chosen to have the hermiticity properties
\begin{singlespace}
\vspace{-0.2cm}
\beq
\label{ConventionsGammaMatrices2}
 (\gamma_k)^\dag \ = \ 
 \begin{cases} 
 \ \gamma_k     &  \text{if $ \ 1 \leq k \leq t $} \linebr
 \ - \, \gamma_k &  \text{if $ \ t  < k \leq t + s $}  \ .
 \end{cases}
\eeq
\end{singlespace}
\noindent
Thus, the gamma matrices in spatial dimensions are square roots of $-1$ and are anti-self-adjoint with respect to the product $\psi ^\dag  \psi$ of spinors in $\Delta_n$. This is the most common convention in Riemannian geometry \cite{ABS, LM, Bourguignon, Ginoux, Friedrich}. It is also used in the QFT textbook \cite{Srednicki}.\footnote{The dominant convention in physics textbooks is to write the Clifford relation \eqref{ConventionsGammaMatrices1} with the opposite sign. So our gamma matrices in signature $(s,t)$ are their gamma matrices in signature $(t,s)$.} Vectors in $\mathbb{R}^{n}$ act on spinors through the Clifford multiplication $v\cdot \psi := \sum_j v^j \gamma_j \, \psi$. This determines an irreducible representation of the Clifford algebra $\mathrm{Cl}_{s,t}$ on the spinor space. So $\Delta_n$ also carries representations of the groups $\mathrm{Pin}_{s,t}$ and $\Spin_{s,t}$, which are contained inside the Clifford algebra. 

When $t \geq 1$, the positive-definite Hermitian product on $\Delta_n$ is not invariant under the action of the connected component of the identity of the spin group, denoted $\Spin^0_{s,t}$. So one usually works with a different, $\Spin^0_{s,t}$-invariant inner product. It is defined by
\beq
\label{GeneralInnerProduct}
\langle \psi_1, \, \psi_2 \rangle \ := \ 
\begin{cases}
\  \psi_1^\dag \, \psi_2   & \text{if $t=0$} \linebr
\  i^{\lfloor \frac{t}{2} \rfloor} \; \psi_1^\dag \,  \gamma_1 \cdots \gamma_t \, \psi_2  &  \text{if $t \geq 1$} \ .
\end{cases}
\eeq
Here $\lfloor a \rfloor$ denotes the integral part of $a$. This product satisfies $\conj{\langle \psi_1, \, \psi_2 \rangle}  = \langle \psi_2, \, \psi_1 \rangle $ and, for $t \geq 1$, has indefinite signature 
 $(2^{ \lfloor \frac{n}{2} \rfloor -1},  \, 2^{ \lfloor \frac{n}{2} \rfloor -1})$. Under Clifford multiplication, 
\beq
\langle v\cdot \psi_1,  \, \psi_2 \rangle \ = \ (-1)^{t-1} \; \langle \psi_1,  \,  v\cdot \psi_2 \rangle  \ .
\eeq
The complex chirality operator $\Gamma : \Delta_n \rightarrow \Delta_n$ in signature $(s,t)$ is defined by 
\beq
\label{DefinitionChiralOperator}
\Gamma \, \psi \ := \  i^{\lfloor \frac{s-t +1}{2} \rfloor } \ \gamma_1 \cdots \gamma_{n} \, \psi \ .
\eeq
It satisfies the algebraic relations
\bal
\Gamma \,  \Gamma \, \psi \ &=\  \psi   \linebr
\Gamma(v\cdot \psi) \ &= \ (-1)^{n-1} \,  v\cdot (\Gamma \, \psi)   \label{ChiralOperatorCliffordMultiplication}  \linebr
\langle\Gamma\,  \psi_1,  \, \psi_2\rangle \ &= \  (-1)^{t(n-1)}\, \langle \psi_1,   \, \Gamma\,  \psi_2\rangle \ .
\end{align}
In particular, $\Gamma$ commutes with the elements in the Lie algebra $\spin_{s,t} \simeq \mathrm{Cl}^{(2)}_{s,t}$. So it commutes with the action of the connected group $\Spin^0_{s,t}$ on spinors.

\subsection*{Majorana forms and conjugation automorphisms}

Consider again the vector space $\mathbb{R}^{n}$ equipped with the standard metric of signature $(s,t)$ and let $\Delta_n = \CC^{2^{\lfloor\frac{n}{2} \rfloor }}$ be the complex spinor space. Define the set
\begin{equation}
I_{n} \ := \ 
\begin{cases}
\{-1, 1 \}  &  \text{if $n$ is even} \\
\{ (-1)^{\frac{n-1}{2} } \}  & \text{if $n$ is odd} \ .
\end{cases}
\end{equation}

\begin{proposition}
\label{thm: MajoranaForms}
For each value $\tau \in I_{n}$, there exists a $\CC$-bilinear form on $\Delta_n$ such that
\begin{align}
C_\tau ( v \cdot \psi_1, \, \psi_2) \ &= \  \tau \;   C_\tau ( \psi_1, \, v \cdot  \psi_2)   \label{CliffMultiplicationMajoranaForms} \linebr
C_\tau ( \psi_1, \, \psi_2) \ &= \  \varepsilon(n, \tau)  \;   C_\tau ( \psi_2, \,  \psi_1)    \label{SymmetryMajoranaForms}
\end{align}
for all spinors $\psi_j \in \Delta_n$ and vectors $v \in \mathbb{R}^n$. The symmetry sign is 
\beq
\varepsilon(n, \tau) \ := \ 
\begin{cases}
\; (-1)^{ \lfloor \frac{n}{4} \rfloor} \ \tau^{\frac{n}{2}} &  \text{if $n$ is even} \\
\; (-1)^{\frac{n^2 -1}{8}} & \text{if $n$ is odd} \ .
\end{cases}
\eeq
Property \eqref{CliffMultiplicationMajoranaForms} is valid for gamma matrices of any signature with $s+t = n$. The form $C_\tau$ can be chosen so that it is represented by a unitary matrix in any complex basis of $\Delta_n$ orthonormal with respect to the product $\psi^\dag \, \psi$.
\end{proposition}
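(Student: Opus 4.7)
The plan is to translate the existence of $C_\tau$ into an intertwining-matrix problem, apply the classification of irreducible complex representations of the Clifford algebra, and then pin down the symmetry sign $\varepsilon(n,\tau)$ in an explicit realization.

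First I would fix a complex basis of $\Delta_n$ and write every $\CC$-bilinear form as $C_\tau(\psi_1,\psi_2) = \psi_1^T B_\tau \psi_2$ for some invertible matrix $B_\tau$. Property \eqref{CliffMultiplicationMajoranaForms} is then equivalent to the intertwining relation
\begin{equation}
\gamma_j^T \, B_\tau \;=\; \tau \, B_\tau \, \gamma_j, \qquad j = 1, \ldots, n,
\end{equation}
i.e.\ $B_\tau \gamma_j B_\tau^{-1} = \tau \gamma_j^T$. Since $\tau^2 = 1$, the matrices $\{\tau \gamma_j^T\}$ obey the same Clifford relations as $\{\gamma_j\}$; moreover the relation is invariant under the rescaling $\gamma_j \to i\gamma_j$ that interconverts signatures, so it suffices to work in Euclidean signature $(n,0)$ and then transport the resulting $B_\tau$ to every signature with $s+t=n$.

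Existence of $B_\tau$ then reduces to the classification of irreducible complex modules of $\Cl(n,\CC)$. For even $n$ the algebra is simple, so both $\tau = \pm 1$ yield equivalent representations and $B_\tau$ exists for each sign. For odd $n$ there are two inequivalent irreducible modules distinguished by the value of the complex volume element $\gamma_1 \cdots \gamma_n$; transposition flips this value by $(-1)^{n(n-1)/2}$ and the rescaling by $\tau$ contributes $\tau^n$, so the two modules are equivalent precisely when $\tau = (-1)^{(n-1)/2}$. This explains the definition of $I_n$. Uniqueness of $B_\tau$ up to a scalar follows from Schur's lemma.

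The symmetry identity \eqref{SymmetryMajoranaForms} comes from transposing the intertwining relation: $B_\tau^T$ solves the same equation, so $B_\tau^T = \varepsilon B_\tau$ for a scalar $\varepsilon$, and transposing again forces $\varepsilon^2 = 1$. Computing $\varepsilon(n,\tau)$ explicitly is the main technical obstacle. I would use the standard tensor-product construction of gamma matrices from Pauli matrices, write $B_\tau$ concretely as a product of selected $\gamma_j$'s times a phase, and then propagate the sign through the periodicity isomorphism $\Cl_{n+2,\CC} \simeq \Cl_{n,\CC} \otimes \Cl_{2,\CC}$ to produce the mod-$8$ formulas in the statement. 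Finally, for unitarity with respect to $\psi^\dag \psi$, I would observe that in Euclidean signature the generators $\gamma_j$ are themselves unitary (since $\gamma_j^\dag = -\gamma_j$ and $\gamma_j^2 = -I$ give $\gamma_j^\dag \gamma_j = I$), so the explicit product realizing $B_\tau$ is unitary up to a scalar that can be absorbed in a phase; unitarity survives the rescaling $\gamma_j \to i\gamma_j$ used to pass between signatures, giving a $B_\tau$ that is unitary in any basis orthonormal for $\psi^\dag \psi$.
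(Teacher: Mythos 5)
The paper does not prove this proposition. Immediately after stating it, the text says the forms ``can be explicitly constructed using a set of Euclidean gamma matrices in a special representation, in which they are all symmetric or anti-symmetric,'' cites \cite{Hitoshi, Stone}, and remarks that it ``will not prove'' the result here. So there is no paper argument to compare against, only a hint.

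Your outline is the standard route from those references — intertwining relation $\gamma_j^T B_\tau = \tau B_\tau \gamma_j$, signature independence under $\gamma_j \mapsto i\gamma_j$, classification of irreducible $\Cl_n(\CC)$-modules plus Schur's lemma, transposition to get $B_\tau^T = \varepsilon B_\tau$ with $\varepsilon^2 = 1$ — and it is correct as far as it goes, including the odd-$n$ volume-element analysis that pins down $I_n$. The genuine gap is that you stop short of the formula for $\varepsilon(n,\tau)$, which is the actual content of the proposition. ``Propagate the sign through $\Cl_{n+2,\CC} \simeq \Cl_{n,\CC} \otimes \Cl_{2,\CC}$'' is a plan, not an argument: one still needs the base cases $n = 1, 2$, a careful account of how $\tau$ at level $n+2$ relates to the $\tau$'s on the two factors under the isomorphism, and a record of the induced sign in $\varepsilon$, showing it closes mod $8$. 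A cleaner, representation-free way worth knowing: the products $B_\tau\gamma_{i_1}\cdots\gamma_{i_p}$ with $i_1 < \cdots < i_p$ span $\mathrm{End}(\Delta_n)$ (or half of it for odd $n$), each is symmetric or anti-symmetric with a sign that follows directly from \eqref{CliffMultiplicationMajoranaForms} and \eqref{SymmetryMajoranaForms}, and comparing the counts against $\binom{d+1}{2}$ and $\binom{d}{2}$ with $d = 2^{\lfloor n/2\rfloor}$ forces $\varepsilon(n,\tau)$. On unitarity your argument is fine; a crisper variant is that $B_\tau^\dagger B_\tau$ commutes with every $\gamma_j$ in Euclidean signature (using $\gamma_j^\dagger = -\gamma_j$), so it is a positive scalar by Schur and a rescaling makes $B_\tau$ unitary; under an orthonormal change of basis by $U$ one has $B_\tau \mapsto \bar U B_\tau U^\dagger$, which stays unitary.
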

The bilinear forms $C_\tau$ are called Majorana forms on $\Delta_n$. They only depend on the dimension $n$, not on the signature. The properties in proposition \ref{thm: MajoranaForms} determine them uniquely, up to a complex phase. Property \eqref{CliffMultiplicationMajoranaForms} implies that, in all signatures:
\begin{proposition}
\label{thm: PropertiesMajoranaForms}
The Majorana forms satisfy:
\begin{align}
C_\tau (\, \Gamma \,  \psi_1, \, \psi_2 \, ) \ &= \  (-1)^{(n-1) \lfloor\frac{n}{2} \rfloor } \  C_\tau (\, \psi_1, \,  \Gamma \,  \psi_2 \, )   \linebr
C_\tau ( b \cdot \psi_1, \, \psi_2) \ &= \  - \; C_\tau ( \psi_1, \, b \cdot  \psi_2)   \linebr
C_\tau ( \omega \cdot \psi_1, \,  \omega \cdot \psi_2) \ &= \  C_\tau ( \psi_1, \,   \psi_2)  \ .   
\end{align}
Here $\Gamma$ is the chirality operator on $\Delta_n$, $b$ is any element in the Lie algebra $\spin_{s,t}$, and $\omega$ is any element in the connected group $\Spin^0_{s,t}$.
\end{proposition}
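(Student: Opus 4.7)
All three identities reduce to iterated applications of the fundamental relation \eqref{CliffMultiplicationMajoranaForms}, so my approach is to exploit that single input cleanly in three different ways. I would treat the identities in the order stated, since each one prepares the next.

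For the first identity, I would expand $\Gamma$ as the product $i^{\lfloor (s-t+1)/2\rfloor}\,\gamma_1 \cdots \gamma_n$ and move the $n$ gamma matrices across the bilinear form $C_\tau$ one at a time using \eqref{CliffMultiplicationMajoranaForms}. Each transfer produces a factor of $\tau$, and after $n$ steps the left-hand argument is free of gammas while the right argument is acted on by $\gamma_n \gamma_{n-1} \cdots \gamma_1$. Reversing this to $\gamma_1 \cdots \gamma_n$ requires $\binom{n}{2}$ swaps of distinct anticommuting generators, producing the sign $(-1)^{n(n-1)/2}$. The overall prefactor is then $\tau^n (-1)^{n(n-1)/2}$, and I would verify that this equals $(-1)^{(n-1)\lfloor n/2 \rfloor}$ by case analysis on the parity of $n$. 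For even $n$ this is immediate since $\tau^n = 1$ and $\lfloor n/2 \rfloor = n/2$. For odd $n$ the set $I_n$ forces $\tau = (-1)^{(n-1)/2}$, and a short computation modulo $4$ shows $\tau^n(-1)^{n(n-1)/2} = (-1)^{(n^2-1)/2} = (-1)^{(n-1)^2/2} = (-1)^{(n-1)\lfloor n/2\rfloor}$ because $n-1$ is even. This parity bookkeeping is the only technical point.

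For the second identity, since $\spin_{s,t}$ is linearly spanned by the bivectors $\gamma_j \gamma_k$ with $j \neq k$, I would reduce by $\CC$-bilinearity to checking the claim on these generators. Two applications of \eqref{CliffMultiplicationMajoranaForms} give $C_\tau(\gamma_j \gamma_k\,\psi_1, \psi_2) = \tau^2\, C_\tau(\psi_1, \gamma_k \gamma_j\,\psi_2) = C_\tau(\psi_1, \gamma_k \gamma_j\,\psi_2)$, and the Clifford relation \eqref{ConventionsGammaMatrices1} with $j\neq k$ yields $\gamma_k \gamma_j = -\gamma_j \gamma_k$, producing the expected minus sign. Note that $\tau^2 = 1$ regardless of the choice $\tau \in I_n$, which is what makes this identity independent of that choice.

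The third identity follows from the second by exponentiation. Since $\Spin^0_{s,t}$ is the identity component of a Lie group, any $\omega$ can be written as a finite product of exponentials $\exp(b)$ with $b \in \spin_{s,t}$, and it suffices to establish invariance for such $\omega$. Property (ii) says $b$ is anti-self-adjoint with respect to $C_\tau$; expanding $C_\tau(\exp(b)\psi_1, \exp(b)\psi_2)$ as a double power series and moving all powers of $b$ from the left argument to the right via (ii) picks up factors of $(-1)^l$, so the $m$-th homogeneous term carries $\sum_{k+l=m}\binom{m}{k}(-1)^l = (1-1)^m$. This vanishes for $m \geq 1$ and leaves only $C_\tau(\psi_1,\psi_2)$. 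The main obstacle throughout is really the sign bookkeeping in the first identity; the other two steps are formal consequences of (i) and (ii) combined with standard Lie-group arguments.
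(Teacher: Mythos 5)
Your proof is correct and realizes precisely what the paper signals but leaves unproved: the sentence preceding the proposition says only that property \eqref{CliffMultiplicationMajoranaForms} implies it, and you derive all three identities from that single relation in the straightforward way (transfer the gammas one at a time, then reverse their order; reduce to bivectors; exponentiate). Two harmless slips are worth tidying: the chain $\tau^n(-1)^{n(n-1)/2} = (-1)^{(n^2-1)/2} = (-1)^{(n-1)^2/2}$ in the odd-$n$ check equates exponents that agree only modulo $2$ rather than as integers (all are even, so the signs coincide), and transferring $b^k$ from the left slot of $C_\tau$ produces $(-1)^k$ rather than $(-1)^l$, though $\sum_{k+l=m}\binom{m}{k}(-1)^k=(1-1)^m$ vanishes just the same.
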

The Majorana forms in proposition \ref{thm: MajoranaForms} can be explicitly constructed using a set of Euclidean gamma matrices in a special representation, in which they are all symmetric or anti-symmetric. See \cite{Hitoshi, Stone}. The content of the proposition, which we will not prove here, encodes the essential properties in a concise way.

Using the non-degenerate Majorana forms and the sesquilinear inner product \eqref{GeneralInnerProduct}, one can define natural, conjugate-linear automorphisms of $\Delta_n$.
Consider the sets
\begin{equation}
H_{s,t} \ := \ 
\begin{cases}
\{-1, 1 \}  &  \text{if $s-t$ is even} \\
\{ (-1)^{\frac{s-t+1}{2}} \}  & \text{if $s-t$ is odd} \ .
\end{cases}
\end{equation}

\begin{definition}
\label{DefinitionConjugations}
For each value $\sigma \! \in \! H_{s,t}$, the conjugation map $j_\sigma \!: \Delta_n \rightarrow \Delta_n$ is defined by
\beq
\langle \,  j_\sigma \,  \psi_1, \, \psi_2 \, \rangle \ := \ C_{\tau (\sigma)} (\psi_1, \, \psi_2) \ 
\eeq
for all $\psi_j \in \Delta_n$. Here $\tau (\sigma) =  (-1)^{t+1} \sigma$ and the inner product $\langle \, \cdot , \cdot \, \rangle$ is the one in \eqref{GeneralInnerProduct}.
\end{definition}
Although the Majorana form $C_\tau$ does not depend on the signature of the gamma matrices, the inner product $\langle \, \cdot , \cdot \, \rangle$ does. So the properties of the conjugation maps $j_\sigma$ do depend on the signature.  
\begin{proposition}
\label{thm: ConjLinearMaps}
The maps $j_\sigma$ are conjugate-linear and satisfy:
\begin{align}
j_\sigma (v \cdot \psi) \ &=  \  \sigma \ v \cdot    j_\sigma (\psi)   \label{ConjugationCliffordMultiplication}   \linebr
j_\sigma \,  j_\sigma (\psi) \ &= \ 
\begin{cases}
\; (-1)^{ \lfloor \frac{s-t}{4} \rfloor } \ (-\sigma)^{\frac{s-t}{2}} \, \psi &  \text{if $s-t$ is even} \\
\; (-1)^{\frac{(s-t)^2 -1}{8}}  \, \psi & \text{if $s-t$ is odd}
\end{cases}           \label{SquareConjugationMaps}     \linebr
j_\sigma  ( \Gamma \, \psi) \ &= \ 
\begin{cases}
\; (-1)^{ \frac{s-t}{2}} \ \Gamma \; j_\sigma (\psi) &  \text{if $s-t$ is even} \\
\; \Gamma \; j_\sigma (\psi) & \text{if $s-t$ is odd} \ ,
\end{cases} 
\end{align}
for all spinors $\psi \in \Delta_n$ and all vectors $v \in \mathbb{R}^n$. When $s-t$ is even, the map $j_-$ coincides with $\Gamma \circ j_+$ up to a complex phase.
\end{proposition}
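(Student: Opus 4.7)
The plan is to derive all four properties directly from the defining relation $\langle j_\sigma \psi_1, \psi_2\rangle := C_{\tau(\sigma)}(\psi_1, \psi_2)$, together with the algebraic properties of $C_\tau$ recorded in Propositions \ref{thm: MajoranaForms}--\ref{thm: PropertiesMajoranaForms} and the Clifford adjunction $\langle v\cdot\psi_1,\psi_2\rangle=(-1)^{t-1}\langle\psi_1,v\cdot\psi_2\rangle$ of the inner product. Since $\langle\cdot,\cdot\rangle$ is non-degenerate, the linear functional $\psi_2\mapsto C_\tau(\psi_1,\psi_2)$ determines a unique vector $j_\sigma\psi_1$; testing against the rescaled spinor $c\psi_1$ with $c\in\CC$ gives $\langle j_\sigma(c\psi_1),\cdot\rangle = c\,\langle j_\sigma\psi_1,\cdot\rangle = \langle\bar c\, j_\sigma\psi_1,\cdot\rangle$, which forces $j_\sigma(c\psi_1)=\bar c\, j_\sigma\psi_1$. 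So $j_\sigma$ is a well-defined conjugate-linear map.

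For \eqref{ConjugationCliffordMultiplication} I would chain
\begin{equation*}
\langle j_\sigma(v\cdot\psi),\phi\rangle \;=\; C_\tau(v\cdot\psi,\phi) \;=\; \tau\,C_\tau(\psi,v\cdot\phi) \;=\; \tau\,\langle j_\sigma\psi,v\cdot\phi\rangle \;=\; \tau(-1)^{t-1}\,\langle v\cdot j_\sigma\psi,\phi\rangle,
\end{equation*}
using \eqref{CliffMultiplicationMajoranaForms} and the Clifford adjunction. With $\tau(\sigma)=(-1)^{t+1}\sigma$ the prefactor $\tau(-1)^{t-1}$ collapses to $\sigma$, and non-degeneracy closes the step. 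The chirality identity $j_\sigma\Gamma = \pm\,\Gamma j_\sigma$ is then immediate: writing $\Gamma = i^{\lfloor(s-t+1)/2\rfloor}\gamma_1\cdots\gamma_n$, conjugate-linearity contributes an overall factor $\overline{i}^{\lfloor(s-t+1)/2\rfloor}=(-1)^{\lfloor(s-t+1)/2\rfloor}\,i^{\lfloor(s-t+1)/2\rfloor}$, while $n$ applications of \eqref{ConjugationCliffordMultiplication} contribute $\sigma^n$. A two-line case split on the parity of $s-t$, noting that $n$ and $s-t$ share the same parity and that $\sigma=(-1)^{(s-t+1)/2}$ is fixed by the definition of $H_{s,t}$ in the odd case, then reproduces exactly the signs asserted in the proposition.

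The main obstacle is the square identity \eqref{SquareConjugationMaps}. Conjugate-linearity makes $j_\sigma^2$ complex-linear, and iterating \eqref{ConjugationCliffordMultiplication} shows it commutes with Clifford multiplication by every vector, so Schur's lemma applied to the irreducible $\mathrm{Cl}_{s,t}$-module $\Delta_n$ forces $j_\sigma^2=\lambda\,I$ for some $\lambda\in\CC$; the identity $j_\sigma(j_\sigma^2\psi)=j_\sigma^2(j_\sigma\psi)$ combined with conjugate-linearity gives $\bar\lambda = \lambda$, so $\lambda\in\mathbb{R}$. To pin down the actual value I would pass to a basis of $\Delta_n$ orthonormal for $\psi^\dag\psi$ and write $\langle\psi,\phi\rangle = \psi^\dag K\phi$ with $K=i^{\lfloor t/2\rfloor}\gamma_1\cdots\gamma_t$, and $C_\tau(\psi,\phi)=\psi^T B\phi$ with $B$ unitary satisfying $B^T=\varepsilon(n,\tau)B$; the defining relation then becomes the closed form $j_\sigma\psi = K^{-\dag}\bar B\,\bar\psi$, so that $j_\sigma^2$ reads off as an explicit matrix product. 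Collapsing this product to the scalar $\lambda$ amounts to collecting the signs produced by the symmetry of $B$, the hermiticity pattern \eqref{ConventionsGammaMatrices2} of the $\gamma_j$, and the reversal $\gamma_t\cdots\gamma_1=(-1)^{t(t-1)/2}\gamma_1\cdots\gamma_t$. I expect this bookkeeping to be the most tedious step; it is essentially a repackaging of the mod-$8$ Bott-periodicity table for Clifford modules. An alternative is an induction on $n$ via $\mathrm{Cl}_{s,t}\simeq\mathrm{Cl}_{s-1,t-1}\otimes\mathrm{Cl}_{1,1}$ or $\mathrm{Cl}_{s,t}\simeq\mathrm{Cl}_{s,t-2}\otimes\mathrm{Cl}_{0,2}$, which converts the signature-dependent signs into a clean recursion anchored on the low-dimensional cases tabulated in \cite{Hitoshi,Stone}.

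For the final statement, assume $s-t$ is even. The relation $\Gamma(v\cdot\psi)=-\,v\cdot\Gamma\psi$ (the $n$-even case of \eqref{ChiralOperatorCliffordMultiplication}) combined with $j_+(v\cdot\psi)=v\cdot j_+\psi$ shows that $\Gamma\circ j_+$ is conjugate-linear and anticommutes with Clifford multiplication, just as $j_-$ does. Hence $j_-^{-1}\circ\Gamma\circ j_+$ is complex-linear and commutes with the whole Clifford action, so Schur forces $\Gamma\circ j_+ = \zeta\, j_-$ for some nonzero $\zeta\in\CC$. Squaring both sides and using $j_+\Gamma = (-1)^{(s-t)/2}\Gamma j_+$ together with the value of $j_\pm^2$ already computed yields $|\zeta|^2=1$, so $\zeta$ is a unit phase and the statement follows.
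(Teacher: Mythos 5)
The paper does not actually prove this proposition: it appears in the conventions appendix (Appendix~\ref{ConventionsSpinors}), where the surrounding text explicitly defers the explicit constructions and sign-tables of Majorana forms and conjugations to \cite{Hitoshi, Stone}. So there is no in-paper proof to compare against; your proposal is filling a genuine gap, and has to be judged on its own terms.

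Your derivation of conjugate-linearity and of \eqref{ConjugationCliffordMultiplication} from the defining relation is correct, and the bookkeeping for the chirality commutation rule (splitting the factor into $\overline{i}^{\lfloor(s-t+1)/2\rfloor}$ from conjugate-linearity and $\sigma^n$ from $n$ applications of \eqref{ConjugationCliffordMultiplication}, then using that $n$ and $s-t$ have the same parity and that $\sigma$ is fixed in the odd case) does reproduce the asserted signs. The final claim about $j_- = \zeta\,\Gamma j_+$ is also handled cleanly: $\Gamma j_+$ is conjugate-linear and anticommutes with Clifford multiplication, Schur gives the proportionality constant, and squaring plus \eqref{SquareConjugationMaps} pins $|\zeta|=1$.

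The one genuine gap is in the square identity itself. You correctly observe that $j_\sigma^2$ is $\mathbb{C}$-linear, commutes with the full Clifford action, and hence equals $\lambda I$ by Schur, and that associativity of $j_\sigma^3$ forces $\lambda\in\mathbb{R}$. But the entire nontrivial content of \eqref{SquareConjugationMaps} is the actual value of $\lambda$ as a function of $s-t$ mod~$8$, and you stop just before computing it, explicitly flagging the bookkeeping as the tedious step. Without that computation there is no way to distinguish $\lambda=\pm 1$, which is the whole point of the statement (real versus quaternionic structure). One small sign in your sketched closed form is also off: from $(j_\sigma\psi_1)^\dag K = \psi_1^T B$ one gets $j_\sigma\psi = K^{-\dag}\bar B^T\bar\psi$, not $K^{-\dag}\bar B\bar\psi$; the discrepancy is the symmetry factor $\varepsilon(n,\tau)$. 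That slip happens to cancel in $j_\sigma^2$ since $\varepsilon^2=1$, but it would propagate if you tried to extract, say, the relation to Lemma~\ref{thm: MorePropertiesConjLinearMaps} by the same route. Either carrying the gamma-matrix computation through in the special representation, or running the $\Cl_{s,t}\simeq\Cl_{s,t-2}\otimes\Cl_{0,2}$ recursion you mention with explicit low-dimensional anchors, would close the argument; as written it is only a reduction, not a proof.
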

A conjugation with $j_\sigma  j_\sigma = 1$ is called a real structure on $\Delta_n$. It allows the consistent imposition of the Majorana condition on spinors, $j_\sigma(\psi) = \psi$. A conjugation with $j_\sigma  j_\sigma = -1$ is called a quaternionic structure on $\Delta_n$. It defines an action of the quaternions on spinors through the representation $(\mathbf{i}, \mathbf{j}, \mathbf{k}) \rightarrow (i, j_\sigma, i\, j_\sigma)$.

Using the definition of $j_\sigma$ and properties \eqref{SymmetryMajoranaForms} and \eqref{SquareConjugationMaps}, one also calculates that
\begin{lemma}
\label{thm: MorePropertiesConjLinearMaps}
The maps $j_\sigma$ satisfy:
\begin{align}
\langle j_\sigma \, \psi_1 , \,  j_\sigma \, \psi_2 \rangle \ &= \ 
\begin{cases}
(-1)^{\frac{t(s+1)}{2}}\, \sigma^t \;   \conj{\langle \psi_1 , \, \psi_2 \rangle}   &  \text{if $s-t$ is even} \\ 
(-1)^{\frac{st}{2}}\; \conj{\langle \psi_1 , \, \psi_2 \rangle}   &  \text{if $s-t$ is odd} 
\end{cases} 
\end{align}
for all spinors $\psi_1, \psi_2 \in \Delta_n$ and for the inner product in \eqref{GeneralInnerProduct}.
\end{lemma}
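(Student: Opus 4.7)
The plan is to chain the defining identity of $j_\sigma$ with the Majorana symmetry \eqref{SymmetryMajoranaForms} and the square identity \eqref{SquareConjugationMaps}, and then exploit the Hermitian symmetry $\langle \psi_2, \psi_1\rangle = \overline{\langle \psi_1, \psi_2\rangle}$ of the product \eqref{GeneralInnerProduct} to produce the complex conjugate on the right-hand side.

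Concretely, I would first use Definition \ref{DefinitionConjugations} to write
\[
\langle j_\sigma\, \psi_1,\, j_\sigma\, \psi_2 \rangle \ = \ C_{\tau(\sigma)}(\psi_1,\, j_\sigma\, \psi_2),
\]
with $\tau(\sigma)=(-1)^{t+1}\sigma$. Next, I would apply \eqref{SymmetryMajoranaForms} to swap the arguments, picking up the factor $\varepsilon(n,\tau(\sigma))$, and then invoke Definition \ref{DefinitionConjugations} a second time to rewrite the result as $\varepsilon(n,\tau(\sigma))\,\langle j_\sigma j_\sigma\, \psi_2,\, \psi_1\rangle$. Substituting \eqref{SquareConjugationMaps} collapses $j_\sigma j_\sigma$ into a scalar $\kappa(s,t,\sigma)$, after which the Hermitian symmetry of the inner product yields
\[
\langle j_\sigma\, \psi_1,\, j_\sigma\, \psi_2 \rangle \ = \ \varepsilon(n,\tau(\sigma))\, \kappa(s,t,\sigma)\, \overline{\langle \psi_1,\, \psi_2\rangle}.
\]

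What remains is to check that $\varepsilon(n,\tau(\sigma))\,\kappa(s,t,\sigma)$ coincides with the sign stated in the lemma, in both parity cases. With $n = s+t$ and $\tau(\sigma) = (-1)^{t+1}\sigma$, the $\sigma$-dependence collects to $\sigma^t$ in the $s-t$ even case (where $\kappa$ contributes $(-\sigma)^{(s-t)/2}$ and $\varepsilon$ contributes the $\sigma^{n/2}$ piece of $((-1)^{t+1}\sigma)^{n/2}$) and drops out entirely in the odd case, since $\kappa$ is then $\sigma$-independent and $n$ is odd. The remaining $(-1)$-power is reduced by splitting according to the residues of $s$ and $t$ modulo $4$. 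The main potential obstacle is purely combinatorial --- keeping the $\lfloor \cdot/4\rfloor$ floors consistent across the parity cases --- but this is routine to handle by tabulating small cases and matching entries; there is no conceptual step beyond the short manipulation sketched above.
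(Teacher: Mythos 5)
Your outline is precisely the approach the paper intends: the text only says "one also calculates that" after pointing to Definition \ref{DefinitionConjugations}, the Majorana symmetry \eqref{SymmetryMajoranaForms}, and the square identity \eqref{SquareConjugationMaps}, and you chain exactly these --- $\langle j_\sigma\psi_1, j_\sigma\psi_2\rangle = C_{\tau(\sigma)}(\psi_1, j_\sigma\psi_2) = \varepsilon(n,\tau(\sigma))\,C_{\tau(\sigma)}(j_\sigma\psi_2,\psi_1) = \varepsilon(n,\tau(\sigma))\,\langle j_\sigma j_\sigma\psi_2,\psi_1\rangle = \varepsilon(n,\tau(\sigma))\,\kappa\,\overline{\langle\psi_1,\psi_2\rangle}$, with $\kappa$ the scalar from \eqref{SquareConjugationMaps}. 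So the strategy is not just similar to the paper's, it \emph{is} the paper's.

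The gap is that the final paragraph defers the entire content of the lemma to "tabulating small cases", and this is not merely a routine omission. Two points. First, a small precision issue: the raw $\sigma$-bookkeeping in the $s-t$ even case gives $\sigma^{n/2}\cdot\sigma^{(s-t)/2} = \sigma^{s}$, and rewriting $\sigma^{s}$ as $\sigma^{t}$ uses that $s$ and $t$ have the same parity when $s-t$ is even; you should say this rather than assert the collection outright. Second, and more seriously: when you actually tabulate, the exponent $\lfloor n/4\rfloor + (t+1)\frac{n}{2} + \lfloor\frac{s-t}{4}\rfloor + \frac{s-t}{2}$ coming from $\varepsilon(n,\tau(\sigma))\,\kappa$ does \emph{not} reduce to $t(s+1)/2 \pmod 2$ in every case. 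Try $(s,t)=(1,1)$: here $\tau(\sigma)=\sigma$, $\varepsilon(2,\sigma) = (-1)^{0}\sigma = \sigma$ and $\kappa = (-1)^{0}(-\sigma)^{0} = 1$, so your chain gives $\langle j_\sigma\psi_1, j_\sigma\psi_2\rangle = \sigma\,\overline{\langle\psi_1,\psi_2\rangle}$; one can confirm this on $\Delta_2=\mathbb{C}^2$ with $\gamma_1=\sigma_z$, $\gamma_2=i\sigma_x$, $\langle\psi_1,\psi_2\rangle=\psi_1^\dag\sigma_z\psi_2$ and $j_1\psi=\sigma_z\bar\psi$. The lemma's stated sign is instead $(-1)^{t(s+1)/2}\sigma^t = (-1)^{1}\sigma = -\sigma$. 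The same mismatch occurs whenever $s-t\equiv 0\pmod 4$ with $t$ odd (e.g.\ $(5,1)$, $(3,3)$), while all other even cases agree. So the "routine" step is exactly where the verification stalls: a complete proof must actually present the exponent calculation, and doing so reveals either that the claimed exponent needs to be reconciled with $\varepsilon\,\kappa$ in those residue classes, or that some convention used to write the lemma differs from the one implied by the earlier propositions. Either way, deferring it to a tabulation that you have not performed leaves a genuine hole in the argument.
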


\subsection*{Spinor bundles, connections and Dirac operator}

Let $M$ be an oriented manifold of dimension $n$. A topological spin structure on $M$ is a double cover $\pi: F{{\raisebox{-4pt}{$\scriptstyle \widetilde{\mathrm{GL}}^+$}}} \rightarrow F_{\mathrm{GL^+}}$ of the bundle of oriented $n$-frames on $TM$. This cover should have a right action of $\widetilde{\mathrm{GL}}_n^+$, the universal cover of the connected component of the identity of $\mathrm{GL}_n(\mathbb{R})$. The cover should be equivariant with respect to the right actions of $\widetilde{\mathrm{GL}}_n^+$ and $\mathrm{GL}_n^+$ on the respective bundles. See for example \cite{Bourguignon}.

Let $g$ be a metric on $M$ of signature $(s,t)$. Considering only the oriented, $g$-orthonormal frames on $TM$ defines a subbundle $F_{\mathrm{SO}}(g)$ inside $F_{\mathrm{GL}^+}$. The cover $\pi$ can be restricted to a double cover $F_{\mathrm{Spin}}(g) \rightarrow F_{\mathrm{SO}}(g)$ that is equivariant with respect to the right actions of the connected groups $\Spin_{s,t}^0$ and $\mathrm{SO}^0_{s,t}$ on the respective bundles. We call this restriction a metric spin structure on $M$ subordinate to the topological one. See for example \cite{Baum, Bar}.

The complex spinor bundle $S_g \rightarrow M$ is the associated vector bundle $F_{\mathrm{Spin}}(g) \times_{\Spin^0_{s,t}} \Delta_n$. The sections of $S_g$ are spinor fields $\Psi$. There is a natural action of vector fields on spinor fields by Clifford multiplication on each fibre. This extends to a left action of the Clifford bundle on $S_g$. 
The inner product $\langle \cdot , \cdot \rangle$ and the Majorana forms $C_\tau$ on spinor space $\Delta_n$ are $\Spin^0_{s,t}$-invariant. So they have natural extensions to the spinor bundle $S_g$, denoted by the same symbols. Similarly, the chirality operator $\Gamma$ and the conjugation automorphisms $j_\sigma$ commute with the $\Spin^0_{s,t}$-action on spinor space. This follows from \eqref{ChiralOperatorCliffordMultiplication} and \eqref{ConjugationCliffordMultiplication}, respectively. So they have natural extensions as automorphisms of the spinor bundle.

The Levi-Civita connection $\nabla$ on the tangent bundle $(TM, g)$ has a standard lift to the spinor bundle that is compatible with the inner product $\langle \cdot ,  \cdot \rangle$ and with Clifford multiplication. One can also check that it preserves the chirality operator, the Majorana forms and the conjugation automorphisms. In other words, 
\bal
\nabla_U (V \cdot \Psi) \ &= \  V \cdot ( \nabla_U \Psi)  +  (\nabla_U V) \cdot \Psi \\ 
\Lie_U \langle \Psi_1, \,  \Psi_2 \rangle \ &= \   \langle \nabla_U \Psi_1, \,  \Psi_2 \rangle  +   \langle \Psi_1, \, \nabla_U  \Psi_2 \rangle \\ 
\nabla \Gamma \ &= \ \nabla C_\tau \  = \ \nabla j_\sigma \ = \ 0 \ . 
\end{align}
An explicit, local formula for the covariant derivative of spinors is \cite{Bourguignon}:
\beq
\label{StandardSpinorCovariantDerivative}
\nabla_X \Psi \; := \; \partial_X \tilde{\Psi} \;+\;  \frac{1}{4}\, g^{ir} \, g^{js} \, \, g(\nabla_X v_i, v_j) \, v_r \cdot v_s \cdot  \tilde{\Psi}    \ . 
\eeq
Here $\{v_r\}$ denotes a local, oriented, $g$-orthonormal trivialization of $TM$, while $\tilde{\Psi}$ represents the spinor $\Psi$ in the induced trivialization of $S_g$. So $\tilde{\Psi}$ is a local function on $P$ with values in  $\Delta_n$ and we denote by $\partial_X  \tilde{\Psi} = (\dd \tilde{\Psi})(X)$ its directional derivative. 
Since the metric is pseudo-Riemannian, the elements $g^{rs}$ can be $\pm 1$. We will most often abuse notation and simply write $\Psi$ instead of $\tilde{\Psi}$.

The Dirac operator on $M$ can be defined by the local formula
\beq
\sD \Psi \ := \  g^{jk} \, v_j \cdot \nabla_{v_k} \Psi \ .
\eeq
For spinor fields with compact support, it satisfies
\beq
\int_M \langle \sD \Psi_1, \, \Psi_2 \rangle \; \vol_g \ = \ (-1)^t \int_M \langle \Psi_1, \,  \sD \Psi_2 \rangle \; \vol_g \ ,
\eeq
where $\langle \cdot, \cdot \rangle$ is the inner product \eqref{GeneralInnerProduct}.

\section{Relating spinors of different metrics}
\label{SpinorsDifferentMetrics}

\subsection*{Transporting the Levi-Civita connection and the Dirac operator}

The purpose of this appendix is to describe the formulas to transport objects between the spinors bundles determined by two different Riemannian metrics on the same manifold with fixed topological spin structure. These objects are the spinors themselves, their covariant derivatives, the Dirac operator and the Kosmann-Lichnerowicz derivatives. Important references on this topic include \cite{BG, Wang, AHermann}. Here we simplify some of the formulas given by Wang in section 1 of \cite{Wang}. Then we extend the calculations to the case of the transported Kosmann-Lichnerowicz derivative, which does not appear to have been considered before. The notation used here is mostly that of \cite{Wang}. We omit the proofs of the formulas that follow from straightforward, though often not short, calculations. The results in this appendix are used in section \ref{ActionsCompactGroups} to define a new Lie derivative of spinors along the fundamental vector fields of a $G$-action.

Let $(M,g)$ be an oriented, Riemannian manifold with a fixed spin structure. Let $\alpha: TM \rightarrow TM$ be an invertible map that is linear on the fibres and projects down to the identity on $M$. Assume also that $\alpha$ is $g$-symmetric, i.e. that $g (\alpha(U), V) \, = \, g (U, \alpha(V))$ for all tangent vectors $U$ and $V$. Then we can define a new Riemannian metric on $M$ by
\beq
\label{DefinitionNewMetric}
g_\alpha (U, V) \ := \ g ( \alpha^{-1}(U) , \alpha^{-1}(V) ) \ .
\eeq
The Levi-Civita connections of the new and old metrics are denoted $\nabla^{g_\alpha}$ and $\nabla^{g} = \nabla$. Using Koszul's formula, one finds that the transport of $\nabla^{g_\alpha}$ by the automorphism $\alpha$ is:
\beq
\label{ComparisonConnectionsTM}
 ( \, \alpha^{-1 } \circ \nabla_V^{g_\alpha} \circ \alpha ) \, U \; = \    \nabla_V U \;+ \; \frac{1}{2} \,  [ \, \theta^\alpha_V \, - \,  (\theta_V^\alpha)^* \, ] \, U    \ .
\eeq
Here $\theta^\alpha_V: TM \rightarrow TM$ is the endomorphism given by
\beq
\theta^\alpha_V : \, U \ \longmapsto \ \alpha \circ (\nabla_{\alpha (U)} \alpha^{-2})(V) \ + \  \alpha \circ (\nabla_V \alpha^{-1}) (U) 
\eeq
and $(\theta_V^\alpha)^*$ denotes its adjoint with respect to the metric $g$. Since $\nabla$ is a metric connection on $TM$, it follows from \eqref{ComparisonConnectionsTM} that  $\alpha^{-1 } \circ \nabla^{g_\alpha} \circ \alpha$ is a metric connection too.

The automorphism $\alpha$ of $TM$ can be lifted to a map $\alpha: S_g \rightarrow S_{g_\alpha}$ between different spinor bundles, which we denote by the same symbol. The lift is equivariant with respect to Clifford multiplication:
\beq
\alpha ( V \cdot \psi) \ = \  \alpha (V) \cdot \alpha(\psi)
\eeq
for any vector $V \in TM$ and spinor $\psi \in S_g$ (see \cite[lem. 2.3.4]{AHermann}). Let $\langle \cdot , \cdot \rangle$ be the inner product of spinors defined in \eqref{GeneralInnerProduct}. It is positive-definite in Euclidean signature. Let $\Gamma_M$ be the chirality operator on spinor bundles determined by \eqref{DefinitionChiralOperator}. Let $j_\sigma$ be any of the conjugation automorphisms in definition \ref{DefinitionConjugations}, after extension to the spinor bundles. We use the same symbols to denote the inner products, chirality operators and conjugations on $S_g$ and $S_{g_\alpha}$. Then one can check that the lift $\alpha : S_g \rightarrow S_{g_\alpha}$ has the natural properties:
\begin{align}
\langle \alpha(\psi_1), \, \alpha(\psi_2) \rangle \ &= \ \langle \psi_1, \, \psi_2 \rangle   \label{AlphaIsometry} \linebr
\alpha (\Gamma_M \, \psi ) \ &= \  \Gamma_M \, \alpha (\psi)      \label{AlphaCommutesChiral} \linebr
\alpha \circ j_\sigma (\psi) \ &= \  j_\sigma \circ \alpha (\psi)   \label{AlphaCommutesConjugations} \ .
\end{align}
The two Levi-Civita connections on $TM$ admit standard lifts to the respective spinor bundles, which we still denote $\nabla^{g_\alpha}$ and $\nabla$. These are connections on different bundles, $S_{g_\alpha}$ and $S_g$. To compare them we need to transport $\nabla^{g_\alpha}$ back to $S_g$, so to consider instead the covariant derivative $\alpha^{-1 } \circ \nabla_V^{g_\alpha} \circ \alpha$. Using \eqref{StandardSpinorCovariantDerivative} and \eqref{ComparisonConnectionsTM}, one calculates that
\beq
\label{ComparisonConnectionsSM}
\alpha^{-1 } \circ \nabla_V^{g_\alpha} \circ \alpha \, (\psi) \; = \; \nabla_V \psi \ + \ \frac{1}{4}\, \sum_{i \neq j} \; g (  \theta^\alpha_V (v_i), \: v_j  )\; v_i \cdot v_j \cdot \psi  \ , 
\eeq
where $\{ v_k\}$ denotes any local, oriented, $g$-orthonormal trivialization of $TM$. This is a simplification of the formula given in \cite{Wang}. One can also calculate how the Dirac operator on $S_{g_\alpha}$, denoted $ \sD^{g_\alpha}$, appears when transported to $S_g$. The result is:
\begin{align}
\label{TransportedDiracOperator}
\big(\alpha^{-1 } \circ \sD^{g_\alpha} \circ \alpha \big) \psi \ = \ &\sum\nolimits_k  \alpha(v_k) \cdot \nabla_{v_k} \psi \ - \ \frac{1}{2} \big\{ \, \delta \alpha \: + \:  \alpha \big[  \, \grad (\log \det \alpha) \, \big]  \, \big\} \cdot \psi  \linebr
&+ \; \frac{1}{4}\, \sum_{i \neq j \neq k} \, g \big( \alpha^{-1} \circ  (\nabla_{\alpha (v_i)} \alpha)(v_j) , \; v_k \big) \; v_i \cdot v_j \cdot v_k \cdot \psi  \nonumber \ .
\end{align}
In this formula, $\det \alpha$ is the function on $M$ relating volume forms, $\vol_{g_\alpha} = (\det \alpha) \, \vol_{g}$. The vector field $\delta \alpha$ on $M$ is defined by $\delta \alpha =  -  \sum\nolimits_k \,(\nabla_{v_k} \alpha)( v_k)$.
The sum over $i \neq j \neq k$ means that all three indices should be different. Formula \eqref{TransportedDiracOperator} is a slightly more developed version of the one given in \cite{Wang}. It isolates the contributions of the volume change $\det \alpha$ and the divergence $\delta \alpha$ to the new operator. Using \eqref{TransportedDiracOperator}, one can then calculate that
\begin{align}
\label{AdjointTransportedDiracOperator}
\int_M \big\langle \big(\alpha^{-1 } \circ \sD^{g_\alpha} \circ \alpha \big) \psi_1 , \, \psi_2 \big\rangle \; \vol_g \; = \; \int_M \big\{ &\big\langle \psi_1, \, \big(\alpha^{-1 } \circ \sD^{g_\alpha} \circ \alpha \big) \psi_2  \big\rangle \; \\ &+ \; \big\langle \psi_1 , \; \alpha \big[  \, \grad (\log \det \alpha) \, \big] \cdot \psi_2 \big\rangle  \big\} \; \vol_g  \nonumber \ .
\end{align}
This implies that the elliptic, first-order differential operator on $S_g$ given by
\beq
\psi \ \longmapsto \ \alpha^{-1 } \circ \sD^{g_\alpha} \circ \alpha (\psi) \;+ \;  \frac{1}{2} \, \alpha \big[  \, \grad (\log \det \alpha) \, \big] \cdot \psi
\eeq 
is formally self-adjoint with respect to the $L^2$ metric on spinors induced by $g$. Just as the Dirac operator $\sD^{g}$ is. The calculation that leads to \eqref{AdjointTransportedDiracOperator} is similar to the calculation in the usual proof that the Dirac operator is self-adjoint (e.g. \cite[lemma 2.28]{Bourguignon}).

\subsection*{Transporting the Kosmann-Lichnerowicz derivative}

Consider now the Kosmann-Lichnerowicz derivative $\cL^{g_\alpha}_V$ on the spinor bundle $S_{g_\alpha}$, defined in \eqref{KLDerivative}. Again, one can transport it to $S_{g}$ and compare it with the native derivative $\cL^g_V$. A calculation using \eqref{ComparisonConnectionsTM} and \eqref{ComparisonConnectionsSM}, applied to definition \eqref{KLDerivative}, shows that
\beq
\label{TransportedKosmannDerivative}
\alpha^{-1 } \circ \cL_V^{{g_\alpha}}  \circ \alpha \, (\psi) \ = \ \cL_V^g \, \psi \; + \;  \frac{1}{4} \; \sum_{j\neq k} \; g \big( \alpha^{-1} (\Lie_V \alpha) (v_j),\, v_k \big) \; v_j \cdot v_k \cdot \psi
\eeq
Here $\Lie_V \alpha$ denotes the standard Lie derivative of the automorphism $\alpha$. In the special case where $\Lie_V \alpha$ vanishes, the operator $\alpha^{-1 } \circ \cL_V^{{g_\alpha}}  \circ \alpha$ coincides with $\cL^g_V$. The same thing happens when $\alpha^{-1} (\Lie_V \alpha)$ is self-adjoint with respect to $g$. In this case, the coefficients $g ( \alpha^{-1} (\Lie_V \alpha) (v_j),\, v_k )$ are symmetric in $j, k$, and hence the sum in \eqref{TransportedKosmannDerivative} vanishes.

Just as the original $\cL_V^g$, the transported Kosmann-Lichnerowicz derivative commutes with the chirality operator $\Gamma_K$ and is compatible with the inner product of spinors, 
\beq
\langle \, \alpha^{-1 } \, \cL_V^{{g_\alpha}}  \, \alpha \, (  \psi_1) \, , \, \psi_2 \, \rangle \ + \  \langle \, \psi_1  \, , \, \alpha^{-1 } \, \cL_V^{{g_\alpha}}  \, \alpha \, (  \psi_2)\, \rangle \ = \ \Lie_V  \langle \, \psi_1  \, , \, \psi_2 \, \rangle  \ .
\eeq
However, it acts on the Clifford multiplication of vector and spinor fields through
\beq
\alpha^{-1 } \, \cL_V^{{g_\alpha}}  \, \alpha \, ( \, U \cdot \psi \, ) \ = \ [V, U]\cdot \psi \; + \;  U \cdot \big[  \alpha^{-1 } \, \cL_V^{{g_\alpha}}  \, \alpha \, ( \psi) \big] \; + \;  \sum_k \, H^\alpha_V(U, v_k) \,  v_k \cdot \psi \ .
\eeq
Its commutator with the Dirac operator on $S_g$ is
\begin{align}
\label{CommutatorNewDerivativeDiracOperator}
\big[ \, \sD \, , \, \alpha^{-1 } \,\cL_V^{{g_\alpha}}  \, \alpha \,\big] \psi \; = \ &\sum_{j, k}\;  H^\alpha_V(v_j, v_k) \; v_j \cdot \nabla_{v_k} \psi \;  + \; \frac{1}{4}\, \sum_{i \neq j \neq k}   (\nabla_{v_i} H^\alpha_V) (v_j, v_k) \;   v_i \cdot  v_j  \cdot v_k \cdot \psi  \nonumber \linebr
&+ \; \frac{1}{2}\, \sum_{j, k}  \; \big[  \,  (\nabla_{v_k} H^\alpha_V) (v_j, v_k) \: - \:   (\nabla_{v_j} H^\alpha_V) (v_k, v_k) \, \big] \, v_j \cdot \psi     \ .
\end{align}
In the last two identities, $H^\alpha_V$ is the 2-tensor on $M$ defined by
\beq
\label{NewTensor}
H^\alpha_V (X,Y) \; := \; \frac{1}{2} \, \big[   \, (\Lie_V g)(X, Y) \: + \: g\big(\alpha^{-1} (\Lie_V \alpha) (X) , \, Y \big) \: - \:  g\big(\alpha^{-1} (\Lie_V \alpha) (Y) , \, X \big) \,  \big] 
\eeq
for all vector fields $X$ and $Y$. It has manifestly symmetric and anti-symmetric parts. This tensor vanishes if and only if $V$ is Killing and $\alpha^{-1} (\Lie_V \alpha)$ is self-adjoint with respect to $g$. These properties of the transported Kosmann-Lichnerowicz derivative can be proved using the analogous ones of $\cL_V^g$, as stated for example in \cite{Bap3}, and then calculating the corrections due to the second term on the right-hand side of \eqref{TransportedKosmannDerivative}.

\begin{proposition}
\label{InvarianceKLDerivativesKilling}
Take a $g$-symmetric automorphism $\alpha: TM \rightarrow TM$ that is positive-definite. As before, also call $\alpha$ the lifted spinor map $S_g \rightarrow S_{g_\alpha}$, where the metric $g_\alpha$ is as in \eqref{DefinitionNewMetric}. If a vector field $V$ is conformal Killing both for $g$ and $g_\alpha$, then 
\beq
\alpha^{-1 } \circ \cL_V^{g_\alpha} \circ \alpha  \ = \ \cL_V^g 
\eeq
as differential operators on spinors in $S_g$.
\end{proposition}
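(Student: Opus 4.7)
The plan is to apply the transport formula \eqref{TransportedKosmannDerivative} and reduce the statement to a purely algebraic identity on $TM$. Writing $\Phi := \Lie_V \alpha$ and $B := \alpha^{-1}\Phi$, \eqref{TransportedKosmannDerivative} asserts that the difference $\alpha^{-1}\circ\cL^{g_\alpha}_V\circ\alpha - \cL^g_V$ acts on a spinor $\psi$ by $\tfrac{1}{4}\sum_{j\neq k} g(B v_j, v_k)\, v_j \cdot v_k \cdot \psi$. Using the Clifford relation $v_j\cdot v_k = -\, v_k\cdot v_j$ for $j\neq k$ in the Riemannian orthonormal frame and re-pairing indices, this sum collapses to $\tfrac{1}{4}\sum_{j<k} g\bigl((B - B^{\ast})v_j,\, v_k\bigr)\, v_j\cdot v_k \cdot \psi$, where $B^{\ast}$ denotes the $g$-adjoint of $B$. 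Hence the proposition follows once we establish that $B = B^{\ast}$.

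For this, I would first show that $\Phi$ is symmetric with respect to both metrics. The $(0,2)$-tensor $\tilde\alpha(U,W) := g(\alpha U,W)$ is symmetric because $\alpha$ is $g$-symmetric, so $\Lie_V \tilde\alpha$ is also symmetric. Expanding yields $(\Lie_V \tilde\alpha)(U,W) = (\Lie_V g)(\alpha U,W) + g(\Phi U,W)$, and the conformal Killing identity $\Lie_V g = \lambda g$ together with $g$-symmetry of $\alpha$ makes the first summand symmetric in $(U,W)$. Hence $g(\Phi U,W)$ is symmetric too, i.e.\ $\Phi$ is $g$-self-adjoint. A direct check shows that $\alpha$ is also $g_\alpha$-symmetric (from $g_\alpha(\alpha U,W)=g(U,\alpha^{-1}W)=g_\alpha(U,\alpha W)$), so the same argument applied with $\Lie_V g_\alpha = \mu g_\alpha$ shows that $\Phi$ is $g_\alpha$-self-adjoint as well.

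To close the algebraic step, the identity $g_\alpha(U,W) = g(\alpha^{-1}U,\alpha^{-1}W)$ together with $g$-symmetry of $\alpha^{-1}$ gives the adjoint transformation rule $T^{\ast_{g_\alpha}} = \alpha^{2}\, T^{\ast_g}\,\alpha^{-2}$ for any fibrewise endomorphism $T$ of $TM$. Applying this to $\Phi$ and using that $\Phi$ is self-adjoint for both metrics yields $\Phi = \alpha^{2}\Phi\alpha^{-2}$, i.e.\ $[\alpha^{2},\Phi] = 0$. Since $\alpha$ is positive-definite and $g$-symmetric, it is the unique positive square root of $\alpha^{2}$ in the functional calculus, so any endomorphism commuting with $\alpha^{2}$ must also commute with $\alpha$; hence $[\alpha,\Phi] = 0$. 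Combined with the $g$-symmetry of $\alpha^{-1}$ and $\Phi$, this gives $B^{\ast} = \Phi\alpha^{-1} = \alpha^{-1}\Phi = B$, the correction term vanishes, and the proposition follows.

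The step I expect to be most delicate is the implication $[\alpha^{2},\Phi] = 0 \Rightarrow [\alpha,\Phi] = 0$: this is precisely where positive-definiteness of $\alpha$ is essential, and it calls for a pointwise spectral (or functional-calculus) argument carried out uniformly along $M$. Without it, one only obtains the weaker identity $B + B^{\ast} = (\lambda - \mu)\, I$, which does not yet force the antisymmetric (hence Clifford-relevant) part of $B$ to vanish.
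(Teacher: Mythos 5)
Your proof is correct and takes a genuinely different route from the paper's. The paper derives $\Lie_V(\alpha^{-2}) = (f_\alpha - f)\,\alpha^{-2}$ directly from the two conformal Killing conditions and then ``takes a square root'' by pointwise diagonalization of the positive-definite $\alpha^{-1}$: solving the entrywise relation $(\lambda_i + \lambda_j)a_{ij} = (f_\alpha - f)\delta_{ij}\lambda_j^2$ gives $\Lie_V(\alpha^{-1}) = \tfrac{1}{2}(f_\alpha - f)\,\alpha^{-1}$, whence $\alpha^{-1}(\Lie_V\alpha)$ is a multiple of the identity --- a strictly stronger conclusion than is needed for the sum in \eqref{TransportedKosmannDerivative} to vanish. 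Your route instead isolates the structural fact that $\Phi := \Lie_V\alpha$ is self-adjoint for \emph{both} metrics (a tidy consequence of $\Lie_V\tilde\alpha$ being symmetric), converts this via the adjoint-transfer rule $T^{\ast_{g_\alpha}} = \alpha^2 T^{\ast_g}\alpha^{-2}$ into $[\alpha^2,\Phi]=0$, and then deduces $[\alpha,\Phi]=0$ from the uniqueness of the positive square root; this only yields $g$-symmetry of $B = \alpha^{-1}\Phi$, which is exactly what the correction term in \eqref{TransportedKosmannDerivative} is sensitive to. The two arguments use positive-definiteness at the same conceptual ``square-root'' step --- the paper's $(\lambda_i+\lambda_j)a_{ij}$ manipulation is essentially the coordinate version of your functional-calculus implication $[\alpha^2,\Phi]=0 \Rightarrow [\alpha,\Phi]=0$ --- but yours is arguably cleaner in that it proves no more than necessary, and your closing remark correctly identifies that without positive-definiteness the conformal Killing hypotheses still pin down the $g$-symmetric part of $B$ (via $B + B^{\ast} = (\lambda-\mu)I$) without controlling the antisymmetric part, which is the part the Clifford sum sees.
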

\begin{proof}
Using the definition of $g_\alpha$ and the properties of Lie derivatives, one calculates that
\beq
\label{IdentityLieDerivativeAlpha}
(\Lie_V g_\alpha) (X, Y) \ = \ (\Lie_V g)(\alpha^{-2}(X) , Y)  \; +\; g( (\Lie_V \alpha^{-2}) (X), Y )
\eeq
for all vector fields $X$ and $Y$ on $M$. The assumptions on $V$ mean that there are real functions $f$ and $f_\alpha$ on $M$ such that 
\beq
\Lie_V  g  \; = \; f \,g \qquad \qquad  \Lie_V  g_\alpha  \; = \; f_\alpha \,g_\alpha 
\eeq
as 2-tensors on $M$. Applying this to \eqref{IdentityLieDerivativeAlpha} and using the non-degeneracy of $g$, leads to 
\beq
\label{IdentityLieDerivativeConformal}
\Lie_V (\alpha^{-2}) \; = \; (f_\alpha - f) \, \alpha^{-2}
\eeq
as automorphisms of $TM$. Now, the fact that $\alpha$ is $g$-symmetric and positive-definite implies that $\alpha^{-1}$ is as well. So there is a local, $g$-orthonormal trivialization of $TM$ in which the automorphism $\alpha^{-1}$ is represented by a diagonal matrix $\diag (\lambda_1, \ldots , \lambda_m)$. Here $m = \dim M$ and the $\lambda_j$ are local, positive functions on $M$. Denote by $a_{ij}$ the entries of the matrix representing $\Lie_V( \alpha^{-1})$ in the same trivialization. Then \eqref{IdentityLieDerivativeConformal} implies that
\beq
(\lambda_i + \lambda_j)\, a_{ij} \; = \; (f_\alpha - f) \, \delta_{ij} \, \lambda_j^2
\eeq
for all $i$ and $j$. Since $\lambda_i + \lambda_j$ is always positive, it follows that 
\beq
2 \, a_{ij} \; = \; (f_\alpha - f) \, \delta_{ij} \,  \lambda_j \qquad  \Leftrightarrow  \qquad  2\, \Lie_V ( \alpha^{-1}) \; = \; (f_\alpha - f) \, \alpha^{-1} \ .
\eeq
So one also gets 
\beq
\alpha^{-1} (\Lie_V \alpha) \; =\; - \, (\Lie_V  \alpha^{-1}) \, \alpha \; =\; \frac{1}{2}\, ( f - f_\alpha) \, I \ .
\eeq
Finally, using this expression in \eqref{TransportedKosmannDerivative}, it is clear that the sum on the right-hand side of that equation will vanish. So we obtain the desired identity of differential operators.  
\end{proof}

\subsection*{Transport to first order}

Let $(M,g)$ be an oriented, Riemannian manifold with a fixed topological spin structure. Let $\alpha_t$ be a smooth 1-parameter family of invertible automorphisms $TM$ that project to the identity on $M$, starting at $\alpha_0 = \mathrm{Id}$. Assume that all the $\alpha_t$ are $g$-symmetric, which implies that the  $\alpha_t^{-1}$ are as well. The derivative $\dot{\alpha} := \frac{\dd}{\dd t} \alpha \, |_{t=0}$ is then a $g$-symmetric endomorphism of $TM$. 
%that projects to the identity on $M$. 
So are the covariant derivatives $\nabla_V \dot{\alpha}$ for all vector fields $V$ on $M$.

For small $t$, define the family of Riemannian metrics $g_{\alpha_t}$ on $M$ by \eqref{DefinitionNewMetric}. As before, the automorphisms $\alpha_t$ of $TM$ can be lifted to maps $\alpha_t: S_g \rightarrow S_{g_{\alpha_t}}$ between spinor bundles. These lifts can be used to transport the various differential operators between the two spinor bundles.
The first-order variations of the transported operators are then given by:
\begin{align}
\frac{\dd}{\dd t} \, (\, \alpha_t^{-1}\, \nabla_V^{g_{\alpha_t}} \, \alpha_t \,) (U) \ |_{t=0} \ &= \ -\, (\nabla_U \dot{\alpha}) (V) \; + \; \sum_j \, g\big( (\nabla_{v_j} \dot{\alpha}) (V) , \, U \big) \, v_j \\
\frac{\dd}{\dd t} \, (\, \alpha_t^{-1}\, \nabla_V^{g_{\alpha_t}} \, \alpha_t \,)\, \psi \ |_{t=0} \ &= \ -\, \frac{1}{4} \, \sum_{i \neq j} g\big( (\nabla_V \dot{\alpha}) (v_i) \: + \: 2\, (\nabla_{v_i} \dot{\alpha}) (V)\, , \, v_j \big) \, v_i \cdot v_j \cdot \psi   \\
\frac{\dd}{\dd t} \, (\, \alpha_t^{-1}\, \cL_V^{g_{\alpha_t}} \, \alpha_t \,)\, \psi \ |_{t=0} \ &= \ -\, \frac{1}{4} \, \sum_{i \neq j} \, (\Lie_V g)( \dot{\alpha} (v_i) \, , \, v_j ) \: v_i \cdot v_j \cdot \psi  \label{FirstVariationKLDerivative} \\
\frac{\dd}{\dd t} \, (\, \alpha_t^{-1}\, \sD^{g_{\alpha_t}} \, \alpha_t \,)\, \psi \ |_{t=0} \ &= \ \sum_j \, \dot{\alpha} (v_j) \cdot \nabla_{v_j} \psi \; -\; \frac{1}{2} \, \big[ \delta \dot{\alpha} \: + \: \grad (\mathrm{tr}^g \, \dot{\alpha}) \big] \cdot \psi  \label{FirstVariationDiracOperator} \ .
\end{align}
Formula \eqref{FirstVariationDiracOperator} for the Dirac operator was given long ago in \cite{Wang, BG}. Formula \eqref{FirstVariationKLDerivative} for the Kosmann-Lichnerowicz derivative appears to be new. It shows that the first variation vanishes when $V$ is conformal Killing on $M$.

\section{Proofs for section \ref{ReflectionsParity}}
\label{ProofsReflectionsParity}

\begin{proof}[\bf Proof of uniqueness in proposition \ref{thm:ReflectionsSpinors}]
Suppose that $\hat{R}$ and $\tilde{R}$ are two $\CC$-linear maps of spinors $\Gamma (S_{g_\PPP}) \rightarrow \Gamma (S_{\RR^\ast g_\PPP})$ with the four properties listed in proposition \ref{thm:ReflectionsSpinors}. Property \eqref{ReflectionsSquared} implies that $\hat{R}$ and $\tilde{R}$ are both invertible maps. The composition
\beq
\hat{R} \, \tilde{R}: \Gamma (S_{g_\PPP}) \ \rightarrow \ \Gamma (S_{\RR^\ast \RR^\ast g_\PPP}) = \Gamma (S_{g_\PPP})
\eeq
is then an automorphism of $\Gamma (S_{g_\PPP})$. From property \eqref{ReflectionsFunctions} of $\hat{R}$ and $\tilde{R}$, it follows that, for any complex function $f$ on $P$,
\beq
\hat{R} \, \tilde{R} (f \Psi) \ = \ (R^\ast R^\ast f ) \; (\hat{R} \, \tilde{R} \, \Psi)   \ = \  f  \;  (\hat{R} \, \tilde{R} \, \Psi) \ .
\eeq
So the map $\hat{R} \, \tilde{R}$ is $C^\infty (P)$-linear, and hence must be induced by a $\CC$-linear automorphisms of the bundle $S_{g_\PPP}$ that preserves its fibres. Moreover, it follows from property \eqref{ReflectionsCliffordMultiplications} of $\hat{R}$ and $\tilde{R}$ that, for any vector field $W$ on $P$, 
\beq
\hat{R} \, \tilde{R} (W \cdot \Psi) \ = \ (R_\ast R_\ast W ) \cdot (\hat{R} \, \tilde{R} \, \Psi)   \ = \  W \cdot (\hat{R} \, \tilde{R} \, \Psi)  \ .
\eeq
This means that the automorphism of the fibres of $S_{g_\PPP}$ determined by $\hat{R} \, \tilde{R}$ commutes with Clifford multiplication of vectors and spinors. Hence, it commutes with the action of the whole Clifford algebra on the fibre of $S_{g_\PPP}$. But these fibres are isomorphic to the complex spinor space $\Delta_{m+k}$, on which the Clifford algebra acts irreducibly, by definition. Thus, Schur's lemma implies that the composition $\hat{R} \, \tilde{R}$ is proportional to the identity on each fibre, and hence must be of the form
\beq
\label{CompositionSimplified}
\hat{R} \, \tilde{R} (\Psi) \ = \ \alpha \; \Psi
\eeq
for some complex function $\alpha$ on $P$. From property \eqref{ReflectionsCovariantDerivative} of  $\hat{R}$ and $\tilde{R}$, we also have that
\beq
\alpha \, \nabla_W \Psi \ = \  \hat{R} \, \tilde{R} ( \nabla_W \Psi) \ = \ \nabla_{\RR_\ast \RR_\ast W }  (\hat{R} \, \tilde{R} \, \Psi)   \ = \   \nabla_{W}  (\alpha \, \Psi) \ = \ \alpha \,  \nabla_{W}  \Psi  \; +\;  \dd\alpha (W) \; \Psi   \nonumber
\eeq
for all vector fields $W$ and spinors $\Psi$. So $\dd \alpha$ must vanish and the function $\alpha$ is constant on $P$. This constant cannot be zero because $\hat{R}$ and $\tilde{R}$ are invertible maps, so $\hat{R} \, \tilde{R}$ must be too. Finally, using property \eqref{ReflectionsSquared} of the maps  $\hat{R}$ and $\tilde{R}$ and denoting by $\hat{\eta}$ and $\tilde{\eta}$ the respective constants in $\mathrm{U} (1)$, it follows from \eqref{CompositionSimplified} that
\begin{align}
\alpha \; \hat{R}\,  (\Psi) \ &= \  \hat{R}( \alpha \; \Psi ) \ = \ \hat{R} \, \hat{R} \, \tilde{R} (\Psi) \ = \ \hat{\eta}\,  \tilde{R} (\Psi)  \label{RelationTwoReflections}   \linebr
\alpha \; \tilde{R}\,  (\Psi) \ &= \   \hat{R} \, \tilde{R}\, \tilde{R}\,  (\Psi)   \ = \ \hat{R} \, ( \tilde{\eta}\,  \Psi) \ = \   \tilde{\eta}\, \hat{R} \, (\Psi)  \ = \  \tilde{\eta}\, \alpha^{-1}\, \alpha \, \hat{R}\,  (\Psi) \ = \  \tilde{\eta}\, \alpha^{-1}\,  \hat{\eta}\,  \tilde{R} (\Psi)  \nonumber
\end{align}
for all spinors $\Psi$. So $\alpha^2 = \tilde{\eta}\, \hat{\eta}$ and $\alpha$ must be a complex phase, just as $\hat{\eta}$ and $\tilde{\eta}$ are. Thus, it follows from \eqref{RelationTwoReflections}, for example, that $\hat{R}$ and $\tilde{R}$ differ only by a constant complex phase.
\end{proof}

\vspace{.3cm}

\begin{proof}[\bf Construction of the spinor map in proposition \ref{thm:ReflectionsSpinors}]
In this part of the proof of proposition \ref{thm:ReflectionsSpinors}, we will construct the spinor reflection map from a local formula and show that it is globally well-defined.
The first task is to define oriented, local trivializations of the spinor bundles $S_{g_\PPP}$ and $S_{\RR^\ast g_\PPP}$ where the local formula \eqref{LocalSpinorReflections} can be applied. Start by picking an oriented, $g_P$-orthonormal trivialization of the tangent bundle $TP= \HH \oplus \VV $ over a domain $M\times {\mathcal U}_\alpha$ formed entirely of horizontal and vertical vector fields. So a trivialization of the form $\{ X_\mu^\HH, v_j^\alpha \}$, as described at the end of section \ref{Submersions}.
The $v_j^\alpha$ form an orthonormal basis of $TK$ over the domain ${\mathcal U}_\alpha \subset K$ with respect to the metric $g_K (x)$, for each $x \in M$. The $X_\mu$ form a $g_M$-orthonormal basis of $TM$ and their horizontal lift to $P$ is
\beq
\label{BasicLiftX2}
X_\mu^\HH =  X_\mu \; + \; A^a(X_\mu) \; e_a  \ .
\eeq
As usual, this local trivialization of $TP$ induces a trivialization of the spinor bundle $S_{g_\PPP}$ over the same domain $M\times {\mathcal U}_\alpha$. Now define a different, $R^\ast g_P$-orthonormal trivialization $\{ \tilde{X}_\mu^\HH, \tilde{v}_j^\alpha \}$ of $TP$ by putting 
\begin{align}
\tilde{v}_j^\alpha \ &:= \ R_\ast \, v_j^\alpha      \label{SecondTrivializationK}      \linebr  
 \tilde{X}_\mu^\HH \ &:= \  (-1)^{\delta_{\mu \nu}} \, R_ \ast (X_\mu^\HH)  \ = \ X_\mu \; + \; (R^\ast A)^a(X_\mu) \; e_a  \ .   \label{SecondTrivialization}
\end{align}
Here we have used the identity $R_\ast X_\mu = (-1)^{\delta_{\mu \nu}}\, X_\mu$ and the fact that $R_\ast \, e_a = e_a$, since $e_a$ is a vector field on $K$. 
The trivialization $\{ \tilde{X}_\mu^\HH, \tilde{v}_j^\alpha \}$ has the same orientation as $\{ X_\mu^\HH, v_j^\alpha \}$. It induces a trivialization of the second spinor bundle $S_{\RR^\ast g_\PPP}$ over the same domain $M\times {\mathcal U}_\alpha$.

To define the spinor reflection map, let $\Psi$ be any spinor on $S_{g_\PPP}$. With respect to the trivialization of this bundle induced by $\{ X_\mu^\HH, v_j^\alpha \}$, the spinor is represented by a function $\Psi_\alpha (x,y) : M \times  {\mathcal U}_\alpha \rightarrow \Delta_{m+k}$. Then we declare that the reflected spinor $R (\Psi)$ should be represented by the function
\beq
\label{LocalSpinorReflectionsAppendix}
(R \,\Psi)_{ \alpha} (x, y) \ := \ e^{i \xi}\: \Gamma_\nu \: \Psi_{ \alpha}(R(x), y) 
\eeq
with respect to the trivialization of $S_{\RR^\ast g_\PPP}$ induced by the second trivialization $\{ \tilde{X}_\mu^\HH, \tilde{v}_j^\alpha \}$ of $TP$. Here $e^{i \xi}$ is a constant complex phase, and $\Gamma_\nu$ is the gamma matrix on $\Delta_{m+k}$ corresponding to the coordinate $x^\nu$ of $M$ that changes sign in the reflection. To see that $R (\Psi)$ is globally well-defined, we have to check that the local definitions are consistent on intersections of trivialization domains.

Suppose that $M \times {\mathcal U}_\beta$ is another domain in $P$ with two oriented trivializations of $TP$, denoted $\{ X_\mu^\HH, v_j^\beta \}$ and $\{ \tilde{X}_\mu^\HH, \tilde{v}_j^\beta \}$, constructed in a way analogous to the previous ones. On  the intersection we have the transition functions $b_{\alpha \beta} : M \times ({\mathcal U}_\alpha \cap {\mathcal U}_\beta) \rightarrow {\mathrm{SO}} (k)$ defined by
\beq
v_j^\alpha \ = \   \sum\nolimits_i \, (b_{\alpha \beta})_j^i \;  v_i^\beta \ .
\eeq
From the definition \eqref{SecondTrivializationK} of $\tilde{v}_j^\alpha$, it is clear that the transition functions for the second family of trivializations are just the pullback functions
\beq
\tilde{b}_{\alpha \beta} \ = \ R^\ast b_{\alpha \beta} \ .
\eeq
Now, the fixed topological spin structure on $K$ and its extension to $M \times K$, determines a lift of the functions $b_{\alpha \beta}$ to functions
$B_{\alpha \beta} : M \times ({\mathcal U}_\alpha \cap {\mathcal U}_\beta) \rightarrow {\mathrm{Spin}} (k)$ satisfying the usual consistency conditions on triple intersections. The transition functions of the bundle $S_{g_\PPP} = S(\HH) \otimes S(\VV)$ are then
\beq
(I \otimes B_{\alpha \beta}):  M \times ({\mathcal U}_\alpha \cap {\mathcal U}_\beta) \rightarrow {\mathrm{Spin}} (m-1, 1) \times {\mathrm{Spin}} (k) \ .
\eeq
One can then check that the pullback functions $I \otimes R^\ast B_{\alpha \beta}$ also satisfy all the consistency conditions on triple intersections and form a set of transition functions for the bundle $S_{\RR^\ast g_\PPP}$ covering the functions $\tilde{b}_{\alpha \beta}$. Using these two sets of transition functions for the two spinor bundles, we finally calculate that, on the intersection of domains $M \times ({\mathcal U}_\alpha \cap {\mathcal U}_\beta)$, 
\begin{align}
\label{ConsistencyReflectionsTransitionFunctions}
(R\, \Psi)_\beta (x,y) \ &= \ e^{i \xi}\: \Gamma_\nu \: \Psi_{ \beta}(R(x), y)  \linebr 
&=  \ e^{i \xi}\: (\gamma_\nu \otimes I ) \:  [I \otimes B_{\beta \alpha}(R(x), y)  ] \, \Psi_{\alpha}(R(x), y)   \nonumber  \linebr 
&=  \ e^{i \xi}\:   [I \otimes (R^\ast B_{\beta \alpha} ) (x,y)] \:  (\gamma_\nu \otimes I )  \, \Psi_{\alpha}(R(x), y)   \nonumber    \linebr 
&=  \  [I \otimes (R^\ast B_{\beta \alpha} ) (x,y)] \; (R\, \Psi)_\alpha (x,y)      \nonumber   \ .
\end{align}
In the second equality we have used the construction of the higher-dimensional gamma matrices from \eqref{CliffordMultiplication}. 
Formula \eqref{ConsistencyReflectionsTransitionFunctions} shows that the local representatives of the reflected spinor $R\, \Psi$, as defined in \eqref{LocalSpinorReflectionsAppendix}, transform as they should in order to be a section of the bundle $S_{\RR^\ast g_\PPP}$. So  the spinor $R(\Psi)$ is globally well-defined on $M \times K$ and hence the map $R: \Gamma(S_{g_\PPP}) \rightarrow \Gamma(S_{\RR^\ast g_\PPP})$ is well-defined too.
\end{proof}

\vspace{.3cm}

\begin{proof}[\bf Verification of the four properties in proposition \ref{thm:ReflectionsSpinors}]
Using the local definition of spinor reflection in \eqref{LocalSpinorReflectionsAppendix} and the standard pullback of a function, we have
\beq
[R \, (f\,\Psi)]_{ \alpha} (x, y) \ = \  e^{i \xi}\: \Gamma_\nu \; f(R(x), y)\;  \Psi_{ \alpha}(R(x), y) \ = \ (R^\ast f)(x,y)\;  (R \,\Psi)_{ \alpha} (x, y) \nonumber \ .
\eeq
The properties of the gamma matrices and the chirality operator on $\Delta_{m+k}$ also imply that
\beq
[R\, R \, (\Psi)]_{ \alpha} (x, y) \ = \   e^{2 i \xi}\: (\Gamma_\nu)^2 \;  \Psi_{ \alpha}(R\, R(x), y) \ = \ \eta \, \Psi_{ \alpha} (x, y) \nonumber
\eeq
for the complex phase $\eta =  - \, e^{2 i \xi} \, (g_M)_{\nu \nu}$. These identities are valid on arbitrary domains $M \times {\mathcal U}_\alpha$, so we obtain that properties \eqref{ReflectionsFunctions} and \eqref{ReflectionsSquared} of proposition \ref{thm:ReflectionsSpinors} are valid globally.

Let $V$ be a vertical vector field on $P$, expressed locally as $V^j\, v^\alpha_j$. Then the local representative of the Clifford product $V \cdot \Psi$ is the function $V^j\, \Gamma_j \Psi_\alpha$ with values on $\Delta_{m+k}$. Since $\Gamma_j$ anti-commutes with $\Gamma_\nu$, it is clear from definition \eqref{LocalSpinorReflectionsAppendix} that property \eqref{ReflectionsCliffordMultiplications} in the proposition is valid for vertical vector fields. Now consider the horizontal, basic vector field $X_\mu^\HH$ of \eqref{BasicLiftX2}.
Using the local definition \eqref{LocalSpinorReflectionsAppendix} and the fact that $X_\mu^\HH$ is part of the $g_P$-orthonormal trivialization of $TP$ while $\tilde{X}_\mu^\HH$ is part of the $R^\ast g_P$-orthonormal trivialization, we calculate that the local representative of $R(X_\mu^\HH \cdot \Psi)$ satisfies
\begin{align} 
\label{CompatibilityCliffMultReflections2}
[ \, R(X_\mu^\HH \cdot \Psi) \, ]_\alpha \, |_{(x, y)} \ &=\   e^{i \xi}\: \Gamma_\nu \: (X_\mu^\HH \cdot \Psi)_\alpha  \, |_{(R(x), y)} \linebr 
&= \   e^{i \xi}\:  \Gamma_\nu \:  \Gamma_\mu\:  \Psi_\alpha  \, |_{(R(x), y)}  \nonumber  \linebr
&= \  - \, (-1)^{\delta_{\mu \nu}}\, \Gamma_\mu \:  e^{i \xi}\:  \Gamma_\nu\:  \Psi_\alpha  \, |_{(R(x), y)}  \nonumber  \linebr
&= \    -\,  (-1)^{\delta_{\mu \nu}}\, \Gamma_\mu \:  (R \, \Psi)_\alpha  \, |_{(x, y)} \nonumber   \linebr
&= \  - \, (-1)^{\delta_{\mu \nu}}\, \big[ \tilde{X}_\mu^\HH \cdot (R \, \Psi) \big]_\alpha  \, |_{(x, y)} \nonumber   \linebr
&= \ -\,  [\, ( R_\ast\, X_\mu^\HH ) \cdot (R \, \Psi) \, ]_\alpha  \, |_{(x, y)}   \ . \nonumber 
\end{align}
This last step uses the definition \eqref{SecondTrivialization} of $\tilde{X}_\mu^\HH$. The calculation shows that property \eqref{ReflectionsCliffordMultiplications} in the proposition is valid for horizontal vector fields too, and hence is always true.

At this point, the only piece missing to finish the proof of proposition \ref{thm:ReflectionsSpinors} is the verification of property \eqref{ReflectionsCovariantDerivative} of spinor reflections. We will do this using the local formula \eqref{StandardSpinorCovariantDerivative} for the spinorial lift of the Levi-Civita connection. Taking the first term in that formula and using the local definition \eqref{LocalSpinorReflectionsAppendix} of reflections, we get
\begin{align}
\label{ReflectionsCovariantDerivativeStep1}
R \,(\partial_W \Psi_\alpha) \, |_{(x, y)} \ &= \  e^{i \xi}\: \Gamma_\nu \: (\partial_W  \Psi_\alpha) \, |_{(R(x), y)}  \linebr
&= \  e^{i \xi}\: \Gamma_\nu \: (R^\ast \partial_W  \Psi_\alpha) \, |_{(x, y)}   \nonumber  \linebr
&= \  \partial_W [\, e^{i \xi}\: \Gamma_\nu \: (R^\ast  \Psi_\alpha) \, ] \,  |_{(x, y)}   \nonumber  \linebr
&=  \ \partial_W [(R \,\Psi)_\alpha \, ] \,  |_{(x, y)}    \nonumber \ .
\end{align}
As for the second term in \eqref{StandardSpinorCovariantDerivative}, the previously established properties \eqref{ReflectionsFunctions} and \eqref{ReflectionsCliffordMultiplications} of spinor reflections guarantee that 
\beq
\label{ReflectionsCovariantDerivativeStep2}
R \,[ \, g(\nabla^g_W u_i, \, u_j) \ u_i \cdot u_j \cdot \Psi  \, ] \ = \ R^\ast [ \, g(\nabla^g_W u_i, \, u_j) \, ] \: (R_\ast u_i) \cdot (R_\ast u_j) \cdot R(\Psi) \ .
\eeq
Here $\{ u_j \}$ denotes an oriented, $g_P$-orthonormal, local trivialization of $TP$. But the spacetime reflection $R: (P , g_P) \rightarrow (P, R^\ast g_P)$ is an isometry of pseudo-Riemannian manifolds. So the naturality properties of the Levi-Civita connection say that the pullback connection $R^\ast (\nabla^{R^\ast g})$ coincides with $\nabla^{R^\ast R^\ast g} =  \nabla^g$. See \cite[prop. 5.13]{Lee}. Therefore, the usual properties of pullback connections imply that, for all vector fields $V$ and $W$ on P,
\beq
R_\ast (\nabla^g_W V)  \ = \ \nabla_{R_\ast W}^{R^\ast g} (R_\ast V) \ .
\eeq
See \cite[ lem. 4.37]{Lee}, for example. Thus we get
\beq
R^\ast [\, g(\nabla^g_W V, \, U) \, ]  \ = \ (R^\ast g) (R_\ast \nabla^g_W V , \, R_\ast U ) \ = \   (R^\ast g) \big(\nabla_{R_\ast W}^{R^\ast g} (R_\ast V) , \, R_\ast U \big) \ .
\eeq
Applying this identity to \eqref{ReflectionsCovariantDerivativeStep2}, we get
\beq
\label{ReflectionsCovariantDerivativeStep3}
R \,[ \, g(\nabla^g_W u_i, \, u_j) \ u_i \cdot u_j \cdot \Psi  \, ] \ = \  \big[  (R^\ast g) \big(\nabla_{R_\ast W}^{R^\ast g} (R_\ast u_i) , \, R_\ast u_j \big) \big] \: (R_\ast u_i) \cdot (R_\ast u_j) \cdot R(\Psi) \ .
\eeq
Now, $\{ R_\ast u_j \}$ is a  $R^\ast g_P$-orthonormal, local trivialization of $TP$. However, it is not an oriented trivialization, as reflections invert the orientation. But the right-hand side of \eqref{ReflectionsCovariantDerivativeStep3} is invariant under a change of sign of any of the vectors $R_\ast u_j$, which would make this an oriented, $R^\ast g_P$-orthonormal trivialization. So the right-hand side of \eqref{ReflectionsCovariantDerivativeStep3} coincides with the second term of the covariant derivative $\nabla^{\RR^\ast g}_{R_\ast W} (R \Psi)$ as it would appear in an application of formula \eqref{StandardSpinorCovariantDerivative}. Combining this fact with \eqref{ReflectionsCovariantDerivativeStep1} and using again \eqref{StandardSpinorCovariantDerivative}, we finally conclude that $R (\nabla^{g}_{W} \Psi)\, = \, \nabla^{\RR^\ast g}_{R_\ast W} (R \Psi)$.
\end{proof}

\vspace{.3cm}

\begin{proof}[\bf Proof of proposition \ref{thm:PropertiesReflectionsSpinors}] Property \eqref{ReflectionsDiracOperator} in the proposition is a direct consequence of the application of identities \eqref{ReflectionsFunctions} and \eqref{ReflectionsCliffordMultiplications} to the definition of the Dirac operator. Property \eqref{ReflectionsChirality} can be obtained from identity \eqref{ReflectionsCliffordMultiplications} through the calculation
\begin{align}
R\, (\Gamma_P \Psi) \ &= \  i^{ \lfloor \frac{m+k -1}{2} \rfloor  } \, R\,(u_0 \cdots u_{m+k-1} \cdot \Psi)  \linebr
&= \  i^{ \lfloor \frac{m+k -1}{2} \rfloor  } \,  (-1)^{m+k} \, (R\,u_0) \cdots (R\, u_{m+k-1}) \cdot (R\, \Psi) \nonumber   \linebr
&= \   (-1)^{k+1} \,   i^{ \lfloor \frac{m+k -1}{2} \rfloor  } \, (- R\,u_0) \cdot (R\, u_1) \cdots (R\, u_{m+k-1}) \cdot (R\, \Psi) \nonumber   \linebr
&= \ (-1)^{k+1} \, \Gamma_P (R\, \Psi)  \nonumber \ .
\end{align}
Here we have used the definition \eqref{DefinitionChiralOperator} of the chirality operator and noted that, if $\{u_j \}$ is a local, oriented, orthonormal trivialization of $TP$ with respect to the metric $g_P$, then $\{- R_\ast u_0, \, R_\ast u_1, \ldots ,  R_\ast u_{m+k -1} \}$ is an oriented, orthonormal trivialization with respect to $R^\ast g_P$. We have also used that the dimension $m$ of $M$ is even by assumption. Finally, property \eqref{ReflectionsInnerProduct} follows directly from the local definition \eqref{LocalSpinorReflectionsAppendix} of spinor reflection, the definition \eqref{GeneralInnerProduct} of the inner product of spinors, the hermiticity properties of the gamma matrices, and the observation that, in our conventions \eqref{ConventionsGammaMatrices1}, we have 
\[
\Gamma_\nu\,  \Gamma_\nu \; =\;  -\,  g_P(X^\HH_\nu, X^\HH_\nu)  \; =\;  - \, (g_M)_{\nu\nu} \ .      \qedhere   
\]  
\end{proof}

\section{Proofs for section \ref{RepresentationSpaces}}
\label{ProofsRepresentationSpaces}

\begin{proof}[\bf Proof of proposition \ref{thm: DefinitionUnitaryRep}]
Using the definition of $\rho$ and property \eqref{NewDerivativeDerivation} of $\sL_V$, we have
\begin{multline}
\rho_U \rho_V (\psi) \ =  \ \sL_U \sL_V \psi  \; + \;  \frac{1}{2} \, \big[ \, (\divergence_g U) \, \sL_V \psi \,+\,  (\divergence_g V)\,  \sL_U \psi \, + \, [\ \Lie_U (\divergence_g V) \,] \, \psi \, \big]  \\ +\;  \frac{1}{4} (\divergence_g U) (\divergence_g V) \psi  \ .
\end{multline}
Cancelling the terms that are symmetric in $U$ and $V$, 
\beq
(\rho_U \rho_V  - \rho_V \rho_U )\psi  \ =  \ (\sL_U \sL_V  - \sL_V \sL_U )\psi \;  + \;  \frac{1}{2}  \big[\, \Lie_U (\divergence_g V)  -  \Lie_V (\divergence_g U)\, \big] \psi \ .
\eeq
From the definition of divergence, $\Lie_V \vol_g = (\divergence_g V) \vol_g$, we also have that
\begin{align}
(\divergence_g  [U,V])  \, \vol_g\  &= \ \Lie_{[U,V]} \, \vol_g = (\Lie_U \Lie_V  - \Lie_V \Lie_U ) \, \vol_g \linebr
&= \  \big[\Lie_U (\divergence_g V)  -  \Lie_V (\divergence_g U) \big]  \, \vol_g  \ .
\end{align}
Thus, using property \eqref{NewDerivativeClosure} of $\sL_V$, we get
\beq
(\rho_U \rho_V  - \rho_V \rho_U ) \, \psi  \ =  \ \sL_{[U, V]} \psi \;  + \;  \frac{1}{2}\,  (\divergence_g  [U,V])   \, \psi \; = \; \rho_{ [U,V]} \,\psi \ .
\eeq
This shows that indeed $\rho$ defines a Lie algebra representation. Finally, using property \eqref{NewDerivativeInnerProduct}, together with the definition of divergence and Stokes' theorem,  
\begin{align}
\int_K \big[  \langle  \rho_V \psi_1,  \psi_2 \rangle  +  \langle \psi_1,  \rho_V  \psi_2 \rangle \big] \vol _g \, &=  \: \int_K \big[   \langle \sL_V  \psi_1,  \psi_2 \rangle +   \langle  \psi_1,  \sL_V \psi_2 \rangle +  \langle \psi_1, \, \psi_2 \rangle  (\divergence_g V)  \big] \vol_g  \nonumber \linebr
& =  \: \int_K \big[ \Lie_V  \langle \psi_1,  \psi_2 \rangle \big] \vol_g +  \langle \psi_1, \, \psi_2 \rangle \, ( \Lie_V \vol_g )  \nonumber \linebr
& =  \: \int_K   \Lie_V \big[ \langle \psi_1,  \psi_2 \rangle \vol_g \big] \: = \: 0 \ .
\end{align}
So the representation is unitary with respect to the $L^2$ Hermitian product.
\end{proof}

\vspace{.3cm}

\begin{proof}[\bf Proof of proposition \ref{thm: DecompositionRepresentationSpaces1}]
Take the averaged metric $\hg$ defined in \eqref{DefinitionAverageMetric} and consider the Dirac operator $\sD^{\thg}$ on the spinor bundle $S_\thg$. We have the $\sD^{\thg}$-eigenspace decomposition
\beq
\label{DecompositionDiracEigenspaces2}
L^2 (S_\thg)  \  = \  \bigoplus_{m \in \mathbb{Z}}  \hat{E}_{m}   \ .
\eeq
By construction, all the vector fields in $\mathfrak{g}$ are Killing with respect to $\hg$. So the Kosmann-Lichnerowicz derivatives $\hcL_V$ commute with $\sD^{\thg}$ for all $V$ in $\mathfrak{g}$. Hence, they preserve all the finite-dimensional eigenspaces $\hat{E}_{m}$. By \eqref{Closure2}, the derivatives $\hcL_V$ define a representation of the compact Lie algebra $\mathfrak{g}$ on $\hat{E}_{m}$, so we have the decomposition
\beq
\label{DecompositionRepresentationSpaces3}
\hat{E}_{m}  \  = \  \bigoplus_\pi \, n_{m, \pi} \, \hat{E}_{m, \pi} \ ,
\eeq
where only a finite number of the integers $n_{m, \pi} \geq 0$ are non-zero as $\pi$ varies. Consider now the linear, invertible automorphism $\alpha: S_g \rightarrow S_\thg$. It induces an invertible linear map $L^2(S_g) \rightarrow L^2 (S_\thg)$. This map does not preserve the $L^2$-metrics, because \eqref{AlphaIsometry} implies that
\beq
\int_K   \langle\, \alpha(\psi_1), \, \alpha(\psi_2)\,  \rangle \, \vol_{\thg} \;= \;  \int_K   \langle \psi_1, \, \psi_2 \rangle \, (\det \alpha) \, \vol_{g} \ ,
\eeq
and in general $\det \alpha$ is not 1. However, it is clear that the corrected map 
\beq
\label{DefinitionCorrectedMap}
\tilde{\alpha} \ := \ (\det \alpha)^{-1/2} \, \alpha 
\eeq
does preserve the $L^2$ inner products. Using the corrected map to transport $\hat{\cL}_V$, a direct calculation shows that
\beq
\label{TransportDerivativesAux1}
\tilde{\alpha}^{-1} \circ \hat{\cL}_V \circ \tilde{\alpha} \ = \  \alpha^{-1} \circ \hat{\cL}_V \circ \alpha \; -\; \frac{1}{2}\, (\det \alpha)^{-1} \big[ \Lie_V (\det \alpha) \big] 
\eeq
But $V$ is Killing with respect to $\hg$, by construction, so
\beq
0 \: = \: \Lie_V \vol_{\thg}  \: = \: \Lie_V [ ( \det \alpha) \, \vol_{g} ]  \: = \:  [ \,   \Lie_V  ( \det \alpha)  +  ( \det \alpha) (\divergence_{g} V)\,  ] \, \vol_g \ .
\eeq
This gives us a formula for $\divergence_{g} V$ in terms of $\det \alpha$ that we can use in \eqref{TransportDerivativesAux1}. Using also definition \eqref{TransportedDerivativeDefinition}, we obtain
\beq
\label{TransportDerivativesAux2}
\tilde{\alpha}^{-1} \circ \hat{\cL}_V \circ \tilde{\alpha} \ = \  \sL_V \; +\; \frac{1}{2}\, \divergence_{g} V  \ =: \ \rho_V
\eeq
as operators on spinors in $S_g$. In other words, the $\mathfrak{g}$-representation on the $\sD^{\thg}$-eigenspaces $\hat{E}_m$ determined by the derivatives $\hat{\cL}_V$ is transported by the isometry of Hilbert spaces $\tilde{\alpha}^{-1}: L^2(S_\thg) \rightarrow L^2(S_{g})$ to the representation $\rho$ on the image spaces $ \tilde{\alpha}^{-1}(\hat{E}_m)$. Thus, if we simply define $W_{m, \pi} := \tilde{\alpha}^{-1}(\hat{E}_{m, \pi})$, the isometry $\tilde{\alpha}^{-1}$ will take the orthogonal decompositions \eqref{DecompositionDiracEigenspaces2} and \eqref{DecompositionRepresentationSpaces3} to the orthogonal decomposition
\beq
\label{DecompositionRepresentationSpaces5}
L^2 (S_g)  \  = \   \bigoplus_{m \in \mathbb{Z}}  \bigoplus_\pi \, n_{m, \pi} \, W_{m, \pi} \ ,
\eeq
with the restriction of $\rho$ to the subspace $W_{m, \pi}$ being equivalent to the irreducible $\mathfrak{g}$-representation $\pi$. Since the Lie group $G$ is compact connected, the $\mathfrak{g}$-representations on the subspaces $W_{m, \pi}$ can be exponentiated, leading to a $G$-representation on the space of spinors on $S_g$ which we also call $\rho$.

Finally, to conclude the proof of proposition \ref{thm: DecompositionRepresentationSpaces1}, it is enough to observe that when $G$ acts isometrically on $(K,g)$, then $\hg = g$, the automorphisms $\alpha = \tilde{\alpha} = I$ are trivial, and $W_{m, \pi} =  \hat{E}_{m, \pi} \subseteq  \hat{E}_m = E_m$.
\end{proof}

\vspace{.3cm}

\begin{proof}[\bf Proof of proposition \ref{thm: DecompositionRepresentationSpaces2}]
Suppose that $K$ is even-dimensional. Recall the definitions in the previous proof and consider again the $\mathfrak{g}$-representation on the finite-dimensional $\sD^{\thg}$-eigenspaces $\hat{E}_m$ determined by the derivatives $\hcL_V$.
The chirality operator $\Gamma_K$ satisfies $\Gamma_K \Gamma_K = 1$ and anti-commutes with the Dirac operator. So $\Gamma_K$ commutes with the square $(\sD^{\thg})^2$ and preserves its eigenspaces $\hat{E}_0$ and $\hat{E}_m \oplus \hat{E}_{-m}$, for all $m>0$. This allows us to decompose the $(\sD^{\thg})^2$-eigenspaces  as a sum of chiral subspaces 
\beq
\hat{E}_0 \; =\;  \hat{E}_0^+ \oplus \hat{E}_0^-   \qquad  \qquad  \quad  \hat{E}_m \oplus \hat{E}_{-m} \; =\; \hat{E}_m^+ \oplus \hat{E}_m^-
\eeq 
for $m>0$. Now, the chirality operator $\Gamma_K$ always commutes with Kosmann-Lichnerowicz derivatives. So the $\mathfrak{g}$-representation determined by the $\hcL_V$ preserves the chiral subspaces $\hat{E}_m^\pm$. Thus, we can decompose these subspaces as a sum of irreducible representations,
\beq
\label{DecompositionRepresentationSpaces6}
\hat{E}_m^\pm \ = \  \bigoplus_\pi \,  n_{m, \pi} \,  \hat{E}_{m, \pi}^\pm  \qquad {\rm for\ all} \ \  m \geq 0 \ .
\eeq
As discussed in section 5.2 of \cite{Bap3}, the integers $n_{m, \pi}$ are the same for both decompositions $\pm$. For $m >0$, this happens because the Dirac operator determines a $\mathfrak{g}$-equivariant isomorphism between $\hat{E}_m^+$ and $\hat{E}_m^-$. For $m=0$, the proof uses an index result of Atiyah and Hirzebruch \cite{Witten83, AH}. Thus, all in all, we get the orthogonal decomposition
\beq
\label{DecompositionRepresentationSpaces4}
L^2 (S_\thg)  \  = \   \bigoplus_{m \in \mathbb{N}_0}  \bigoplus_\pi \, n_{m, \pi} \, (\hat{E}^+_{m, \pi} \oplus \hat{E}^-_{m, \pi}) \ .
\eeq
This decomposition can be transported to the spinor bundle $S_g$ through the isometry $\tilde{\alpha}^{-1}: L^2(S_\thg) \rightarrow L^2(S_{g})$ defined in the proof of proposition \ref{thm: DecompositionRepresentationSpaces1}, more precisely in \eqref{DefinitionCorrectedMap}. Due to \eqref{AlphaCommutesChiral}, the maps $\tilde{\alpha}$ and $\tilde{\alpha}^{-1}$ commute with the chirality operators, and hence preserve the chiral subspaces. So if we define the subspaces of Weyl spinors $W_{m, \pi}^\pm := \tilde{\alpha}^{-1}(\hat{E}_{m, \pi}^\pm)$ inside $\Gamma (S_g)$, we get the orthogonal decomposition \eqref{DecompositionRepresentationSpaces2} stated in the proposition.

Finally, when $G$ acts isometrically on $(K,g)$, then $\hg = g$, the derivatives $\cL_V$ commute with $\sD^g$ for all $V$ in $\mathfrak{g}$, and the automorphisms $\alpha = \tilde{\alpha} = I$ are trivial. In this case, we have the identity $W_{m, \pi}^\pm =  \hat{E}_{m, \pi}^\pm =   E_{m, \pi}^\pm$. For $m>0$, the space $W_{m, \pi}^+ \oplus W_{m, \pi}^-$ is the same as the $(\sD^g)^2$-eigenspace  $E_{m, \pi}^+ \oplus E_{m, \pi}^-$, which, in turn, can be decomposed according to the $\sD^g$-eigenvalues as $E_{m, \pi} \oplus E_{-m, \pi}$.
\end{proof}

\vspace{.3cm}

\begin{proof}[\bf Proof of remark \ref{thm: RemarkComplexRepresentations1}]
The only assertions that remain to be proved concern the multiplicities of the representations. To this end, observe that, when the dimension of $K$ is not 1 (mod 4), there exist conjugation maps $j_+$ of spinors on $S_g$ and $S_{\hg}$. This follows from the construction of the set \eqref{DefinitionSetConjMaps} in proposition \ref{thm: ConjLinearAutomorphisms}. These maps commute with the Dirac operator and Kosmann-Lichnerowicz derivatives on the respective bundles, as follows from \eqref{CommutatorConjDiracOperator} and \eqref{CommutatorConjKLDerivative}. So $j_+$ preserves the $\sD^{\thg}$-eigenspaces $\hat{E}_{m}$ in decomposition \eqref{DecompositionRepresentationSpaces3} and is equivariant with respect to the $\mathfrak{g}$-action on them. This implies that the $\mathfrak{g}$-action on $\hat{E}_{m}$ is self-conjugate. In particular, the irreducible $\mathfrak{g}$-spaces of complex type inside $\hat{E}_{m}$ will always appear in conjugate pairs. Thus, we have $n_{m, \pi} = n_{m, \conj{\pi}}$ in decomposition \eqref{DecompositionRepresentationSpaces3}. Since the multiplicities $n_{m, \pi}$ are the same in decompositions \eqref{DecompositionRepresentationSpaces3} and \eqref{DecompositionRepresentationSpaces1}, by construction of the latter, we conclude that $n_{m, \pi} = n_{m, \conj{\pi}}$ also holds in \eqref{DecompositionRepresentationSpaces1}.

When $K$ has dimension 0, 2, 7 (mod 8), the conjugation $j_+$ defines a real structure on $\hat{E}_{m}$, as follows from \eqref{SquareConjLinearAutomorphisms}. So the quaternionic summands in decomposition \eqref{DecompositionRepresentationSpaces3} must appear with even multiplicity (see \cite[ch. II, sec. 6, exercise 10]{BD}). When $K$ has dimension 3, 4, 6 (mod 8), the conjugation $j_+$ defines a quaternionic structure on $\hat{E}_{m}$. So the real summands in \eqref{DecompositionRepresentationSpaces3} must appear with even multiplicity. Once again, these properties of the multiplicities in decomposition \eqref{DecompositionRepresentationSpaces3} are directly transferable to decomposition \eqref{DecompositionRepresentationSpaces1}, which finishes the proof. 
\end{proof}

\vspace{.3cm}

\begin{proof}[\bf Proof of remark \ref{thm: RemarkComplexRepresentations2}]
This remark only deals with even-dimensional $K$. 
From the proof of proposition \ref{thm: DecompositionRepresentationSpaces2}, we know that the $\mathfrak{g}$-action preserves the $\sD^{\thg}$-eigenspaces $\hat{E}_{m}$ and the chiral spaces $\hat{E}^\pm_{m}$ in \eqref{DecompositionRepresentationSpaces6}. From the proof of remark \ref{thm: RemarkComplexRepresentations1}, we know that $j_+$ is $\mathfrak{g}$-equivariant and preserves all the eigenspaces $\hat{E}_{m}$. 

Thus, when $K$ has dimension 0 (mod 4), it follows from the commutation relation  \eqref{CommutatorConjChiralityOperator} in Riemannian signature that the conjugation map $j_+$ also preserves the chiral spaces $\hat{E}^\pm_{m}$. This implies that the $\mathfrak{g}$-representation on $\hat{E}^\pm_{m}$ is self-conjugate. In particular, the irreducible $\mathfrak{g}$-spaces of complex type inside $\hat{E}^+_{m}$ will always appear in conjugate pairs. Thus, $n_{m, \pi} = n_{m, \conj{\pi}}$ in decompositions \eqref{DecompositionRepresentationSpaces6} and \eqref{DecompositionRepresentationSpaces4}. Moreover, when $K$ has dimension 0 (mod 8), the conjugation $j_+$ defines real structures $\hat{E}^\pm_{m}$, as follows from \eqref{SquareConjLinearAutomorphisms}. So the quaternionic summands in decomposition \eqref{DecompositionRepresentationSpaces6} must appear with even multiplicity (see \cite[ch. II, sec. 6, exercise 10]{BD}). When $K$ has dimension 4 (mod 8), the conjugation $j_+$ defines quaternionic structures on $\hat{E}^\pm_{m}$. So the real summands in \eqref{DecompositionRepresentationSpaces6} must appear with even multiplicity. Since the properties of the multiplicities in decomposition \eqref{DecompositionRepresentationSpaces4} are directly transferable to decomposition \eqref{DecompositionRepresentationSpaces2}, as before, this finishes the proof of all the assertions about dimension 0 (mod 4).

When $K$ has dimension 2 (mod 4), we know that $j_+$ preserves the eigenspace $\hat{E}_{0}$ and the sums $\hat{E}_{m} \oplus \hat{E}_{-m} = \hat{E}^+_{m} \oplus \hat{E}^-_{m}$ for $m>0$. Since $j_+$ is $\mathfrak{g}$-equivariant, it follows that the $\mathfrak{g}$-action on these spaces must be self-conjugate. So the irreducible $\mathfrak{g}$-spaces of complex type inside $\hat{E}_0$ and $\hat{E}^+_{m} \oplus \hat{E}^-_{m}$ will always appear in conjugate pairs. This is possible only if $n_{m, \pi} = n_{m, \conj{\pi}}$ in decomposition \eqref{DecompositionRepresentationSpaces4}. Again, this equality of multiplicities is directly transferable to decomposition \eqref{DecompositionRepresentationSpaces2}, as desired.
\end{proof}

\vspace{.3cm}

\begin{proof}[\bf Proof of lemma \ref{SpecialBasesInternalSpinors}]
Suppose that $K$ has odd dimension $k$. Then only one of the conjugation maps exists, as follows from \eqref{DefinitionSetConjMaps}. When $k = 1$ (mod 4), it is $j_+$. This map commutes with the $G$-action on spinors and preserves the representation spaces $W_m =  \bigoplus_\pi \, n_{m, \pi} \, W_{m, \pi}$. Since the $W_{m, \pi}$ are irreducible, the $G$-invariant conjugation will preserve some of them and will map isomorphically pairs of others. The preserved ones must correspond to real or quaternionic representations $\pi$. In the real case, $W_{m, \pi}$ has a complex basis consisting of $j_+$-invariant vectors. In the quaternionic case, $W_{m, \pi}$ has a complex basis consisting of pairs of vectors that are swapped by $j_+$, up to a sign. In the case of spaces $W_{m, \pi}$ that are not preserved by $j_+$, but are instead mapped isomorphically in pairs, it is enough to choose a complex basis of the first space in the pair, and then apply $j_+$ to define a basis of the second space in the pair. Uniting the bases in all these cases, we get a basis of $W_m$ with the required properties.
	
When $K$ has dimension 3 (mod 4), only $j_-$ exists. This conjugation commutes with the $G$-action and maps the representation spaces $W_m$ and $W_{-m}$ isomorphically. For $m >0$, take a complex basis $\{\psi_{m}^\alpha \}$ of $W_m$ formed by spinors in the irreducible subspaces $W_{m, \pi}$. Then the set $\{\psi_{m}^\alpha ,  \; j_- (\psi_{m}^\alpha) \}$ forms a basis of $W_m \oplus W_{-m}$ with the required properties. For $m=0$, we know that $j_-$ is $G$-equivariant and preserves $W_0$. So an argument similar to the one used in the case of dimension 1 (mod 4) guarantees the existence of the required basis also for $W_0$.
	
Now suppose that $K$ has even dimension. The two conjugations exist simultaneously and are $G$-equivariant. When $K$ has dimension 2 (mod 4), the two conjugations map $W_m^+$ isomorphically to $W_m^-$. Pick a basis $\{(\psi^+_{m})^\alpha \}$ of $W_m^+$ formed by spinors in the irreducible subspaces $W_{m, \pi}^+$. Then the set $\{(\psi_{m}^+)^\alpha ,  \; j_+ (\psi_{m}^+)^\alpha \}$ forms a basis of $W_m^+ \oplus W_{m}^-$ whose elements simply get swapped by $j_+$, up to a phase. Since these basis elements are chiral and $j_- = \zeta \,\Gamma_K \, j_+$, according to \eqref{RelationTwoConjugations}, we recognize that $j_-$ also acts on them by simple swaps of pairs and multiplication by phases, as desired.
	
Finally, when $k = 0$ (mod 4), both commute with the chirality operator $\Gamma_K$ and preserve the spaces $W_m^\pm =  \bigoplus_\pi \, n_{m, \pi} \, W_{m, \pi}^\pm$.  So an argument similar to the one used in the case of dimension 1 (mod 4) guarantees the existence of bases of $W_m^+$ and $W_m^-$ whose elements are in the respective irreducible spaces and are either invariant or get swapped in pairs by $j_+$. Since these bases are made of chiral elements, it follows again from the identity $j_- = \zeta \,\Gamma_K \, j_+$ that also $j_-$ acts on them as desired.
\end{proof}

\section{An example with $(\SU (3) \times \SU (2) \times \mathrm{U}(1) ) / \mathbb{Z}_6$ symmetries}
\label{Example}

The approach to Kaluza-Klein models followed in this paper is quite general. We have allowed the internal space $K$ to be an arbitrary compact spin manifold, only sometimes restricting it to even dimensions. While the properties established in this manner become robust and transverse, it is also interesting to delve into explicit models, with particular choices of internal geometry. This is indispensable from a physical viewpoint, as one tries to find the specific geometries that, after dimensional reduction, most closely resemble the properties of the Standard Model. Studying explicit examples, however, is no easy task in spin geometry. There are very few examples of compact manifolds in which the Dirac operator is explicitly understood, with its spectrum of eigenvalues, eigenspaces, and spinor representations of the isometry group \cite{Ginoux}. The purpose of this appendix is to point out a particular geometry that may be interesting to further investigate. The motivation comes from a speculative physical scenario outlined in \cite{Bap1}.

Let $K$ be the 8-dimensional group manifold $\SU (3)$ equipped with its bi-invariant metric, which is unique up to a scale factor. This metric is Einstein with positive curvature. The orientation-preserving isometry group is $(\SU(3) \times \SU(3))/\mathbb{Z}_3$. As described in section \ref{RepresentationSpaces}, the action of this group on $K$ can always be lifted to an action of $\SU(3) \times \SU(3)$ on the spinors over $K$. %This lift has a $\mathbb{Z}_3$ of spinor transformations that either act trivially, or else relate different generations of 4D fermions after dimensional reduction. Please see the end of section \ref{RepresentationSpaces}.

The bi-invariant metric on $\SU(3)$ is an unstable saddle point of the Einstein-Hilbert action under TT-deformations of the metric, not a local maximum. Most TT-deformations of the bi-invariant metric reduce its scalar curvature, but a particular deformation found in \cite{Jensen} takes the scalar curvature up to positive infinity. Since $- R_{g_\KKK}$ plays the role of a potential for the internal metric in KK models, the initial bi-invariant metric should unravel along that unstable direction. When this happens, the isometry group of the metric is broken down to $ G_{\mathrm{SM}}  = (\SU(3) \times \SU(2) \times \mathrm{U}(1))/\mathbb{Z}_6$, as pointed out in \cite{Bap1}. 
%This is of course the gauge group of the Standard Model. 
The broken symmetries cause 4 gauge bosons to gain a mass at the Planck scale.

The unraveling of the bi-invariant metric will continue indefinitely, leading to an internal space of infinite curvature, unless the Einstein-Hilbert Lagrangian is not the full story and there are higher-order corrections to the action on $M\times K$. If that is the case, the correction terms can become relevant as the curvature of $K$ grows, and can contribute to stabilize the unraveling at a scale distinct from the Planck one. The stabilized metric would define the KK internal vacuum. Due to the correction terms, its isometry group can be further broken from $G_{\mathrm{SM}}$ down to a smaller subgroup. This second symmetry breaking occurs at the mass scale of the correction terms (presumably the electroweak scale), not at the Planck scale. So it could produce light gauge bosons, besides the already present 4 heavy ones.

This story is a motivation to study the Dirac operator on $\SU (3)$ and the gauge representations \eqref{DefinitionUnitaryRep} induced on its spinors. First, using the simpler bi-invariant metric. That would be already enough to get an idea of the spinor gauge representations, since decompositions \eqref{DecompositionRepresentationSpaces1} and  \eqref{DecompositionRepresentationSpaces2} into irreducibles preserve the structure of decompositions \eqref{DecompositionRepresentationSpaces3} and \eqref{DecompositionRepresentationSpaces4} for the average metric. (Here, the more symmetric average metric is the bi-invariant one.) The next step would be to understand the eigenspaces and eigenvalues of the Dirac operator on the deformed metric with $(\SU(3) \times \SU(2) \times \mathrm{U}(1))/\mathbb{Z}_6$ isometry group. This metric is written down explicitly in \cite{Bap1}. The metric deformation can lift degeneracies and make some of the initial Dirac eigenvalues (i.e. masses of 4D fermions) much larger, and others smaller. With a $\SU(3) \times \SU(3)$ spinor representation on a manifold with the smaller $G_{\mathrm{SM}}$ isometry group, the relation between the representation spaces and the $\sD^K$-eigenspaces becomes non-trivial. Chiral interactions between the massive gauge fields and the 4D fermions should now appear, as described in \cite{Bap3}.

The third step would be to study the Dirac operator on $\SU(3)$ equipped with a perturbed left-invariant metric that breaks the isometry group from $G_{\mathrm{SM}}$ to $\SU(3) \times \mathrm{U}(1)$. This is the perturbation that would presumably come from the higher-order corrections to the Einstein-Hilbert Lagrangian, as described in the speculative scenario above. Explicit metrics on $\SU(3)$ with this smaller isometry group and a Higgs-like parameter with values in $\mathbb{C}^2$ are written down in \cite{Bap0}. See also the broader discussion in \cite{Coquereaux}. With those metrics, 3 more gauge bosons gain a non-zero mass, now at the perturbation scale.

It would also be interesting to investigate in detail the branching of the degenerate eigenvalues of $\sD$ during the two stages of symmetry breaking of the metric on $K= \SU(3)$. As described at the end of section \ref{RepresentationSpaces}, these branchings can lead to the emergence of different generations of 4D fermions in a KK model.

\newpage

%%%%%%%%%%%%%%%%%%%%%%%%%%%%%%%%%%%%%%%%%%%%%%%%%%%%%%%%%%%%%%

\renewcommand{\baselinestretch}{1.2}\normalsize

\phantomsection       % Creates the hyperref anchor in the table of contents

\addcontentsline{toc}{section}{References}

\vspace{1cm}

\end{document}